

\begin{filecontents}{tcilatex.tex}

\typeout{TCILATEX Macros for Scientific Word 2.5 <22 Dec 95>.}
\typeout{NOTICE:  This macro file is NOT proprietary and may be
freely copied and distributed.}
\makeatletter
%
\newcount\@hour\newcount\@minute\chardef\@x10\chardef\@xv60
\def\tcitime{
\def\@time{%
  \@minute\time\@hour\@minute\divide\@hour\@xv
  \ifnum\@hour<\@x 0\fi\the\@hour:%
  \multiply\@hour\@xv\advance\@minute-\@hour
  \ifnum\@minute<\@x 0\fi\the\@minute
  }}%

\@ifundefined{hyperref}{}{}

\@ifundefined{qExtProgCall}{\def\qExtProgCall#1#2#3#4#5#6{\relax}}{}
%
%
%
%
\def\QCTOpt[#1]#2{%
  \def\QCTOptB{#1}
  \def\QCTOptA{#2}
}
\def\QCTNOpt#1{%
  \def\QCTOptA{#1}
  \let\QCTOptB\empty
}
\def\Qct{%
  \@ifnextchar[{%
    \QCTOpt}{\QCTNOpt}
}
\def\QCBOpt[#1]#2{%
  \def\QCBOptB{#1}
  \def\QCBOptA{#2}
}
\def\QCBNOpt#1{%
  \def\QCBOptA{#1}
  \let\QCBOptB\empty
}
\def\Qcb{%
  \@ifnextchar[{%
    \QCBOpt}{\QCBNOpt}
}
\def\PrepCapArgs{%
  \ifx\QCBOptA\empty
    \ifx\QCTOptA\empty
      {}%
    \else
      \ifx\QCTOptB\empty
        {\QCTOptA}%
      \else
        [\QCTOptB]{\QCTOptA}%
      \fi
    \fi
  \else
    \ifx\QCBOptA\empty
      {}%
    \else
      \ifx\QCBOptB\empty
        {\QCBOptA}%
      \else
        [\QCBOptB]{\QCBOptA}%
      \fi
    \fi
  \fi
}
\newcount\GRAPHICSTYPE
\GRAPHICSTYPE=\z@
\def\GRAPHICSPS#1{%
 \ifcase\GRAPHICSTYPE
   \special{ps: #1}%
 \or
   \special{language "PS", include "#1"}%
 \fi
}%
%
%
%
\def\graffile#1#2#3#4{%
    \leavevmode
    \raise -#4 \BOXTHEFRAME{%
        \hbox to #2{\raise #3\hbox to #2{\null #1\hfil}}}%
}%
%
\def\draftbox#1#2#3#4{%
 \leavevmode\raise -#4 \hbox{%
  \frame{\rlap{\protect\tiny #1}\hbox to #2%
   {\vrule height#3 width\z@ depth\z@\hfil}%
  }%
 }%
}%
\newcount\draft
\draft=\z@

\newif\ifwasdraft
\wasdraftfalse

\def\GRAPHIC#1#2#3#4#5{%
 \ifnum\draft=\@ne\draftbox{#2}{#3}{#4}{#5}%
  \else\graffile{#1}{#3}{#4}{#5}%
  \fi
 }%
\def\addtoLaTeXparams#1{%
    \edef\LaTeXparams{\LaTeXparams #1}}%
%

\newif\ifBoxFrame \BoxFramefalse
\newif\ifOverFrame \OverFramefalse
\newif\ifUnderFrame \UnderFramefalse

\def\BOXTHEFRAME#1{%
   \hbox{%
      \ifBoxFrame
         \frame{#1}%
      \else
         {#1}%
      \fi
   }%
}

\def\doFRAMEparams#1{\BoxFramefalse\OverFramefalse\UnderFramefalse\readFRAMEparams#1\end}%
\def\readFRAMEparams#1{%
 \ifx#1\end%
  \let\next=\relax
  \else
  \ifx#1i\dispkind=\z@\fi
  \ifx#1d\dispkind=\@ne\fi
  \ifx#1f\dispkind=\tw@\fi
  \ifx#1t\addtoLaTeXparams{t}\fi
  \ifx#1b\addtoLaTeXparams{b}\fi
  \ifx#1p\addtoLaTeXparams{p}\fi
  \ifx#1h\addtoLaTeXparams{h}\fi
  \ifx#1X\BoxFrametrue\fi
  \ifx#1O\OverFrametrue\fi
  \ifx#1U\UnderFrametrue\fi
  \ifx#1w
    \ifnum\draft=1\wasdrafttrue\else\wasdraftfalse\fi
    \draft=\@ne
  \fi
  \let\next=\readFRAMEparams
  \fi
 \next
 }%
%

\def\IFRAME#1#2#3#4#5#6{%
      \bgroup
      \let\QCTOptA\empty
      \let\QCTOptB\empty
      \let\QCBOptA\empty
      \let\QCBOptB\empty
      #6%
      \parindent=0pt%
      \leftskip=0pt
      \rightskip=0pt
      \setbox0 = \hbox{\QCBOptA}%
      \@tempdima = #1\relax
      \ifOverFrame
          \typeout{This is not implemented yet}%
          \show\HELP
      \else
         \ifdim\wd0>\@tempdima
            \advance\@tempdima by \@tempdima
            \ifdim\wd0 >\@tempdima
               \textwidth=\@tempdima
               \setbox1 =\vbox{%
                  \noindent\hbox to \@tempdima{\hfill\GRAPHIC{#5}{#4}{#1}{#2}{#3}\hfill}\\%
                  \noindent\hbox to \@tempdima{\parbox[b]{\@tempdima}{\QCBOptA}}%
               }%
               \wd1=\@tempdima
            \else
               \textwidth=\wd0
               \setbox1 =\vbox{%
                 \noindent\hbox to \wd0{\hfill\GRAPHIC{#5}{#4}{#1}{#2}{#3}\hfill}\\%
                 \noindent\hbox{\QCBOptA}%
               }%
               \wd1=\wd0
            \fi
         \else
            \ifdim\wd0>0pt
              \hsize=\@tempdima
              \setbox1 =\vbox{%
                \unskip\GRAPHIC{#5}{#4}{#1}{#2}{0pt}%
                \break
                \unskip\hbox to \@tempdima{\hfill \QCBOptA\hfill}%
              }%
              \wd1=\@tempdima
           \else
              \hsize=\@tempdima
              \setbox1 =\vbox{%
                \unskip\GRAPHIC{#5}{#4}{#1}{#2}{0pt}%
              }%
              \wd1=\@tempdima
           \fi
         \fi
         \@tempdimb=\ht1
         \advance\@tempdimb by \dp1
         \advance\@tempdimb by -#2%
         \advance\@tempdimb by #3%
         \leavevmode
         \raise -\@tempdimb \hbox{\box1}%
      \fi
      \egroup%
}%
%
\def\DFRAME#1#2#3#4#5{%
 \begin{center}
     \let\QCTOptA\empty
     \let\QCTOptB\empty
     \let\QCBOptA\empty
     \let\QCBOptB\empty
     \ifOverFrame
        #5\QCTOptA\par
     \fi
     \GRAPHIC{#4}{#3}{#1}{#2}{\z@}
     \ifUnderFrame
        \nobreak\par #5\QCBOptA
     \fi
 \end{center}%
 }%
%
\def\FFRAME#1#2#3#4#5#6#7{%
 \begin{figure}[#1]%
  \let\QCTOptA\empty
  \let\QCTOptB\empty
  \let\QCBOptA\empty
  \let\QCBOptB\empty
  \ifOverFrame
    #4
    \ifx\QCTOptA\empty
    \else
      \ifx\QCTOptB\empty
        \caption{\QCTOptA}%
      \else
        \caption[\QCTOptB]{\QCTOptA}%
      \fi
    \fi
    \ifUnderFrame\else
      \label{#5}%
    \fi
  \else
    \UnderFrametrue%
  \fi
  \begin{center}\GRAPHIC{#7}{#6}{#2}{#3}{\z@}\end{center}%
  \ifUnderFrame
    #4
    \ifx\QCBOptA\empty
      \caption{}%
    \else
      \ifx\QCBOptB\empty
        \caption{\QCBOptA}%
      \else
        \caption[\QCBOptB]{\QCBOptA}%
      \fi
    \fi
    \label{#5}%
  \fi
  \end{figure}%
 }%
%
%
%
%
%
\newcount\dispkind%

\def\makeactives{
  \catcode`\"=\active
  \catcode`\;=\active
  \catcode`\:=\active
  \catcode`\'=\active
  \catcode`\~=\active
}
\bgroup
   \makeactives
   \gdef\activesoff{%
      \def"{\string"}
      \def;{\string;}
      \def:{\string:}
      \def'{\string'}
      \def~{\string~}
    }
\egroup

\def\FRAME#1#2#3#4#5#6#7#8{%
 \bgroup
 \@ifundefined{bbl@deactivate}{}{\activesoff}
 \ifnum\draft=\@ne
   \wasdrafttrue
 \else
   \wasdraftfalse%
 \fi
 \def\LaTeXparams{}%
 \dispkind=\z@
 \def\LaTeXparams{}%
 \doFRAMEparams{#1}%
 \ifnum\dispkind=\z@\IFRAME{#2}{#3}{#4}{#7}{#8}{#5}\else
  \ifnum\dispkind=\@ne\DFRAME{#2}{#3}{#7}{#8}{#5}\else
   \ifnum\dispkind=\tw@
    \edef\@tempa{\noexpand\FFRAME{\LaTeXparams}}%
    \@tempa{#2}{#3}{#5}{#6}{#7}{#8}%
    \fi
   \fi
  \fi
  \ifwasdraft\draft=1\else\draft=0\fi{}%
  \egroup
 }%
%

\def\TEXUX#1{"texux"}

%
%
%
%
%
%
%
%
%

%
\long\def\QQQ#1#2{%
     \long\expandafter\def\csname#1\endcsname{#2}}%
\@ifundefined{QTP}{\def\QTP#1{}}{}
\@ifundefined{QEXCLUDE}{\def\QEXCLUDE#1{}}{}
\@ifundefined{Qlb}{}{}
\@ifundefined{Qlt}{}{}
\long\def\QQA#1#2{}%
\def\QTR#1#2{{\csname#1\endcsname #2}}
\def\EXPAND#1[#2]#3{}%
\def\NOEXPAND#1[#2]#3{}%
\def\LaTeXparent#1{}%
\def\ChildStyles#1{}%
\def\ChildDefaults#1{}%
\def\QTagDef#1#2#3{}%
%
\@ifundefined{StyleEditBeginDoc}{}{}
%
\def\QQfnmark#1{\footnotemark}

%
\def\makeatletter\input gnuindex.sty\makeatother\makeindex{\makeatletter\input gnuindex.sty\makeatother\makeindex}%
\@ifundefined{INDEX}{\def\INDEX#1#2{}{}}{}%
\@ifundefined{SUBINDEX}{\def\SUBINDEX#1#2#3{}{}{}}{}%
\@ifundefined{initial}%
   {\def\initial#1{\bigbreak{\raggedright\large\bf #1}\kern 2\p@\penalty3000}}%
   {}%
\@ifundefined{entry}{}{}%
\@ifundefined{primary}{}{}%
\@ifundefined{secondary}{}{}%
\@ifundefined{ZZZ}{}{\makeatletter\input gnuindex.sty\makeatother\makeindex\makeatletter}%
%
\@ifundefined{abstract}{%
 \def\abstract{%
  \if@twocolumn
   \section*{Abstract (Not appropriate in this style!)}%
   \else \small
   \begin{center}{\bf Abstract\vspace{-.5em}\vspace{\z@}}\end{center}%
   \quotation
   \fi
  }%
 }{%
 }%
\@ifundefined{endabstract}{\def\endabstract
  {\if@twocolumn\else\endquotation\fi}}{}%
\@ifundefined{maketitle}{\def\maketitle#1{}}{}%
\@ifundefined{affiliation}{\def\affiliation#1{}}{}%
\@ifundefined{proof}{}{}%
\@ifundefined{endproof}{}{}%
\@ifundefined{newfield}{\def\newfield#1#2{}}{}%
\@ifundefined{chapter}{\def\chapter#1{\par(Chapter head:)#1\par }%
 \newcount\c@chapter}{}%
\@ifundefined{part}{\def\part#1{\par(Part head:)#1\par }}{}%
\@ifundefined{section}{\def\section#1{\par(Section head:)#1\par }}{}%
\@ifundefined{subsection}{\def\subsection#1%
 {\par(Subsection head:)#1\par }}{}%
\@ifundefined{subsubsection}{\def\subsubsection#1%
 {\par(Subsubsection head:)#1\par }}{}%
\@ifundefined{paragraph}{\def\paragraph#1%
 {\par(Subsubsubsection head:)#1\par }}{}%
\@ifundefined{subparagraph}{\def\subparagraph#1%
 {\par(Subsubsubsubsection head:)#1\par }}{}%
\@ifundefined{therefore}{}{}%
\@ifundefined{backepsilon}{}{}%
\@ifundefined{yen}{}{}%
\@ifundefined{registered}{%
   \def\registered{\relax\ifmmode{}\r@gistered
                    \else$\m@th\r@gistered$\fi}%
 \def\r@gistered{^{\ooalign
  {\hfil\raise.07ex\hbox{$\scriptstyle\rm\text{R}$}\hfil\crcr
  \mathhexbox20D}}}}{}%
\@ifundefined{Eth}{}{}%
\@ifundefined{eth}{}{}%
\@ifundefined{Thorn}{}{}%
\@ifundefined{thorn}{}{}%
%
\@ifundefined{degree}{}{}%
%
\newdimen\theight
\def\Column{%
 \vadjust{\setbox\z@=\hbox{\scriptsize\quad\quad tcol}%
  \theight=\ht\z@\advance\theight by \dp\z@\advance\theight by \lineskip
  \kern -\theight \vbox to \theight{%
   \rightline{\rlap{\box\z@}}%
   \vss
   }%
  }%
 }%
\def\qed{%
 \ifhmode\unskip\nobreak\fi\ifmmode\ifinner\else\hskip5\p@\fi\fi
 \hbox{\hskip5\p@\vrule width4\p@ height6\p@ depth1.5\p@\hskip\p@}%
 }%
\def\miss{\hbox{\vrule height2\p@ width 2\p@ depth\z@}}%
%
%
\def\tcol#1{{\baselineskip=6\p@ \vcenter{#1}} \Column}  %
%
%
%
%
%

\def\newfmtname{LaTeX2e}
\def\chkcompat{%
   \if@compatibility
   \else
     \usepackage{latexsym}
   \fi
}

\ifx\fmtname\newfmtname
  \DeclareOldFontCommand{\rm}{\normalfont\rmfamily}{\mathrm}
  \DeclareOldFontCommand{\sf}{\normalfont\sffamily}{\mathsf}
  \DeclareOldFontCommand{\tt}{\normalfont\ttfamily}{\mathtt}
  \DeclareOldFontCommand{\bf}{\normalfont\bfseries}{\mathbf}
  \DeclareOldFontCommand{\it}{\normalfont\itshape}{\mathit}
  \DeclareOldFontCommand{\sl}{\normalfont\slshape}{\@nomath\sl}
  \DeclareOldFontCommand{\sc}{\normalfont\scshape}{\@nomath\sc}
  \chkcompat
\fi

%

\def\alpha{{\Greekmath 010B}}%
\def\beta{{\Greekmath 010C}}%
\def\gamma{{\Greekmath 010D}}%
\def\delta{{\Greekmath 010E}}%
\def\epsilon{{\Greekmath 010F}}%
\def\zeta{{\Greekmath 0110}}%
\def\eta{{\Greekmath 0111}}%
\def\theta{{\Greekmath 0112}}%
\def\iota{{\Greekmath 0113}}%
\def\kappa{{\Greekmath 0114}}%
\def\lambda{{\Greekmath 0115}}%
\def\mu{{\Greekmath 0116}}%
\def\nu{{\Greekmath 0117}}%
\def\xi{{\Greekmath 0118}}%
\def\pi{{\Greekmath 0119}}%
\def\rho{{\Greekmath 011A}}%
\def\sigma{{\Greekmath 011B}}%
\def\tau{{\Greekmath 011C}}%
\def\upsilon{{\Greekmath 011D}}%
\def\phi{{\Greekmath 011E}}%
\def\chi{{\Greekmath 011F}}%
\def\psi{{\Greekmath 0120}}%
\def\omega{{\Greekmath 0121}}%
\def\varepsilon{{\Greekmath 0122}}%
\def\vartheta{{\Greekmath 0123}}%
\def\varpi{{\Greekmath 0124}}%
\def\varrho{{\Greekmath 0125}}%
\def\varsigma{{\Greekmath 0126}}%
\def\varphi{{\Greekmath 0127}}%

\def\nabla{{\Greekmath 0272}}
\def\FindBoldGroup{%
   {\setbox0=\hbox{$\mathbf{x\global\edef\theboldgroup{\the\mathgroup}}$}}%
}

\def\Greekmath#1#2#3#4{%
    \if@compatibility
        \ifnum\mathgroup=\symbold
           \mathchoice{\mbox{\boldmath$\displaystyle\mathchar"#1#2#3#4$}}%
                      {\mbox{\boldmath$\textstyle\mathchar"#1#2#3#4$}}%
                      {\mbox{\boldmath$\scriptstyle\mathchar"#1#2#3#4$}}%
                      {\mbox{\boldmath$\scriptscriptstyle\mathchar"#1#2#3#4$}}%
        \else
           \mathchar"#1#2#3#4%
        \fi
    \else
        \FindBoldGroup
        \ifnum\mathgroup=\theboldgroup 
           \mathchoice{\mbox{\boldmath$\displaystyle\mathchar"#1#2#3#4$}}%
                      {\mbox{\boldmath$\textstyle\mathchar"#1#2#3#4$}}%
                      {\mbox{\boldmath$\scriptstyle\mathchar"#1#2#3#4$}}%
                      {\mbox{\boldmath$\scriptscriptstyle\mathchar"#1#2#3#4$}}%
        \else
           \mathchar"#1#2#3#4%
        \fi     	
	  \fi}

\newif\ifGreekBold  \GreekBoldfalse
\let\SAVEPBF=\pbf
\def\pbf{\GreekBoldtrue\SAVEPBF}%

\@ifundefined{theorem}{}{}
\@ifundefined{lemma}{\newtheorem{lemma}[theorem]{Lemma}}{}
\@ifundefined{corollary}{}{}
\@ifundefined{conjecture}{}{}
\@ifundefined{proposition}{\newtheorem{proposition}[theorem]{Proposition}}{}
\@ifundefined{axiom}{}{}
\@ifundefined{remark}{}{}
\@ifundefined{example}{}{}
\@ifundefined{exercise}{}{}
\@ifundefined{definition}{\newtheorem{definition}{Definition}}{}

\@ifundefined{mathletters}{%
  \newcounter{equationnumber}
  \def\mathletters{%
     \addtocounter{equation}{1}
     \edef\@currentlabel{\theequation}%
     \setcounter{equationnumber}{\c@equation}
     \setcounter{equation}{0}%
     \edef\theequation{\@currentlabel\noexpand\alph{equation}}%
  }
  
}{}

\@ifundefined{BibTeX}{%
    \def\BibTeX{{\rm B\kern-.05em{\sc i\kern-.025em b}\kern-.08em
                 T\kern-.1667em\lower.7ex\hbox{E}\kern-.125emX}}}{}%
\@ifundefined{AmS}%
    {\def\AmS{{\protect\usefont{OMS}{cmsy}{m}{n}%
                A\kern-.1667em\lower.5ex\hbox{M}\kern-.125emS}}}{}%
\@ifundefined{AmSTeX}{}{}%
%

%
%
\ifx\ds@amstex\relax
   \message{amstex already loaded}\makeatother\endinput
\else
   \@ifpackageloaded{amstex}%
      {\message{amstex already loaded}\makeatother\endinput}
      {}
   \@ifpackageloaded{amsgen}%
      {\message{amsgen already loaded}\makeatother\endinput}
      {}
\fi
%
%
%
%
\let\DOTSI\relax
\def\RIfM@{\relax\ifmmode}%
\def\FN@{\futurelet\next}%
\newcount\intno@
\def\iint{\DOTSI\intno@\tw@\FN@\ints@}%
\def\iiint{\DOTSI\intno@\thr@@\FN@\ints@}%
\def\iiiint{\DOTSI\intno@4 \FN@\ints@}%
\def\idotsint{\DOTSI\intno@\z@\FN@\ints@}%
\def\ints@{\findlimits@\ints@@}%
\newif\iflimtoken@
\newif\iflimits@
\def\findlimits@{\limtoken@true\ifx\next\limits\limits@true
 \else\ifx\next\nolimits\limits@false\else
 \limtoken@false\ifx\ilimits@\nolimits\limits@false\else
 \ifinner\limits@false\else\limits@true\fi\fi\fi\fi}%
\def\multint@{\int\ifnum\intno@=\z@\intdots@                          
 \else\intkern@\fi                                                    
 \ifnum\intno@>\tw@\int\intkern@\fi                                   
 \ifnum\intno@>\thr@@\int\intkern@\fi                                 
 \int}
\def\multintlimits@{\intop\ifnum\intno@=\z@\intdots@\else\intkern@\fi
 \ifnum\intno@>\tw@\intop\intkern@\fi
 \ifnum\intno@>\thr@@\intop\intkern@\fi\intop}%
\def\intic@{%
    \mathchoice{\hskip.5em}{\hskip.4em}{\hskip.4em}{\hskip.4em}}%
\def\negintic@{\mathchoice
 {\hskip-.5em}{\hskip-.4em}{\hskip-.4em}{\hskip-.4em}}%
\def\ints@@{\iflimtoken@                                              
 \def\ints@@@{\iflimits@\negintic@
   \mathop{\intic@\multintlimits@}\limits                             
  \else\multint@\nolimits\fi                                          
  \eat@}
 \else                                                                
 \def\ints@@@{\iflimits@\negintic@
  \mathop{\intic@\multintlimits@}\limits\else
  \multint@\nolimits\fi}\fi\ints@@@}%
\def\intkern@{\mathchoice{\!\!\!}{\!\!}{\!\!}{\!\!}}%
\def\plaincdots@{\mathinner{\cdotp\cdotp\cdotp}}%
\def\intdots@{\mathchoice{\plaincdots@}%
 {{\cdotp}\mkern1.5mu{\cdotp}\mkern1.5mu{\cdotp}}%
 {{\cdotp}\mkern1mu{\cdotp}\mkern1mu{\cdotp}}%
 {{\cdotp}\mkern1mu{\cdotp}\mkern1mu{\cdotp}}}%
%
%
%
\def\RIfM@{\relax\protect\ifmmode}
\def\text{\RIfM@\expandafter\text@\else\expandafter\mbox\fi}
\let\nfss@text\text
\def\text@#1{\mathchoice
   {\textdef@\displaystyle\f@size{#1}}%
   {\textdef@\textstyle\tf@size{\firstchoice@false #1}}%
   {\textdef@\textstyle\sf@size{\firstchoice@false #1}}%
   {\textdef@\textstyle \ssf@size{\firstchoice@false #1}}%
   \glb@settings}

\def\textdef@#1#2#3{\hbox{{%
                    \everymath{#1}%
                    \let\f@size#2\selectfont
                    #3}}}
\newif\iffirstchoice@
\firstchoice@true
%
%
%
%
%
\def\Let@{\relax\iffalse{\fi\let\\=\cr\iffalse}\fi}%
\def\vspace@{\def\vspace##1{\crcr\noalign{\vskip##1\relax}}}%
\def\multilimits@{\bgroup\vspace@\Let@
 \baselineskip\fontdimen10 \scriptfont\tw@
 \advance\baselineskip\fontdimen12 \scriptfont\tw@
 \lineskip\thr@@\fontdimen8 \scriptfont\thr@@
 \lineskiplimit\lineskip
 \vbox\bgroup\ialign\bgroup\hfil$\m@th\scriptstyle{##}$\hfil\crcr}%
\def\Sb{_\multilimits@}%
\def\endSb{\crcr\egroup\egroup\egroup}%
\def\Sp{^\multilimits@}%

%
%
%
\newdimen\ex@
\ex@.2326ex
\def\rightarrowfill@#1{$#1\m@th\mathord-\mkern-6mu\cleaders
 \hbox{$#1\mkern-2mu\mathord-\mkern-2mu$}\hfill
 \mkern-6mu\mathord\rightarrow$}%
\def\leftarrowfill@#1{$#1\m@th\mathord\leftarrow\mkern-6mu\cleaders
 \hbox{$#1\mkern-2mu\mathord-\mkern-2mu$}\hfill\mkern-6mu\mathord-$}%
\def\leftrightarrowfill@#1{$#1\m@th\mathord\leftarrow
\mkern-6mu\cleaders
 \hbox{$#1\mkern-2mu\mathord-\mkern-2mu$}\hfill
 \mkern-6mu\mathord\rightarrow$}%
\def\overrightarrow{\mathpalette\overrightarrow@}%
\def\overrightarrow@#1#2{\vbox{\ialign{##\crcr\rightarrowfill@#1\crcr
 \noalign{\kern-\ex@\nointerlineskip}$\m@th\hfil#1#2\hfil$\crcr}}}%

\def\overleftarrow{\mathpalette\overleftarrow@}%
\def\overleftarrow@#1#2{\vbox{\ialign{##\crcr\leftarrowfill@#1\crcr
 \noalign{\kern-\ex@\nointerlineskip}$\m@th\hfil#1#2\hfil$\crcr}}}%
\def\overleftrightarrow{\mathpalette\overleftrightarrow@}%
\def\overleftrightarrow@#1#2{\vbox{\ialign{##\crcr
   \leftrightarrowfill@#1\crcr
 \noalign{\kern-\ex@\nointerlineskip}$\m@th\hfil#1#2\hfil$\crcr}}}%
\def\underrightarrow{\mathpalette\underrightarrow@}%
\def\underrightarrow@#1#2{\vtop{\ialign{##\crcr$\m@th\hfil#1#2\hfil
  $\crcr\noalign{\nointerlineskip}\rightarrowfill@#1\crcr}}}%

\def\underleftarrow{\mathpalette\underleftarrow@}%
\def\underleftarrow@#1#2{\vtop{\ialign{##\crcr$\m@th\hfil#1#2\hfil
  $\crcr\noalign{\nointerlineskip}\leftarrowfill@#1\crcr}}}%
\def\underleftrightarrow{\mathpalette\underleftrightarrow@}%
\def\underleftrightarrow@#1#2{\vtop{\ialign{##\crcr$\m@th
  \hfil#1#2\hfil$\crcr
 \noalign{\nointerlineskip}\leftrightarrowfill@#1\crcr}}}%


\def\qopnamewl@#1{\mathop{\operator@font#1}\nlimits@}
\let\nlimits@\displaylimits
\def\setboxz@h{\setbox\z@\hbox}

\def\varlim@#1#2{\mathop{\vtop{\ialign{##\crcr
 \hfil$#1\m@th\operator@font lim$\hfil\crcr
 \noalign{\nointerlineskip}#2#1\crcr
 \noalign{\nointerlineskip\kern-\ex@}\crcr}}}}

 \def\rightarrowfill@#1{\m@th\setboxz@h{$#1-$}\ht\z@\z@
  $#1\copy\z@\mkern-6mu\cleaders
  \hbox{$#1\mkern-2mu\box\z@\mkern-2mu$}\hfill
  \mkern-6mu\mathord\rightarrow$}
\def\leftarrowfill@#1{\m@th\setboxz@h{$#1-$}\ht\z@\z@
  $#1\mathord\leftarrow\mkern-6mu\cleaders
  \hbox{$#1\mkern-2mu\copy\z@\mkern-2mu$}\hfill
  \mkern-6mu\box\z@$}

\def\projlim{\qopnamewl@{proj\,lim}}
\def\injlim{\qopnamewl@{inj\,lim}}
\def\varinjlim{\mathpalette\varlim@\rightarrowfill@}
\def\varprojlim{\mathpalette\varlim@\leftarrowfill@}
\def\varliminf{\mathpalette\varliminf@{}}
\def\varliminf@#1{\mathop{\underline{\vrule\@depth.2\ex@\@width\z@
   \hbox{$#1\m@th\operator@font lim$}}}}
\def\varlimsup{\mathpalette\varlimsup@{}}
\def\varlimsup@#1{\mathop{\overline
  {\hbox{$#1\m@th\operator@font lim$}}}}

%
%
%
%
%
%
%
%
%
%
%
%
%
%
%
%
%
%
%
%
%
%
%

%
%
%
%
%
%
%
%
%
%
%
%
%
%
%
%
%
%
%
%
%
%

%
%
%
%
%
%
%
%
%
%
%
%
%
%
%
%
%
%
%
%
%
%
%
%
\begingroup \catcode `|=0 \catcode `[= 1
\catcode`]=2 \catcode `\{=12 \catcode `\}=12
\catcode`\\=12
|gdef|@alignverbatim#1\end{align}[#1|end[align]]
|gdef|@salignverbatim#1\end{align*}[#1|end[align*]]

|gdef|@alignatverbatim#1\end{alignat}[#1|end[alignat]]
|gdef|@salignatverbatim#1\end{alignat*}[#1|end[alignat*]]

|gdef|@xalignatverbatim#1\end{xalignat}[#1|end[xalignat]]
|gdef|@sxalignatverbatim#1\end{xalignat*}[#1|end[xalignat*]]

|gdef|@gatherverbatim#1\end{gather}[#1|end[gather]]
|gdef|@sgatherverbatim#1\end{gather*}[#1|end[gather*]]

|gdef|@gatherverbatim#1\end{gather}[#1|end[gather]]
|gdef|@sgatherverbatim#1\end{gather*}[#1|end[gather*]]

|gdef|@multilineverbatim#1\end{multiline}[#1|end[multiline]]
|gdef|@smultilineverbatim#1\end{multiline*}[#1|end[multiline*]]

|gdef|@arraxverbatim#1\end{arrax}[#1|end[arrax]]
|gdef|@sarraxverbatim#1\end{arrax*}[#1|end[arrax*]]

|gdef|@tabulaxverbatim#1\end{tabulax}[#1|end[tabulax]]
|gdef|@stabulaxverbatim#1\end{tabulax*}[#1|end[tabulax*]]

|endgroup

\def\align{\@verbatim \frenchspacing\@vobeyspaces \@alignverbatim
You are using the "align" environment in a style in which it is not defined.}

\@namedef{align*}{\@verbatim\@salignverbatim
You are using the "align*" environment in a style in which it is not defined.}
\expandafter\let\csname endalign*\endcsname =\endtrivlist

\def\alignat{\@verbatim \frenchspacing\@vobeyspaces \@alignatverbatim
You are using the "alignat" environment in a style in which it is not defined.}

\@namedef{alignat*}{\@verbatim\@salignatverbatim
You are using the "alignat*" environment in a style in which it is not defined.}
\expandafter\let\csname endalignat*\endcsname =\endtrivlist

\def\xalignat{\@verbatim \frenchspacing\@vobeyspaces \@xalignatverbatim
You are using the "xalignat" environment in a style in which it is not defined.}

\@namedef{xalignat*}{\@verbatim\@sxalignatverbatim
You are using the "xalignat*" environment in a style in which it is not defined.}
\expandafter\let\csname endxalignat*\endcsname =\endtrivlist

\def\gather{\@verbatim \frenchspacing\@vobeyspaces \@gatherverbatim
You are using the "gather" environment in a style in which it is not defined.}

\@namedef{gather*}{\@verbatim\@sgatherverbatim
You are using the "gather*" environment in a style in which it is not defined.}
\expandafter\let\csname endgather*\endcsname =\endtrivlist

\def\multiline{\@verbatim \frenchspacing\@vobeyspaces \@multilineverbatim
You are using the "multiline" environment in a style in which it is not defined.}

\@namedef{multiline*}{\@verbatim\@smultilineverbatim
You are using the "multiline*" environment in a style in which it is not defined.}
\expandafter\let\csname endmultiline*\endcsname =\endtrivlist

\def\arrax{\@verbatim \frenchspacing\@vobeyspaces \@arraxverbatim
You are using a type of "array" construct that is only allowed in AmS-LaTeX.}

\def\tabulax{\@verbatim \frenchspacing\@vobeyspaces \@tabulaxverbatim
You are using a type of "tabular" construct that is only allowed in AmS-LaTeX.}

\@namedef{arrax*}{\@verbatim\@sarraxverbatim
You are using a type of "array*" construct that is only allowed in AmS-LaTeX.}
\expandafter\let\csname endarrax*\endcsname =\endtrivlist

\@namedef{tabulax*}{\@verbatim\@stabulaxverbatim
You are using a type of "tabular*" construct that is only allowed in AmS-LaTeX.}
\expandafter\let\csname endtabulax*\endcsname =\endtrivlist


\def\@@eqncr{\let\@tempa\relax
    \ifcase\@eqcnt \def\@tempa{& & &}\or \def\@tempa{& &}%
      \else \def\@tempa{&}\fi
     \@tempa
     \if@eqnsw
        \iftag@
           \@taggnum
        \else
           \@eqnnum\stepcounter{equation}%
        \fi
     \fi
     \global\tag@false
     \global\@eqnswtrue
     \global\@eqcnt\z@\cr}

 \def\endequation{%
     \ifmmode\ifinner 
      \iftag@
        \addtocounter{equation}{-1} 
        $\hfil
           \displaywidth\linewidth\@taggnum\egroup \endtrivlist
        \global\tag@false
        \global\@ignoretrue
      \else
        $\hfil
           \displaywidth\linewidth\@eqnnum\egroup \endtrivlist
        \global\tag@false
        \global\@ignoretrue
      \fi
     \else
      \iftag@
        \addtocounter{equation}{-1} 
        \eqno \hbox{\@taggnum}
        \global\tag@false%
        $$\global\@ignoretrue
      \else
        \eqno \hbox{\@eqnnum}
        $$\global\@ignoretrue
      \fi
     \fi\fi
 }

 \newif\iftag@ \tag@false

 \def\tag{\@ifnextchar*{\@tagstar}{\@tag}}
 \def\@tag#1{%
     \global\tag@true
     \global\def\@taggnum{(#1)}}
 \def\@tagstar*#1{%
     \global\tag@true
     \global\def\@taggnum{#1}%
}


\makeatother
\endinput

\end{filecontents}

\documentclass[12pt]{article}
\usepackage{amsmath}
\usepackage{amsfonts}
\usepackage{makeidx}
\usepackage{amssymb}
\usepackage{times}

\setcounter{MaxMatrixCols}{10}

\newtheorem{satz}{Theorem}[section]
\newtheorem{definition}[satz]{Definition}

\newtheorem{koro}[satz]{Corollary}
\newtheorem{bemerkung}[satz]{Remark}

\newtheorem{notation}[satz]{Notation}
\newenvironment{proof}{\par\noindent {\it Proof:} \hspace{7pt}}{\hfill\hbox{\vrule width 7pt depth 0pt height 7pt}
\par\vspace{10pt}}

\begin{document}

\title{Microscopic Conductivity of Lattice Fermions\ at Equilibrium -- Part
II: Interacting Particles}
\author{Jean-Bernard Bru \and Walter de Siqueira Pedra}
\date{\today }
\maketitle

\begin{abstract}
We apply Lieb--Robinson bounds for multi--com%
\-%
mutators we recently derived \cite{brupedraLR} to study the (possibly
non--linear) response of interacting fermions at thermal equilibrium to
perturbations of the external electromagnetic field. This analysis leads to
an extension of the results for quasi--free fermions of \cite{OhmI,OhmII} to
fermion systems on the lattice with short--range interactions. More
precisely, we investigate entropy production and charge transport properties
of non--autonomous $C^{\ast }$--dyna%
\-%
mical systems associated with interacting lattice fermions within bounded
static potentials and in presence of an electric field that is time-- and
space--dependent. We verify the 1st law of thermodynamics for the heat
production of the system under consideration. In linear response theory, the
latter is related with Ohm and Joule's laws. These laws are proven here to
hold at the microscopic scale, uniformly with respect to the size of the
(microscopic) region where the electric field is applied. An important
outcome is the extension of the notion of conductivity measures to
interacting fermions.
\end{abstract}

\noindent \textbf{Keywords}: Disordered systems; transport processes;
conductivity measure

\noindent\textbf{MSC2010:} 82C70, 82C44, 82C20

\tableofcontents%

\section{Introduction}

The present paper belongs to a succession of works on Ohm and Joule's laws
starting with \cite{OhmI,OhmII,OhmIII,OhmIV}. These papers give a complete
and mathematically rigorous derivation (at least in the AC--regime) of the
phenomenon of linear conductivity from microscopic quantum dynamics and
first principles of thermodynamics, only. Their results lead, in particular,
to a physical picture of the microscopic origin of Ohm and Joule's laws
based on a notion of \textquotedblleft quantum viscosity\textquotedblright\
for currents, highlighting the role of so--called diamagnetic and
paramagnetic currents. However, a drawback of our previous studies \cite%
{OhmI,OhmII,OhmIII,OhmIV} is their restriction to non--interacting fermions
in disordered media. Indeed, it is believed in theoretical physics that
electric resistance of conductors should also result from interactions
between charge carriers, and not only from the presence of inhomogenities
(impurities).

Therefore, we aim to extend the results of \cite{OhmI,OhmII,OhmIII,OhmIV} to
fermion systems with interactions, i.e., to rigorously derive, at least in
the AC--regime, Ohm and Joule's laws for interacting systems. See also \cite%
{brupedrahistoire} for a historical perspective of this subject. The present
paper is a first step in this direction, extending all results of \cite%
{OhmI,OhmII} to fermion systems on the lattice with short--range
interactions and bounded static potentials:

\begin{itemize}
\item We investigate the heat production for some non--auto%
\-%
nomous $C^{\ast }$--dyna%
\-%
mical systems on the CAR $C^{\ast }$--algebra of cubic infinite lattices of
any dimension with respect to (w.r.t.) completely passive states. Such
states implement the 2nd law of thermodynamics and are identified here with
the (thermal) equilibrium states of the system. The (non--auto%
\-%
nomous) dynamics is generated by a short--range interaction between
particles, a bounded static potential, and next neighbor hoppings in
presence of an electromagnetic field that is time-- and space--dependent. In
particular, we verify the 1st law of thermodynamics for the system under
consideration.

\item We next derive Ohm and Joule's laws at the microscopic scale, an \emph{%
unexpected} \cite{Ohm-exp2} property experimentally verified in 2012 at the
atomic scale \cite{Ohm-exp}. More precisely, we show that, at weak external
electric fields, uniformly w.r.t. the size of the region where the electric
field is applied, the current density response is linear and the heat
production density is quadratic in the strength of the applied field. We
introduce the notion of conductivity measures for interacting fermions,
similar to \cite{Annale,JMP-autre,JMP-autre2} and \cite{OhmII,OhmIII,OhmIV}
in the non--interacting case. Note however that the detailed structure of
the conductivity measure is not studied, yet. It depends, of course, on the
system under consideration.
\end{itemize}

\noindent Note that in \cite[Section 3.5]{OhmII} so--called microscopic
\emph{conductivity distributions} are defined from conductivity measures.
Exactly the same construction, at least for time--reversal invariant models,
could be done here and we refrain from doing it again. The same remark can
be done for the derivation of Joule's law in its original formulation, see
\cite[Section 4.5]{OhmII} for more details.

Like in \cite{OhmI,OhmII}, all estimates we get are uniform w.r.t. the size
of the region with non--vanishing electric fields. This is possible because
we prove in \cite[Corollary 3.10]{brupedraLR} the tree--decay bounds on
multi--commutators of \cite[Corollary 4.3]{OhmI} for the case of interacting
fermions. These bounds are a pivotal ingredient in \cite%
{OhmI,OhmII,OhmIII,OhmIV} and were derived in \cite[Section 4]{OhmI} by
explicit computations using the fact that the fermion system was \emph{non}%
--interacting. For interacting systems, we use in \cite{brupedraLR} \emph{%
Lieb--Robinson bounds for multi--commutators }together with combinatorics of
maximally connected graphs (tree expansions)\emph{\ }to prove them. Note
additionally that Lieb--Robinson bounds for multi--com%
\-%
mutators also enter in a decisive way in the proof of the 1st law of
thermodynamics for the case of interacting particles, see for instance
Theorem \ref{main 1 copy(1)} (i) and Remark \ref{remark 1st law of
Thermodynamics}.

As discussed in \cite{brupedraLR}, this method requires short--range
interactions. The short--range property is defined via the finiteness of a
convenient norm for interactions: This norm ensures that the interaction is
sufficiently weak at large distances. The set of such short--range
interactions form a whole Banach space $\mathcal{W}$ w.r.t. this norm. The
space $\mathcal{W}$ includes density--density interactions resulting from
the second quantization of two--body interactions defined via a real--valued
and summable (in a convenient sense) function $v\left( r\right) :[0,\infty
)\rightarrow \mathbb{R}$. Considering fermions with spins $\uparrow $ or $%
\downarrow $, our setting includes for instance the celebrated Hubbard model
(and any other system with finite range interactions) or models with
Yukawa--type potentials, but the Coulomb potential is excluded because it is
not summable in space.

Our main assertions are Theorems \ref{main 1 copy(1)}--\ref{lemma sigma pos
type} and \ref{lemma sigma pos type copy(4)}--\ref{Local Ohm's law thm
copy(2)}. This paper is organized as follows:

\begin{itemize}
\item Section \ref{sect 2.1 copy(1)} precisely formulates our setting. We
define in particular a Banach space of short--range interactions.

\item Section \ref{Heat Production and Current Linear Response} extends
\emph{all} results of \cite{OhmI,OhmII} to fermion systems within bounded
static potentials and with interactions decaying fast enough in space.
Observe that, in contrast to \cite{OhmII}, we generally do not have
time--reversal invariance. This yields some (straightforward) adaptations of
formulations of results and arguments.
\end{itemize}

\begin{notation}
\label{remark constant}\mbox{
}\newline
To simplify notation, we denote by $D$ any generic positive and finite
constant. These constants do not need to be the same from one statement to
another. A norm on a generic vector space $\mathcal{X}$ is denoted by $\Vert
\cdot \Vert _{\mathcal{X}}$ and the identity map of $\mathcal{X}$ by $%
\mathbf{1}_{\mathcal{X}}$.
\end{notation}

\section{$C^{\ast }$--Dynamical Systems for Interacting Fermions\label{sect
2.1 copy(1)}\label{Section main results}}

\subsection{Free Fermions within Electromagnetic Fields\label{sect 2.1}}

The host material for conducting fermions is assumed to be a cubic crystal
represented by the $d$--dimensional cubic lattice $\mathfrak{L}:=\mathbb{Z}%
^{d}$ ($d\in \mathbb{N}$).

Disorder in the crystal will be modeled by a random variable with
distribution $\mathfrak{a}_{\Omega }$ taking values in the measurable space $%
(\Omega ,\mathfrak{A}_{\Omega })$, i.e., $(\Omega ,\mathfrak{A}_{\Omega },%
\mathfrak{a}_{\Omega })$ is a probability space. Let $\mathfrak{b}$ be the
set of non--oriented bonds of the cubic lattice $\mathfrak{L}$:%
\begin{equation}
\mathfrak{b}:=\{\{x,x^{\prime }\}\subset \mathfrak{L}\text{ }:\text{ }%
|x-x^{\prime }|=1\}\text{ }.  \label{n-o bonds}
\end{equation}%
Then, $\Omega :=[-1,1]^{\mathfrak{L}}\times \mathbb{D}^{\mathfrak{b}}$ with $%
\mathbb{D}:=\{z\in \mathbb{C}$ $:$ $\left\vert z\right\vert \leq 1\}$. I.e.,
any element of $\Omega $ is a pair $\omega =\left( \omega _{1},\omega
_{2}\right) \in \Omega $, where $\omega _{1}$ is a function of lattice sites
with values in $[-1,1]$ and $\omega _{2}$ is a function of bonds with values
in the complex closed unit disc $\mathbb{D}$. The first component $\omega
_{1}$ is related to the random external static potential and the second
component to random hopping amplitudes of fermions. In a subsequent paper we
will explicitly fix the probability space $(\Omega ,\mathfrak{A}_{\Omega },%
\mathfrak{a}_{\Omega })$. In the present work, however, all studies are
performed for any \emph{fixed} realization $\omega \in \Omega $ and the
specific probability space is unimportant. All results here are uniform
w.r.t. the choice of $\omega \in \Omega $.

For any $\omega =\left( \omega _{1},\omega _{2}\right) \in \Omega $, $%
V_{\omega }\in \mathcal{B}(\ell ^{2}(\mathfrak{L}))$ is by definition the
self--adjoint multiplication operator with the function $\omega _{1}:%
\mathfrak{L}\rightarrow \lbrack -1,1]$. It represents a bounded static
potential. To all $\omega \in \Omega $ and strength $\vartheta \in \mathbb{R}%
_{0}^{+}$ of hopping disorder, we also associate another self--adjoint
operator $\Delta _{\omega ,\vartheta }\in \mathcal{B}(\ell ^{2}(\mathfrak{L}%
))$ describing the hoppings of a single particle in the lattice:%
\begin{eqnarray}
\lbrack \Delta _{\omega ,\vartheta }(\psi )](x) &:=&2d\psi (x)-\sum_{j=1}^{d}\Big((1+\vartheta \overline{\omega _{2}(\{x,x-e_{j}\})})\ \psi (x-e_{j})
\notag \\
&&+\psi (x+e_{j})(1+\vartheta \omega _{2}(\{x,x+e_{j}\}))\Big)
\label{discrete laplacian}
\end{eqnarray}%
for any $x\in \mathfrak{L}$ and $\psi \in \ell ^{2}(\mathfrak{L})$, with $%
\{e_{k}\}_{k=1}^{d}$ being the canonical orthonormal basis of the Euclidian
space $\mathbb{R}^{d}$. In the case of vanishing hopping disorder $\vartheta
=0$ (up to a minus sign) $\Delta _{\omega ,\vartheta }$ is the usual $d$%
--dimensional discrete Laplacian. Since the hopping amplitudes are complex
valued ($\omega _{2}$ takes values in $\mathbb{D}$), note additionally that
random electromagnetic potentials can be implemented in our model.

Then, for any $\omega \in \Omega $ and parameters $\lambda ,\vartheta \in
\mathbb{R}_{0}^{+}$, the Hamiltonian describing one quantum particle within
a bounded static potential is the discrete Schr\"{o}dinger operator $(\Delta
_{\omega ,\vartheta }+\lambda V_{\omega })$ acting on the Hilbert space $%
\ell ^{2}(\mathfrak{L})$. The coupling constants $\lambda ,\vartheta \in
\mathbb{R}_{0}^{+}$ represent the strength of disorder of respectively the
external static potential and hopping amplitudes.

The time--dependent electromagnetic potential is defined by a compactly
supported time--depen%
\-%
dent vector potential%
\begin{equation*}
\mathbf{A}\in \mathbf{C}_{0}^{\infty }:=\underset{l\in \mathbb{R}^{+}}{%
\mathop{\displaystyle \bigcup}}C_{0}^{\infty }(\mathbb{R}\times \left[ -l,l%
\right] ^{d};({\mathbb{R}}^{d})^{\ast })\ ,
\end{equation*}%
where $({\mathbb{R}}^{d})^{\ast }$ is the set of one--forms\footnote{%
In a strict sense, one should take the dual space of the tangent spaces $T({%
\mathbb{R}}^{d})_{x}$, $x\in {\mathbb{R}}^{d}$.} on ${\mathbb{R}}^{d}$ that
take values in $\mathbb{R}$. I.e., for some $l\in \mathbb{R}^{+}$, $\mathbf{A%
}\in C_{0}^{\infty }(\mathbb{R}\times \left[ -l,l\right] ^{d};({\mathbb{R}}%
^{d})^{\ast })$ and we use the convention $\mathbf{A}(t,x)\equiv 0$ whenever
$x\notin \lbrack -l,l]^{d}$. Since $\mathbf{A}\in \mathbf{C}_{0}^{\infty }$,
$\mathbf{A}(t,x)=0$ for all $t\leq t_{0}$, where $t_{0}\in \mathbb{R}$ is
some initial time. The smoothness of $\mathbf{A}$ is not essential in the
proofs and is only assumed for simplicity.

We use the Weyl gauge (also named temporal gauge) for the electromagnetic
field and as a consequence,%
\begin{equation}
E_{\mathbf{A}}(t,x):=-\partial _{t}\mathbf{A}(t,x)\ ,\quad t\in \mathbb{R},\
x\in \mathbb{R}^{d}\ ,  \label{V bar 0}
\end{equation}%
is the electric field associated with $\mathbf{A}$. We also define the
integrated electric field (or electric tension) along the oriented bond $%
\mathbf{x}:=(x^{(1)},x^{(2)})\in \mathfrak{L}^{2}$ at time $t\in \mathbb{R}$
by%
\begin{equation}
\mathbf{E}_{t}^{\mathbf{A}}\left( \mathbf{x}\right) :=\int\nolimits_{0}^{1}%
\left[ E_{\mathbf{A}}(t,\alpha x^{(2)}+(1-\alpha )x^{(1)})\right]
(x^{(2)}-x^{(1)})\mathrm{d}\alpha \ .  \label{V bar 0bis}
\end{equation}%
This definition is important because it is the electric field inducing
fermionic currents on bonds $(x^{(1)},x^{(2)})$ of nearest neighbors.

To simplify notation, we consider without loss of generality (w.l.o.g.)
spinless fermions with negative charge. The cases of particles with spin and
positively charged particles can be treated by exactly the same methods.
Thus, using the (minimal) coupling of $\mathbf{A}\in \mathbf{C}_{0}^{\infty
} $ to the discrete Laplacian, the discrete magnetic Laplacian is (up to a
minus sign) the self--adjoint operator
\begin{equation*}
\Delta _{\omega ,\vartheta }^{(\mathbf{A})}\equiv \Delta _{\omega ,\vartheta
}^{(\mathbf{A}(t,\cdot ))}\in \mathcal{B}(\ell ^{2}(\mathfrak{L}))\ ,\qquad
t\in \mathbb{R}\ ,
\end{equation*}%
defined\footnote{%
Observe that the sign of the coupling between the electromagnetic potential $%
\mathbf{A}\in \mathbf{C}_{0}^{\infty }$ and the laplacian is wrong in \cite[%
Eq. (2.8)]{OhmI}.} by%
\begin{equation}
\langle \mathfrak{e}_{x},\Delta _{\omega ,\vartheta }^{(\mathbf{A})}%
\mathfrak{e}_{y}\rangle =\exp \left( i\int\nolimits_{0}^{1}\left[ \mathbf{A}%
(t,\alpha y+(1-\alpha )x)\right] (y-x)\mathrm{d}\alpha \right) \langle
\mathfrak{e}_{x},\Delta _{\omega ,\vartheta }\mathfrak{e}_{y}\rangle
\label{eq discrete lapla A}
\end{equation}%
for all $t\in \mathbb{R}$ and $x,y\in \mathfrak{L}$. Here, $\langle \cdot
,\cdot \rangle $ is the scalar product in $\ell ^{2}(\mathfrak{L})$ and $%
\left\{ \mathfrak{e}_{x}\right\} _{x\in \mathfrak{L}}$ is the canonical
orthonormal basis $\mathfrak{e}_{x}(y)\equiv \delta _{x,y}$ of $\ell ^{2}(%
\mathfrak{L})$. In (\ref{eq discrete lapla A}), similar to (\ref{V bar 0bis}%
), $\alpha y+(1-\alpha )x$ and $y-x$ are seen as vectors in ${\mathbb{R}}%
^{d} $. In presence of an electromagnetic field associated to an arbitrary
vector potential $\mathbf{A}\in \mathbf{C}_{0}^{\infty }$, the one--particle
Hamiltonian $(\Delta _{\omega ,\vartheta }+\lambda V_{\omega })$ at fixed $%
\omega \in \Omega $ and $\lambda ,\vartheta \in \mathbb{R}_{0}^{+}$ is
replaced with (the time--dependent one)%
\begin{equation*}
\Delta _{\omega ,\vartheta }^{(\mathbf{A})}+\lambda V_{\omega }\equiv \Delta
_{\omega ,\vartheta }^{(\mathbf{A}(t,\cdot ))}+\lambda V_{\omega }\ ,\qquad
t\in \mathbb{R}\ .
\end{equation*}

\subsection{CAR $C^{\ast }$--Algebra of the Infinite Lattice\label{section
CAR}}

Let $\mathcal{P}_{f}(\mathfrak{L})\subset 2^{\mathfrak{L}}$ be the set of
all \emph{finite} subsets of the $d$--dimensional cubic lattice $\mathfrak{L}
$. For any $\Lambda \in \mathcal{P}_{f}(\mathfrak{L})$, $\mathcal{U}%
_{\Lambda }$ is the finite dimensional $C^{\ast }$--algebra generated by $%
\mathbf{1}$ and generators $\{a_{x,\mathrm{s}}\}_{x\in \Lambda ,\mathrm{s}%
\in \mathrm{S}}$ satisfying the canonical anti--commutation relations, $%
\mathrm{S}$ being some finite set of spins. As just explained above, the
spin dependence of $a_{x,\mathrm{s}}\equiv a_{x}$ is irrelevant in our
proofs (up to trivial modifications) and, w.l.o.g., we only consider
spinless fermions, i.e., the case $\mathrm{S}=\{0\}$.

Let%
\begin{equation}
\Lambda _{L}:=\{(x_{1},\ldots ,x_{d})\in \mathfrak{L}\,:\,|x_{1}|,\ldots
,|x_{d}|\leq L\}\in \mathcal{P}_{f}(\mathfrak{L})  \label{eq:def lambda n}
\end{equation}%
for all $L\in \mathbb{R}^{+}$ and observe that $\{\mathcal{U}_{\Lambda
_{L}}\}_{L\in \mathbb{R}^{+}}$ is an increasing net of $C^{\ast }$%
--algebras. Hence, the set%
\begin{equation}
\mathcal{U}_{0}:=\underset{L\in \mathbb{R}^{+}}{\bigcup }\mathcal{U}%
_{\Lambda _{L}}  \label{simple}
\end{equation}%
of local elements is a normed $\ast $--algebra with $\left\Vert A\right\Vert
_{\mathcal{U}_{0}}=\left\Vert A\right\Vert _{\mathcal{U}_{\Lambda _{L}}}$for
all $A\in \mathcal{U}_{\Lambda _{L}}$ and $L\in \mathbb{R}^{+}$. The CAR $%
C^{\ast }$--algebra $\mathcal{U}$ of the infinite system with norm $\Vert
\cdot \Vert _{\mathcal{U}}$ is by definition the completion of the normed $%
\ast $--algebra $\mathcal{U}_{0}$. It is separable, by finite dimensionality
of $\mathcal{U}_{\Lambda }$ for $\Lambda \in \mathcal{P}_{f}(\mathfrak{L})$.

For any fixed $\theta \in \mathbb{R}/(2\pi \mathbb{Z)}$, the condition
\begin{equation}
\sigma _{\theta }(a_{x})=\mathrm{e}^{-i\theta }a_{x}
\label{definition of gauge}
\end{equation}%
defines a unique automorphism $\sigma _{\theta }$ of the $C^{\ast }$%
--algebra $\mathcal{U}$. A special role is played by $\sigma _{\pi }$.
Elements $B_{1},B_{2}\in \mathcal{U}$ satisfying $\sigma _{\pi
}(B_{1})=B_{1} $ and $\sigma _{\pi }(B_{2})=-B_{2}$ are respectively called
\emph{even} and \emph{odd}, while elements $B\in \mathcal{U}$ satisfying $%
\sigma _{\theta }(B)=B$ for any $\theta \in \lbrack 0,2\pi )$ are called
\emph{gauge invariant}. The set
\begin{equation}
\mathcal{U}^{+}:=\{B\in \mathcal{U}\;:\;B=\sigma _{\pi }(B)\}\subset
\mathcal{U}  \label{definition of even operators}
\end{equation}%
of all even elements is a $\ast $--algebra. By continuity of $\sigma
_{\theta }$, it follows that $\mathcal{U}^{+}$ is closed and hence a $%
C^{\ast }$--algebra.

\subsection{Banach Space of Short--Range Interactions\label{Section Banach
space interaction}}

An \emph{interaction} is a family $\Phi =\{\Phi _{\Lambda }\}_{\Lambda \in
\mathcal{P}_{f}(\mathfrak{L})}$ of even and self--adjoint local elements $%
\Phi _{\Lambda }=\Phi _{\Lambda }^{\ast }\in \mathcal{U}^{+}\cap \mathcal{U}%
_{\Lambda }$ with $\Phi _{\emptyset }=0$. Obviously, the set of all
interactions can be endowed with a real vector space structure:
\begin{equation*}
\left( \alpha _{1}\Phi +\alpha _{2}\Psi \right) _{\Lambda }:=\alpha _{1}\Phi
_{\Lambda }+\alpha _{2}\Psi _{\Lambda }\ ,\qquad \Lambda \in \mathcal{P}_{f}(%
\mathfrak{L})\ ,
\end{equation*}%
for any interactions $\Phi $, $\Psi $, and any real numbers $\alpha
_{1},\alpha _{2}$. We define Banach spaces of short--range interactions by
introducing specific norms for interactions, taking into account space decay.

To this end, as done in \cite{brupedraLR} by following \cite[Eqs.
(1.3)--(1.4)]{NOS}, we consider positive--valued and non--increasing decay
functions $\mathbf{F}:\mathbb{R}_{0}^{+}\rightarrow \mathbb{R}^{+}$
satisfying the following properties:

\begin{itemize}
\item \emph{Summability on }$\mathfrak{L}$\emph{.}
\begin{equation}
\left\Vert \mathbf{F}\right\Vert _{1,\mathfrak{L}}:=\underset{y\in \mathfrak{%
L}}{\sup }\sum_{x\in \mathfrak{L}}\mathbf{F}\left( \left\vert x-y\right\vert
\right) =\sum_{x\in \mathfrak{L}}\mathbf{F}\left( \left\vert x\right\vert
\right) <\infty \ .  \label{(3.1) NS}
\end{equation}

\item \emph{Bounded convolution constant.}
\begin{equation}
\mathbf{D}:=\underset{x,y\in \mathfrak{L}}{\sup }\sum_{z\in \mathfrak{L}}%
\frac{\mathbf{F}\left( \left\vert x-z\right\vert \right) \mathbf{F}\left(
\left\vert z-y\right\vert \right) }{\mathbf{F}\left( \left\vert
x-y\right\vert \right) }<\infty \ .  \label{(3.2) NS}
\end{equation}
\end{itemize}

A typical example of such a $\mathbf{F}$ for $\mathfrak{L}=\mathbb{Z}^{d}$, $%
d\in \mathbb{N}$, is the function%
\begin{equation}
\mathbf{F}\left( r\right) =\left( 1+r\right) ^{-(d+\epsilon )}\ ,\qquad r\in
\mathbb{R}_{0}^{+}\ ,  \label{example polynomial}
\end{equation}%
which has convolution constant $\mathbf{D}\leq 2^{d+1+\epsilon }\left\Vert
\mathbf{F}\right\Vert _{1,\mathfrak{L}}$ for $\epsilon \in \mathbb{R}^{+}$.
See \cite[Eq. (1.6)]{NOS} or \cite[Example 3.1]{S}. Note that the
exponential function $\mathbf{F}\left( r\right) =\mathrm{e}^{-\varsigma r}$,
$\varsigma \in \mathbb{R}^{+}$, satisfies (\ref{(3.1) NS}) but not (\ref%
{(3.2) NS}). Nevertheless, for every function $\mathbf{F}$ with bounded
convolution constant (\ref{(3.2) NS}) and any strictly positive parameter $%
\varsigma \in \mathbb{R}^{+}$, the function
\begin{equation*}
\mathbf{\tilde{F}}\left( r\right) =\mathrm{e}^{-\varsigma r}\mathbf{F}\left(
r\right) \ ,\qquad r\in \mathbb{R}_{0}^{+}\ ,
\end{equation*}%
clearly satisfies Condition (\ref{(3.2) NS}) with a convolution constant
that is no bigger than the one of $\mathbf{F}$. In fact, as observed in \cite%
[Section 3.1]{S}, the multiplication of such a\ function $\mathbf{F}$ with a
non--increasing weight $f:\mathbb{R}_{0}^{+}\rightarrow \mathbb{R}^{+}$
satisfying $f\left( r+s\right) \geq f\left( r\right) f\left( s\right) $
(logarithmically superadditive function) produces a new positive--valued and
non--increasing decay function without increasing the convolution constant $%
\mathbf{D}$.\ In all the paper, (\ref{(3.1) NS})--(\ref{(3.2) NS}) are
assumed to be satisfied.

The function $\mathbf{F}$ encodes the short--range property of interactions.
Indeed, an interaction $\Phi $ is said to be \emph{short--range} if
\begin{equation}
\left\Vert \Phi \right\Vert _{\mathcal{W}}:=\underset{x,y\in \mathfrak{L}}{%
\sup }\sum\limits_{\Lambda \in \mathcal{P}_{f}(\mathfrak{L}),\;\Lambda
\supset \{x,y\}}\frac{\Vert \Phi _{\Lambda }\Vert _{\mathcal{U}}}{\mathbf{F}%
\left( \left\vert x-y\right\vert \right) }<\infty \ .  \label{iteration0}
\end{equation}%
The map $\Phi \mapsto \Vert \Phi \Vert _{\mathcal{W}}$ defines a norm on
interactions and the space of short--range interactions w.r.t. to the decay
function $\mathbf{F}$ is, by definition, the real separable Banach space $%
\mathcal{W}\equiv (\mathcal{W},\Vert \cdot \Vert _{\mathcal{W}}\mathcal{)}$
of all interactions $\Phi $ with $\Vert \Phi \Vert _{\mathcal{W}}<\infty $.
It turns out that all $\Phi \in \mathcal{W}$ define, in a natural way,
infinite--volume quantum dynamics, i.e., they define $C^{\ast }$--dynamical
systems on $\mathcal{U}$. This fact is discussed in detail in \cite[Section
3.1]{brupedraLR}.

\subsection{Unperturbed Dynamics of Interacting Fermions\label{Section
dynamics}}

To any $\omega \in \Omega $, hopping and potential strengths $\vartheta \in
\mathbb{R}_{0}^{+}$, we associate a short--range interaction $\Psi ^{(\omega
,\vartheta )}\in \mathcal{W}$ defined as follows: Fix $\Psi ^{\mathrm{IP}%
}\in \mathcal{W}$. Then,
\begin{equation*}
\Psi _{\Lambda }^{(\omega ,\vartheta )}:=\langle \mathfrak{e}_{x},\Delta
_{\omega ,\vartheta }\mathfrak{e}_{y}\rangle a_{x}^{\ast }a_{y}+\left(
1-\delta _{x,y}\right) \langle \mathfrak{e}_{y},\Delta _{\omega ,\vartheta }%
\mathfrak{e}_{x}\rangle a_{y}^{\ast }a_{x}+\Psi _{\{x,y\}}^{\mathrm{IP}}\in
\mathcal{U}^{+}\cap \mathcal{U}_{\Lambda }
\end{equation*}%
whenever $\Lambda =\left\{ x,y\right\} $ for $x,y\in \mathfrak{L}$, and $%
\Psi _{\Lambda }^{(\omega ,\vartheta )}:=\Psi _{\Lambda }^{\mathrm{IP}}$
otherwise. Hence, in presence of bounded static potentials, the \emph{%
internal} energy observable $H_{L}^{(\omega ,\vartheta ,\lambda )}\in
\mathcal{U}^{+}\cap \mathcal{U}_{\Lambda }$ in the box $\Lambda _{L}$ of the
interacting inhomogeneous fermion system is defined by%
\begin{eqnarray}
H_{L}^{(\omega ,\vartheta ,\lambda )} &:=&\sum\limits_{\Lambda \subset
\Lambda _{L}}\Psi _{\Lambda }^{(\omega ,\vartheta )}+\lambda
\sum\limits_{x\in \Lambda _{L}}\omega _{1}(x)a_{x}^{\ast }a_{x}
\label{def H loc} \\
&=&\sum\limits_{x,y\in \Lambda _{L}}\langle \mathfrak{e}_{x},(\Delta
_{\omega ,\vartheta }+\lambda V_{\omega })\mathfrak{e}_{y}\rangle
a_{x}^{\ast }a_{y}+\sum\limits_{\Lambda \subset \Lambda _{L}}\Psi _{\Lambda
}^{\mathrm{IP}}\ ,  \notag
\end{eqnarray}%
for $\omega =(\omega _{1},\omega _{2})\in \Omega $, $\vartheta ,\lambda \in
\mathbb{R}_{0}^{+}$ and $L\in \mathbb{R}^{+}$. Observe that the first sum in
the right hand side (r.h.s.) of the second equality in (\ref{def H loc}) is
the second quantization of the one--particle operator $\Delta _{\omega
,\vartheta }+\lambda V_{\omega }$ restricted to the subspace $\ell
^{2}(\Lambda _{L})\subset \ell ^{2}(\mathfrak{L})$. The second sum in that
equality encodes all interaction mechanisms involving more than one
particle, in the box $\Lambda _{L}$.

The finite volume dynamics thus corresponds to the continuous group $\{\tau
_{t}^{(L)}\}_{t\in {\mathbb{R}}}$ of $\ast $--auto%
\-%
morphisms defined by
\begin{equation}
\tau _{t}^{(\omega ,\vartheta ,\lambda ,L)}(B)=\mathrm{e}^{itH_{L}^{(\omega
,\vartheta ,\lambda )}}B\mathrm{e}^{-itH_{L}^{(\omega ,\vartheta ,\lambda
)}}\ ,\qquad B\in \mathcal{U}\ ,  \label{definition fininte vol dynam}
\end{equation}%
for any $t\in {\mathbb{R}}$, $\omega =(\omega _{1},\omega _{2})\in \Omega $,
$\vartheta ,\lambda \in \mathbb{R}_{0}^{+}$ and $L\in \mathbb{R}^{+}$.

As explained in \cite{NS} for quantum spin systems via Lieb--Robinson
bounds, it is clear that the strong limit $L\rightarrow \infty $ of $\tau
_{t}^{(\omega ,\vartheta ,\lambda ,L)}$ is well--defined. Here, we apply
\cite[Theorem 3.6]{brupedraLR}, which yields the following assertions:

\begin{satz}[Infinite volume dynamics and its generator]
\label{Theorem Lieb-Robinson copy(3)}\mbox{
}\newline
Assume (\ref{(3.1) NS})--(\ref{(3.2) NS}). Let $\omega \in \Omega $ and $%
\vartheta _{0},\vartheta ,\lambda \in \mathbb{R}_{0}^{+}$. \newline
\emph{(i)} Infinite volume dynamics. The continuous groups $\{\tau
_{t}^{(\omega ,\vartheta ,\lambda ,L)}\}_{t\in {\mathbb{R}}}$, $L\in {%
\mathbb{R}}^{+}$, converge strongly to a $C_{0}$--group $\{\tau
_{t}^{(\omega ,\vartheta ,\lambda )}\}_{t\in {\mathbb{R}}}$ of $\ast $--auto%
\-%
morphisms with generator $\delta ^{(\omega ,\vartheta ,\lambda )}$, as $%
L\rightarrow \infty $.\newline
\emph{(ii)} Infinitesimal generator. $\delta ^{(\omega ,\vartheta ,\lambda
)} $ is a conservative closed symmetric derivation which is equal on its
core $\mathcal{U}_{0}$ to
\begin{equation*}
\delta ^{(\omega ,\vartheta ,\lambda )}(B)=i\sum\limits_{x,y\in \mathfrak{L}%
}\langle \mathfrak{e}_{x},(\Delta _{\omega ,\vartheta }+\lambda V_{\omega })%
\mathfrak{e}_{y}\rangle \left[ a_{x}^{\ast }a_{y},B\right]
+i\sum\limits_{\Lambda \in \mathcal{P}_{f}(\mathfrak{L})}\left[ \Psi
_{\Lambda }^{\mathrm{IP}},B\right] \ ,\quad B\in \mathcal{U}_{0}\ .
\end{equation*}%
Both infinite sums in the above equation converge absolutely.\newline
\emph{(iii)} Lieb--Robinson bounds. For any $\vartheta \in \lbrack
0,\vartheta _{0}]$, $t\in \mathbb{R}$, $B_{1}\in \mathcal{U}^{+}\cap
\mathcal{U}_{\Lambda ^{(1)}}$ and $B_{2}\in \mathcal{U}_{\Lambda ^{(2)}}$
with disjoint sets $\Lambda ^{(1)},\Lambda ^{(2)}\in \mathcal{P}_{f}(%
\mathfrak{L})$,
\begin{eqnarray*}
\left\Vert \left[ \tau _{t}^{(\omega ,\vartheta ,\lambda )}\left(
B_{1}\right) ,B_{2}\right] \right\Vert _{\mathcal{U}} &\leq &2\mathbf{D}%
^{-1}\left\Vert B_{1}\right\Vert _{\mathcal{U}}\left\Vert B_{2}\right\Vert _{%
\mathcal{U}}\left( \mathrm{e}^{2\mathbf{D}\left\vert t\right\vert
D_{\vartheta _{0}}}-1\right) \\
&&\times \sum_{x\in \Lambda ^{(1)}}\sum_{y\in \Lambda ^{(2)}}\mathbf{F}%
\left( \left\vert x-y\right\vert \right) \ ,
\end{eqnarray*}%
where%
\begin{equation*}
D_{\vartheta _{0}}:=\sup \left\{ \left\Vert \Psi ^{(\omega ,\vartheta
)}\right\Vert _{\mathcal{W}}:\omega \in \Omega ,\ \vartheta \in \lbrack
0,\vartheta _{0}]\right\} <\infty \ .
\end{equation*}
\end{satz}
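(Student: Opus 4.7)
The plan is to reduce all three assertions to a single application of \cite[Theorem 3.6]{brupedraLR} by encoding the bounded static potential $\lambda V_{\omega}$ into the two--body family $\Psi^{(\omega,\vartheta)}$ so that we obtain one short--range interaction generating precisely $\{\tau_{t}^{(\omega,\vartheta,\lambda,L)}\}_{t\in\mathbb{R}}$. Concretely, set $\tilde{\Psi}_{\{x\}}^{(\omega,\vartheta,\lambda)} := \Psi_{\{x\}}^{(\omega,\vartheta)} + \lambda\,\omega_{1}(x)\,a_{x}^{\ast}a_{x}$ for $x\in\mathfrak{L}$ and $\tilde{\Psi}_{\Lambda}^{(\omega,\vartheta,\lambda)} := \Psi_{\Lambda}^{(\omega,\vartheta)}$ for $|\Lambda|\neq 1$. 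Using $\langle \mathfrak{e}_{x},\Delta_{\omega,\vartheta}\mathfrak{e}_{x}\rangle = 2d$ and the fact that the off--diagonal part of $\Delta_{\omega,\vartheta}$ is nearest--neighbor supported, a direct expansion of (\ref{def H loc}) gives $H_{L}^{(\omega,\vartheta,\lambda)} = \sum_{\Lambda\subset\Lambda_{L}}\tilde{\Psi}_{\Lambda}^{(\omega,\vartheta,\lambda)}$, so that $\{\tau_{t}^{(\omega,\vartheta,\lambda,L)}\}$ is exactly the finite--volume dynamics generated by $\tilde{\Psi}^{(\omega,\vartheta,\lambda)}$ in the sense of \cite{brupedraLR}.

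The analytic input to be checked is that $\tilde{\Psi}^{(\omega,\vartheta,\lambda)}\in\mathcal{W}$ with norm uniformly bounded for $\omega\in\Omega$ and $\vartheta\in[0,\vartheta_{0}]$. In the definition (\ref{iteration0}) of $\|\cdot\|_{\mathcal{W}}$, the hopping part contributes only at nearest--neighbor pairs with coefficients bounded by $1+\vartheta_{0}$ and diagonal entries equal to $2d$; the single--site potential $\lambda\,\omega_{1}(x)\,a_{x}^{\ast}a_{x}$ contributes at most $\lambda/\mathbf{F}(0)$ per site, since $|\omega_{1}(x)|\leq 1$; and $\|\Psi^{\mathrm{IP}}\|_{\mathcal{W}}<\infty$ by hypothesis. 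Summing these three contributions produces a finite uniform bound, in particular showing that $D_{\vartheta_{0}}=\sup\{\|\Psi^{(\omega,\vartheta)}\|_{\mathcal{W}}\,:\,\omega\in\Omega,\,\vartheta\in[0,\vartheta_{0}]\}<\infty$ as claimed.

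Once $\tilde{\Psi}^{(\omega,\vartheta,\lambda)}\in\mathcal{W}$ is established, \cite[Theorem 3.6]{brupedraLR} applied to it delivers (i) immediately and provides the generator on the core $\mathcal{U}_{0}$ as $B\mapsto i\sum_{\Lambda}[\tilde{\Psi}_{\Lambda}^{(\omega,\vartheta,\lambda)},B]$ with absolutely convergent sum; reorganizing the singleton and nearest--neighbor contributions as the second--quantized operator $\sum_{x,y}\langle\mathfrak{e}_{x},(\Delta_{\omega,\vartheta}+\lambda V_{\omega})\mathfrak{e}_{y}\rangle[a_{x}^{\ast}a_{y},B]$ recovers the expression in (ii). For (iii), the Lieb--Robinson bound of \cite[Theorem 3.6]{brupedraLR} applied to $\tilde{\Psi}^{(\omega,\vartheta,\lambda)}$ gives the desired prefactor structure.

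The main obstacle I foresee is the Lieb--Robinson bound in (iii) with the constant $D_{\vartheta_{0}}$ defined without the potential, rather than $\|\tilde{\Psi}^{(\omega,\vartheta,\lambda)}\|_{\mathcal{W}}$. This refinement exploits that the single--site terms $\omega_{1}(x)a_{x}^{\ast}a_{x}$ mutually commute and only act within the support of any local observable, so they do not contribute to the Lieb--Robinson velocity. The cleanest way to justify it is via an interaction--picture argument: introduce the one--parameter family of inner $\ast$--automorphisms generated by $\lambda V_{\omega}$, conjugate the dynamics by it, and observe that this conjugation preserves both supports and norms of even local elements; the resulting transformed dynamics is generated by $\Psi^{(\omega,\vartheta)}$ alone, and the Lieb--Robinson bound of \cite[Theorem 3.6]{brupedraLR} for this dynamics then transfers back with the stated potential--independent constant $D_{\vartheta_{0}}$.
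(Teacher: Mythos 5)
Your proposal is correct and ultimately rests on the same pillar as the paper, which offers no proof text at all beyond the sentence preceding the theorem: everything is a direct application of \cite[Theorem 3.6]{brupedraLR}. The organizational difference is that the cited theorem is already formulated for a \emph{pair} consisting of an interaction in $\mathcal{W}$ and a separate family of on--site static potentials $\{\mathbf{V}_{\{x\}}\}_{x\in \mathfrak{L}}$ (cf.\ Remark \ref{remark 1st law of Thermodynamics}), so the paper applies it with interaction $\Psi ^{(\omega ,\vartheta )}$ and potential $\{\lambda \omega _{1}(x)a_{x}^{\ast }a_{x}\}_{x\in \mathfrak{L}}$ and obtains the $\lambda $--independent Lieb--Robinson constant $D_{\vartheta _{0}}$ for free; you instead absorb the potential into singleton terms $\tilde{\Psi}_{\{x\}}$ -- which inflates the $\mathcal{W}$--norm by $\lambda /\mathbf{F}(0)$ and would by itself spoil the uniformity in $\lambda $ claimed in (iii) and used later (e.g.\ in Theorem \ref{lemma sigma pos type copy(4)}) -- and then must undo this via the interaction picture. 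That fix is the right idea and is essentially how the separation is established in \cite{brupedraLR}. The one imprecision is your claim that the conjugated dynamics is \textquotedblleft generated by $\Psi ^{(\omega ,\vartheta )}$ alone\textquotedblright : conjugating by the product of the commuting on--site unitaries $\mathrm{e}^{it\lambda \omega _{1}(x)a_{x}^{\ast }a_{x}}$ produces a \emph{time--dependent} interaction $\Lambda \mapsto \tau _{-t}^{V}(\Psi _{\Lambda }^{(\omega ,\vartheta )})$, not $\Psi ^{(\omega ,\vartheta )}$ itself; what saves the argument is that this conjugation preserves supports and norms, so the transformed interaction has the same $\mathcal{W}$--norm at every time, and one then invokes the non--autonomous version of the Lieb--Robinson bound (also available in \cite{brupedraLR}) to recover (iii) with the constant $D_{\vartheta _{0}}$.
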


If $\Psi ^{\mathrm{IP}},\vartheta =0$ in $\Psi ^{(\omega ,\vartheta )}$ then
$\tau ^{(\omega ,\vartheta ,\lambda )}$ becomes a family of Bogoliubov
automorphisms of $\mathcal{U}$, as described in \cite%
{OhmI,OhmII,OhmIII,OhmIV} for homogeneous hopping terms. Meanwhile,
density--density interactions resulting from the second quantization of
two--body interactions, like for instance
\begin{equation}
\sum\limits_{x,y}v\left( \left\vert x-y\right\vert \right) a_{y}^{\ast
}a_{y}a_{x}^{\ast }a_{x}\ ,  \label{density density interaction}
\end{equation}%
where $v\left( r\right) :\mathbb{R}_{0}^{+}\rightarrow \mathbb{R}^{+}$ is a
real--valued function such that%
\begin{equation*}
\underset{r\in \mathbb{R}_{0}^{+}}{\sup }\left\{ \frac{v\left( r\right) }{%
\mathbf{F}\left( r\right) }\right\} <\infty \ ,
\end{equation*}%
can be handled in our setting. Hence, (considering fermions with spins $%
\uparrow $ or $\downarrow $) our setting includes the celebrated Hubbard
model. The Coulomb potential is excluded from our analysis because it is not
summable, see Condition (\ref{(3.1) NS}). The function
\begin{equation}
v\left( r\right) =D\frac{\mathrm{e}^{-mr}}{1+r}\ ,\qquad r\in \mathbb{R}%
_{0}^{+},\ D,m\in \mathbb{R}^{+}\ ,  \label{yukawa lattice}
\end{equation}%
which is similar to the Yukawa potential for some mass $m\in \mathbb{R}^{+}$%
, is allowed by taking, for instance, the function $\mathbf{F}\left(
r\right) =D\mathrm{e}^{-\varsigma r}(1+r)^{-(d+1)}$ with $\varsigma \in
(0,m) $.

The potential (\ref{yukawa lattice}) is the physical example we have in
mind. Indeed, it is believed in theoretical physics that the Coulomb
potential is \emph{screened} by the positively charged ions which form the
lattice $\mathfrak{L}$. In \cite[Section 1.3.2]{dia-current} the authors
assert that one should first consider models with Yukawa potentials (\ref%
{yukawa lattice}) (or even finite range) to perform the thermodynamic limit
and next the limit $m\downarrow 0$ to recover the physical model. This
procedure is only justified a posteriori and we do not consider here the
highly non--trivial mathematical problem of Coulomb interactions within a
mixture of electrons and ions. Note additionally that, as we are using the
lattice $\mathfrak{L}:=\mathbb{Z}^{d}$ to represent space, (\ref{yukawa
lattice}) does not have a singularity at $r=0$, in contrast with the Coulomb
and Yukawa potentials in the continuous space $\mathbb{R}^{d}$.

\subsection{Dynamics Driven by Time--Dependent Electromagnetic Fields\label%
{Section dynamics copy(1)}}

When the electromagnetic field is switched on, i.e., for $t\geq t_{0}$, the
\emph{total} energy observable in a box $\Lambda _{L}$ that includes the
region where the electromagnetic field does not vanish equals%
\begin{equation*}
H_{L}^{(\omega ,\vartheta ,\lambda )}+W_{t}^{(\omega ,\vartheta ,\mathbf{A}%
)}\ ,
\end{equation*}%
where, for any $\omega \in \Omega $, $\vartheta \in \mathbb{R}_{0}^{+}$, $%
\mathbf{A}\in \mathbf{C}_{0}^{\infty }$ and $t\in \mathbb{R}$,
\begin{equation}
W_{t}^{(\omega ,\vartheta ,\mathbf{A})}:=\sum\limits_{x,y\in \mathfrak{L}%
}\langle \mathfrak{e}_{x},(\Delta _{\omega ,\vartheta }^{(\mathbf{A}%
)}-\Delta _{\omega ,\vartheta })\mathfrak{e}_{y}\rangle a_{x}^{\ast
}a_{y}\in \mathcal{U}^{+}\cap \mathcal{U}_{0}  \label{eq def W}
\end{equation}%
is the electromagnetic\emph{\ potential} energy observable. The finite
volume dynamics becomes non--autonomous in presence of electromagnetic
fields.

Indeed, for any time $t\in \mathbb{R}$, consider the conservative closed
symmetric derivation%
\begin{equation*}
\delta _{t}^{(\omega ,\vartheta ,\lambda ,\mathbf{A})}:=\delta ^{(\omega
,\vartheta ,\lambda )}+i\left[ W_{t}^{(\omega ,\vartheta ,\mathbf{A})},\
\cdot \ \right] \ ,
\end{equation*}%
where we recall that $\delta ^{(\omega ,\vartheta ,\lambda )}$ is the
generator of the one--parameter group $\tau ^{(\omega ,\vartheta ,\lambda
)}:=\{\tau _{t}^{(\omega ,\vartheta ,\lambda )}\}_{t\in {\mathbb{R}}}$ of $%
\ast $--auto%
\-%
morphisms, see Theorem \ref{Theorem Lieb-Robinson copy(3)}. Define also the
family
\begin{equation}
\{\mathfrak{U}_{t,s}\}_{s,t\in \mathbb{R}}\subset \mathrm{Dom}(\delta
^{(\omega ,\vartheta ,\lambda )})  \label{eq sup0}
\end{equation}%
of unitary elements by the absolutely summable series, for any $s,t\in {%
\mathbb{R}}$,
\begin{equation}
\mathfrak{U}_{t,s}:=\mathbf{1+}\sum\limits_{k\in {\mathbb{N}}}\left(
-i\right) ^{k}\int_{s}^{t}\mathrm{d}s_{1}\cdots \int_{s}^{s_{k-1}}\mathrm{d}%
s_{k}\tau _{s_{1}-t}^{(\omega ,\vartheta ,\lambda )}\left(
W_{s_{1}}^{(\omega ,\vartheta ,\mathbf{A})}\right) \cdots \tau
_{s_{k}-t}^{(\omega ,\vartheta ,\lambda )}\left( W_{s_{k}}^{(\omega
,\vartheta ,\mathbf{A})}\right) \ .  \label{eq sup1}
\end{equation}%
As explained in \cite[Eqs. (5.18)-(5.20)]{OhmI}, this series absolutely
converges in the Banach spaces $\mathcal{U}$ and%
\begin{equation*}
(\mathrm{Dom}(\delta ^{(\omega ,\vartheta ,\lambda )}),\Vert \cdot \Vert
_{\delta ^{(\omega ,\vartheta ,\lambda )}})\ ,
\end{equation*}%
$\Vert \cdot \Vert _{\delta ^{(\omega ,\vartheta ,\lambda )}}$ being the
graph norm of the closed operator $\delta ^{(\omega ,\vartheta ,\lambda )}$
with domain $\mathrm{Dom}(\delta ^{(\omega ,\vartheta ,\lambda )})$.

Now, since $\mathbf{A}\in \mathbf{C}_{0}^{\infty }$, the map $t\mapsto
W_{t}^{(\omega ,\vartheta ,\mathbf{A})}$ from $\mathbb{R}$ to $\mathcal{U}%
_{0}$ is smooth and \cite[Corollary 4.2]{brupedraLR}\ ensures the existence
of the infinite volume non--autonomous dynamics:

\begin{satz}[Non--autonomous dynamics]
\label{Theorem Lieb-Robinson}\mbox{
}\newline
Assume (\ref{(3.1) NS})--(\ref{(3.2) NS}). Let $\omega \in \Omega $, $%
\vartheta ,\lambda \in \mathbb{R}_{0}^{+}$ and $\mathbf{A}\in \mathbf{C}%
_{0}^{\infty }$. Then, there is a strongly continuous two--para%
\-%
meter family $\{\tau _{t,s}^{(\omega ,\vartheta ,\lambda ,\mathbf{A}%
)}\}_{s,t\in {\mathbb{R}}}$ of $\ast $--auto%
\-%
morphisms on $\mathcal{U}$ satisfying the following properties:\newline
\emph{(i)} Reverse \textquotedblleft cocycle\textquotedblright\ property.
\begin{equation*}
\forall s,r,t\in \mathbb{R}:\qquad \tau _{t,s}^{(\omega ,\vartheta ,\lambda ,%
\mathbf{A})}=\tau _{r,s}^{(\omega ,\vartheta ,\lambda ,\mathbf{A})}\tau
_{t,r}^{(\omega ,\vartheta ,\lambda ,\mathbf{A})}\ .
\end{equation*}%
\emph{(ii)} Non--autonomous evolution equation. It is the unique strongly
continuous two--para%
\-%
meter family of bounded operators on $\mathcal{U}$ satisfying, in the strong
sense on the dense domain $\mathcal{U}_{0}\subset \mathcal{U}$,%
\begin{equation*}
\forall s,t\in {\mathbb{R}}:\qquad \partial _{t}\tau _{t,s}^{(\omega
,\vartheta ,\lambda ,\mathbf{A})}=\tau _{t,s}^{(\omega ,\vartheta ,\lambda ,%
\mathbf{A})}\circ \delta _{t}^{(\omega ,\vartheta ,\lambda ,\mathbf{A})}\
,\qquad \tau _{s,s}^{(\omega ,\vartheta ,\lambda ,\mathbf{A})}=\mathbf{1}_{%
\mathcal{U}}\ .
\end{equation*}%
\emph{(iii)} Interaction picture. For any $s,t\in {\mathbb{R}}$,%
\begin{equation*}
\tau _{t,s}^{(\omega ,\vartheta ,\lambda ,\mathbf{A})}(B)=\tau
_{t-s}^{(\omega ,\vartheta ,\lambda )}\left( \mathfrak{U}_{t,s}^{\ast }B%
\mathfrak{U}_{t,s}\right) \ ,\qquad B\in \mathcal{U}\ .
\end{equation*}
\end{satz}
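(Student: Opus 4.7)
The plan is to take formula (iii) as the definition, i.e., to set
$$\tau _{t,s}^{(\omega ,\vartheta ,\lambda ,\mathbf{A})}(B)\;:=\;\tau _{t-s}^{(\omega ,\vartheta ,\lambda )}\!\left( \mathfrak{U}_{t,s}^{\ast }\,B\,\mathfrak{U}_{t,s}\right) ,\qquad B\in \mathcal{U},$$
and then to verify that this family satisfies (i), (ii), and is characterized uniquely by (ii). The ingredients needed are already available: Theorem \ref{Theorem Lieb-Robinson copy(3)} furnishes the $C_{0}$--group $\tau ^{(\omega ,\vartheta ,\lambda )}$ and its generator $\delta ^{(\omega ,\vartheta ,\lambda )}$, and the Dyson--type series (\ref{eq sup1}) has been noted to converge absolutely in both $\mathcal{U}$ and the graph norm of $\delta ^{(\omega ,\vartheta ,\lambda )}$.

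First I would establish the analytic properties of $\{\mathfrak{U}_{t,s}\}_{s,t\in \mathbb{R}}$: unitarity, joint strong continuity, and termwise differentiability of (\ref{eq sup1}) in both $s$ and $t$, with convergence in the graph norm of $\delta ^{(\omega ,\vartheta ,\lambda )}$. Unitarity is the crucial fact; it can either be read off directly from a Dyson--series rearrangement, or obtained by identifying $\mathfrak{U}_{t,s}$ as the unique solution in $\mathrm{Dom}(\delta ^{(\omega ,\vartheta ,\lambda )})$ of the ODE
$$i\partial _{t}\mathfrak{U}_{t,s}\;=\;\tau _{t-s}^{(\omega ,\vartheta ,\lambda )}\!\left( W_{t}^{(\omega ,\vartheta ,\mathbf{A})}\right) \mathfrak{U}_{t,s},\qquad \mathfrak{U}_{s,s}=\mathbf{1},$$
whose generator is self--adjoint. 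Unitarity then forces each $\tau _{t,s}^{(\omega ,\vartheta ,\lambda ,\mathbf{A})}$ to be a $\ast $--automorphism of $\mathcal{U}$, while joint strong continuity follows from that of $\tau ^{(\omega ,\vartheta ,\lambda )}$ together with that of $\mathfrak{U}_{t,s}$.

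To obtain the reverse cocycle property (i), I would split the time integrals in (\ref{eq sup1}) at $r$ and reorganize the resulting multi--integrals, exploiting the group law $\tau _{t-s}^{(\omega ,\vartheta ,\lambda )}=\tau _{r-s}^{(\omega ,\vartheta ,\lambda )}\tau _{t-r}^{(\omega ,\vartheta ,\lambda )}$ to recognize the two factors $\mathfrak{U}_{t,r}$ and $\mathfrak{U}_{r,s}$ (up to an intertwining by $\tau ^{(\omega ,\vartheta ,\lambda )}$); combined with the defining formula this yields $\tau _{t,s}^{(\omega ,\vartheta ,\lambda ,\mathbf{A})}=\tau _{r,s}^{(\omega ,\vartheta ,\lambda ,\mathbf{A})}\tau _{t,r}^{(\omega ,\vartheta ,\lambda ,\mathbf{A})}$. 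For the evolution equation (ii), I would differentiate the defining formula in $t$: the derivative of $\tau _{t-s}^{(\omega ,\vartheta ,\lambda )}$ contributes $\delta ^{(\omega ,\vartheta ,\lambda )}$, while the derivative of $\mathfrak{U}_{t,s}$ contributes a commutator with $\tau _{t-s}^{(\omega ,\vartheta ,\lambda )}(W_{t}^{(\omega ,\vartheta ,\mathbf{A})})$; using the intertwining $\tau _{t-s}^{(\omega ,\vartheta ,\lambda )}\circ \lbrack \tau _{-(t-s)}^{(\omega ,\vartheta ,\lambda )}(W_{t}^{(\omega ,\vartheta ,\mathbf{A})}),\,\cdot \,]=[W_{t}^{(\omega ,\vartheta ,\mathbf{A})},\,\cdot \,]\circ \tau _{t-s}^{(\omega ,\vartheta ,\lambda )}$ this rearranges into $\tau _{t,s}^{(\omega ,\vartheta ,\lambda ,\mathbf{A})}\circ \delta _{t}^{(\omega ,\vartheta ,\lambda ,\mathbf{A})}$ on the core $\mathcal{U}_{0}$. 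Uniqueness then follows from a standard Gronwall estimate applied to the difference of two candidate solutions on $\mathcal{U}_{0}$.

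The main technical obstacle is the termwise differentiation of the infinite series (\ref{eq sup1}) in the graph norm of $\delta ^{(\omega ,\vartheta ,\lambda )}$, i.e., ensuring that $\mathfrak{U}_{t,s}\in \mathrm{Dom}(\delta ^{(\omega ,\vartheta ,\lambda )})$ and that $\delta ^{(\omega ,\vartheta ,\lambda )}$ commutes with the sum. This is precisely the point where interactions — as opposed to the quasi--free setting of \cite{OhmI,OhmII} — enter non--trivially, because $\delta ^{(\omega ,\vartheta ,\lambda )}$ is an unbounded derivation whose domain one must track along the iterated time integrals. This is handled by the uniform bounds underlying the graph--norm convergence of (\ref{eq sup1}), themselves a consequence of the Lieb--Robinson machinery of Theorem \ref{Theorem Lieb-Robinson copy(3)}(iii).
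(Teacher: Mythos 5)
Your plan is mathematically viable, but it takes a genuinely different route from the paper. The paper's own proof of (i)--(ii) is a one-line citation of \cite[Corollary 4.2]{brupedraLR}; the only work actually carried out there is the verification that the hypotheses of that corollary hold for the perturbation $W_{t}^{(\omega ,\vartheta ,\mathbf{A})}$, namely that its coefficients $\mathbf{w}_{x,y}$ are uniformly bounded and Lipschitz continuous in $t$ (conditions (\ref{condition1})--(\ref{condition2}), obtained from the mean value theorem and $\mathbf{A}\in \mathbf{C}_{0}^{\infty }$), after which (iii) is read off from \cite[Eq. (145), Theorem 4.7 (ii)]{brupedraLR} together with (\ref{eq sup1}). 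You instead rebuild the construction from scratch: define $\tau _{t,s}^{(\omega ,\vartheta ,\lambda ,\mathbf{A})}$ by the interaction-picture formula, prove unitarity of the Dyson series, and derive (i)--(ii) by splitting integrals and differentiating. This is essentially what the cited corollary does internally, so it is a legitimate self-contained alternative; what the citation route buys is that strong continuity, the reverse cocycle identity and, above all, the uniqueness statement come prepackaged, and the proof burden reduces to checking regularity of the coefficients --- which your plan does not even need to mention because you take the graph-norm convergence of (\ref{eq sup1}) as given.

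Two technical points in your plan need repair. First, with the paper's convention (\ref{eq sup1}), in which the integrand carries $\tau _{s_{j}-t}^{(\omega ,\vartheta ,\lambda )}$ (anchored at $t$, not at $s$), the unitary satisfies $\partial _{t}\mathfrak{U}_{t,s}=-\delta ^{(\omega ,\vartheta ,\lambda )}(\mathfrak{U}_{t,s})-iW_{t}^{(\omega ,\vartheta ,\mathbf{A})}\mathfrak{U}_{t,s}$; the $\delta$-free ODE you wrote is the one obeyed by $\tau _{t-s}^{(\omega ,\vartheta ,\lambda )}(\mathfrak{U}_{t,s})$. Either convention can be used, but consistently: with the paper's one, the extra term $-\delta ^{(\omega ,\vartheta ,\lambda )}(\mathfrak{U}_{t,s})$ is precisely what cancels the contributions of $\delta ^{(\omega ,\vartheta ,\lambda )}$ hitting the $\mathfrak{U}$-factors when you differentiate $\tau _{t-s}^{(\omega ,\vartheta ,\lambda )}(\mathfrak{U}_{t,s}^{\ast }B\mathfrak{U}_{t,s})$, and it is the reason $\mathfrak{U}_{t,s}\in \mathrm{Dom}(\delta ^{(\omega ,\vartheta ,\lambda )})$ is needed at all. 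Second, uniqueness in (ii) does not follow from a Gronwall estimate applied directly to two candidate solutions of $\partial _{t}\tau _{t,s}=\tau _{t,s}\circ \delta _{t}^{(\omega ,\vartheta ,\lambda ,\mathbf{A})}$: the difference of two solutions applied to $B$ has derivative equal to the difference applied to $\delta _{t}^{(\omega ,\vartheta ,\lambda ,\mathbf{A})}(B)$, and iterating produces unbounded powers of the generator on $B$ rather than a closable integral inequality. One must first pass to the interaction picture (conjugate an arbitrary candidate by $\tau _{-(t-s)}^{(\omega ,\vartheta ,\lambda )}$), where the discrepancy of generators is the \emph{bounded} perturbation $i[W_{t}^{(\omega ,\vartheta ,\mathbf{A})},\cdot \,]$, and only then run the Gronwall/Dyson iteration; this step also uses differentiability of the constructed family in its second time variable. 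Both repairs are standard, which is presumably why the paper delegates them to \cite[Corollary 4.2]{brupedraLR}.
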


\begin{proof}
(i)--(ii) is a direct application of \cite[Corollary 4.2]{brupedraLR}. We
omit the details. Moreover, for any $\mathbf{A}\in \mathbf{C}_{0}^{\infty }$%
, the coefficients
\begin{equation}
\mathbf{w}_{x,y}\left( \eta ,t\right) :=\langle \mathfrak{e}_{x},\vartheta
(\Delta _{\omega }^{(\eta \mathbf{A})}-\Delta _{\omega })\mathfrak{e}%
_{y}\rangle \ ,\qquad x,y\in \mathfrak{L}\ ,  \notag
\end{equation}%
of the electromagnetic potential energy observable $W_{t}^{(\omega
,\vartheta ,\eta \mathbf{A})}$ are complex--valued functions of $(\eta
,t)\in \mathbb{R}^{2}$ that satisfy
\begin{equation}
\overline{\mathbf{w}_{x,y}\left( \eta ,t\right) }=\mathbf{w}_{y,x}\left(
\eta ,t\right) \qquad \text{and}\qquad \mathbf{w}_{x,x+z}(0,t)=0
\label{condition1}
\end{equation}%
for all $x,y,z\in \mathfrak{L}$ and $(\eta ,t)\in \mathbb{R}^{2}$. By the
mean value theorem, $\{\mathbf{w}_{x,y}\}_{x,y\in \mathfrak{L}}$ are also
uniformly bounded and Lipschitz continuous (as functions of times): There is
a constant $D\in \mathbb{R}^{+}$ such that, for all $\vartheta _{0}\in
\mathbb{R}^{+}$, $\omega \in \Omega $, $\vartheta \in \lbrack 0,\vartheta
_{0}]$, $\eta \in \mathbb{R}$, and $s,t\in \mathbb{R}$,%
\begin{equation}
\sup_{x,y\in \mathfrak{L}}\left\vert \mathbf{w}_{x,y}(\eta ,t)-\mathbf{w}%
_{x,y}(\eta ,s)\right\vert \leq D\eta \left\vert t-s\right\vert \ ,\quad
\sup_{x,y\in \mathfrak{L}}\sup_{\eta ,t\in \mathbb{R}}\left\vert \mathbf{w}%
_{x,y}(\eta ,t)\right\vert \leq D\ .  \label{condition2}
\end{equation}%
From (\ref{eq sup1}) and \cite[Eq. (145), Theorem 4.7 (ii)]{brupedraLR}, we
arrive at Assertion (iii).
\end{proof}

Again, for $\Psi ^{\mathrm{IP}},\vartheta =0$, $\{\tau _{t,s}^{(\omega
,\vartheta ,\lambda ,\mathbf{A})}\}_{t\geq s}$ is the two--parameter family
of Bogoliubov automorphisms of $\mathcal{U}$ described in \cite%
{OhmI,OhmII,OhmIII,OhmIV} for homogeneous hopping terms.

\subsection{Time--Dependent State of the System\label{Section initia states}}

Thermal equilibrium states are defined here to be completely passive states.
This definition \cite[Definitions 1.1, 1.3]{PW} is based on the 2nd law of
thermodynamics. In \cite{OhmVI}, the complete passivity of states will be
discussed with much more details. By \cite[Theorem 1.4]{PW}, such states are
$(\tau ^{(\omega ,\vartheta ,\lambda )},\beta )$--KMS states for some
inverse temperature, or time scale (cf. \cite{OhmVI}), $\beta \in \left[
0,\infty \right] $.

The case $\beta =0$ corresponds to the $\tau ^{(\omega ,\vartheta ,\lambda
)} $--invariant traces, also called chaotic states, whereas the limiting
case $\beta =\infty $ refers to ground states. For simplicity, we exclude
both extreme cases and only consider fermion systems at finite inverse
temperature $\beta \in \mathbb{R}^{+}$. Note also that,\ in some situations,
the parameter $\beta $ may not be arbitrarily chosen, as illustrated in \cite%
[Example 5.3.27.]{BratteliRobinson}. In fact, it is not even a priori clear
that thermal equilibrium states, in the above sense, exist for arbitrary
interacting fermion systems. As already mentioned above, the dynamics $\tau
^{(\omega ,\vartheta ,\lambda )}$ is such that, for all $\beta \in \mathbb{R}%
^{+}$, there is at least one $(\tau ^{(\omega ,\vartheta ,\lambda )},\beta )$%
--KMS state.

Indeed, by Theorem \ref{Theorem Lieb-Robinson copy(3)} (i), the continuous
group $\{\tau _{t}^{(\omega ,\vartheta ,\lambda ,L)}\}_{t\in {\mathbb{R}}}$
of $\ast $--auto%
\-%
morphisms of $\mathcal{U}$ defined by (\ref{definition fininte vol dynam})
converges strongly to the $C_{0}$--group $\tau ^{(\omega ,\vartheta ,\lambda
)}:=\{\tau _{t}^{(\omega ,\vartheta ,\lambda )}\}_{t\in {\mathbb{R}}}$.
Moreover, it is well--known that, at finite volume and any inverse
temperature $\beta \in \mathbb{R}^{+}$, the corresponding Gibbs state is the
unique KMS state associated with $\{\tau _{t}^{(\omega ,\vartheta ,\lambda
,L)}\}_{t\in {\mathbb{R}}}$. Hence, by \cite[Proposition 5.3.25.]%
{BratteliRobinson}, there is a $(\tau ^{(\omega ,\vartheta ,\lambda )},\beta
)$--KMS state $\varrho ^{(\beta ,\omega ,\vartheta ,\lambda )}$\ for every $%
\beta \in \mathbb{R}^{+}$. Depending on the interaction $\Psi ^{\mathrm{IP}}$
in $\Psi ^{(\omega ,\vartheta )}$, the sequence of Gibbs states can have a
priori several weak$^{\ast }$--accumulation points and all these limit
states are $(\tau ^{(\omega ,\vartheta ,\lambda )},\beta )$--KMS. As a
consequence, the so--called KMS condition (and thus the completely passivity
of states) does not \emph{uniquely} define the thermal equilibrium state of
the system in infinite volume.

It is easy to check that the set $\mathfrak{Q}^{(\beta ,\omega ,\vartheta
,\lambda )}\subset \mathcal{U}^{\ast }$ of $(\tau ^{(\omega ,\vartheta
,\lambda )},\beta )$--KMS states is a non--empty, sequentially weak$^{\ast }$%
--compact and convex set. For any $\beta \in \mathbb{R}^{+}$, $\omega \in
\Omega $ and $\vartheta ,\lambda \in \mathbb{R}_{0}^{+}$, $\varrho ^{(\beta
,\omega ,\vartheta ,\lambda )}$ is, by definition, an arbitrary element of $%
\mathfrak{Q}^{(\beta ,\omega ,\vartheta ,\lambda )}$ fixed once and for all.
It represents a thermal equilibrium state of the system before the
electromagnetic field is switched on.

Since $\mathbf{A}(t,x)=0$ for all $t\leq t_{0}$, similar to \cite%
{OhmI,OhmII,OhmIII,OhmIV}, the time evolution of the state of the system
thus equals%
\begin{equation}
\rho _{t}^{(\beta ,\omega ,\vartheta ,\lambda ,\mathbf{A})}:=\left\{
\begin{array}{lll}
\varrho ^{(\beta ,\omega ,\vartheta ,\lambda )} & , & \qquad t\leq t_{0}\ ,
\\
\varrho ^{(\beta ,\omega ,\vartheta ,\lambda )}\circ \tau
_{t,t_{0}}^{(\omega ,\vartheta ,\lambda ,\mathbf{A})} & , & \qquad t\geq
t_{0}\ ,%
\end{array}%
\right.  \label{time dependent state}
\end{equation}%
for any $\beta \in \mathbb{R}^{+}$, $\omega \in \Omega $, $\vartheta
,\lambda \in \mathbb{R}_{0}^{+}$ and $\mathbf{A}\in \mathbf{C}_{0}^{\infty }$%
. By stationarity of KMS states, the definition does not depend on the
particular choice of initial time $t_{0}$ with $\mathbf{A}(t,x)=0$ for all $%
t\leq t_{0}$. This time--dependent state is generally not quasi--free unless
$\Psi ^{\mathrm{IP}}=0$ in $\Psi ^{(\omega ,\vartheta )}$.

\section{Heat Production and Current Linear Response\label{Heat Production
and Current Linear Response}}

In this section we extend \emph{all} results of \cite{OhmI,OhmII} to fermion
systems with short--range interactions. This is possible because of
Lieb--Robinson bounds for multi--commutators, which yield \cite[Theorems
4.8--4.9]{brupedraLR}. By using these results, we can follow the arguments
of \cite{OhmI,OhmII} to prove, \emph{exactly in the same way}, all their
assertions in the interacting case. Therefore, we refrain from giving all
the detailed proofs and we shorten our discussions by referring to \cite%
{OhmI,OhmII} for details.

\subsection{Heat Production\label{Section Heat Production}}

Similar to \cite{OhmI} we analyze the effect of electromagnetic fields in
terms of heat production within the fermion system. This study is related to
Joule's law, which describes the rate at which resistance converts electric
energy into heat. Its mathematical formulation requires Araki's notion of
\emph{relative entropy} \cite{Araki1,Araki2}. See \cite[Section A.1]{OhmI}
for a concise account on the relative entropy in $C^{\ast }$--algebras.

In the case of the $C^{\ast }$--algebra $\mathcal{U}$, the quantum relative
entropy takes a simple form by using the net of finite dimensional $C^{\ast
} $--algebras $\{\mathcal{U}_{\Lambda }\}_{\Lambda \in \mathcal{P}_{f}(%
\mathfrak{L})}$ generating $\mathcal{U}$. Let $\Lambda \in \mathcal{P}_{f}(%
\mathfrak{L})$ and denote by $\mathrm{tr}$ the normalized trace on $\mathcal{%
U}_{\Lambda }$, also named the tracial state of $\mathcal{U}_{\Lambda }$.
For any state $\rho _{\Lambda }\in \mathcal{U}_{\Lambda }^{\ast }$, there is
a unique adjusted density matrix $\mathrm{d}_{\rho _{\Lambda }}\in \mathcal{U%
}$, that is, $\mathrm{d}_{\rho _{\Lambda }}\geq 0$, $\mathrm{tr}(\mathrm{d}%
_{\rho _{\Lambda }})=1$ and $\rho _{\Lambda }(A)=\mathrm{tr}(\mathrm{d}%
_{\rho _{\Lambda }}A)$ for all $A\in \mathcal{U}_{\Lambda }$. See, for
instance, \cite[Lemma 3.1 (i)]{Araki-Moriya}. We define by $\mathrm{supp}%
(\rho _{\Lambda })$ the smallest projection $\mathrm{P}\in \mathcal{U}%
_{\Lambda }$ such that $\rho _{\Lambda }(\mathrm{P})=1$. Then, the relative
entropy of a state $\rho _{1,\Lambda }\in \mathcal{U}_{\Lambda }^{\ast }$
w.r.t. $\rho _{2,\Lambda }\in \mathcal{U}_{\Lambda }^{\ast }$ is defined by%
\begin{equation*}
S_{\mathcal{U}_{\Lambda }}\left( \rho _{1,\Lambda }|\rho _{2,\Lambda
}\right) :=\left\{
\begin{array}{lll}
\rho _{1,\Lambda }\left( \ln \mathrm{d}_{\rho _{1,\Lambda }}-\ln \mathrm{d}%
_{\rho _{2,\Lambda }}\right) & , & \text{if }\mathrm{supp}\left( \rho
_{2,\Lambda }\right) \geq \mathrm{supp}\left( \rho _{1,\Lambda }\right) , \\
+\infty & , & \text{otherwise}\ ,%
\end{array}%
\right.
\end{equation*}%
under the convention $x\ln x|_{x=0}:=0$. We then define the relative entropy
of any state $\rho _{1}\in \mathcal{U}^{\ast }$ w.r.t. $\rho _{2}\in
\mathcal{U}^{\ast }$ by
\begin{equation*}
\mathrm{S}\left( \rho _{1}|\rho _{2}\right) :=\underset{L\rightarrow \infty }%
{\lim }S_{\mathcal{U}_{\Lambda _{L}}}\left( \rho _{1,\Lambda _{L}}|\rho
_{2,\Lambda _{L}}\right) =\underset{L\in \mathbb{R}^{+}}{\sup }S_{\mathcal{U}%
_{\Lambda _{L}}}\left( \rho _{1,\Lambda _{L}}|\rho _{2,\Lambda _{L}}\right)
\in \left[ 0,\infty \right]
\end{equation*}%
with $\rho _{1,\Lambda _{L}}$ and $\rho _{2,\Lambda _{L}}$ being the
restrictions to $\mathcal{U}_{\Lambda _{L}}$ of the states $\rho _{1}$ and $%
\rho _{2}$, respectively. As discussed in \cite{OhmI}, this limit exists and
equals Araki's relative entropy. In particular, it is a non--negative
(possibly infinite) quantity. With this, the \emph{heat production} is
defined as follows:

\begin{definition}[Heat production]
\label{Heat production definition}\mbox{ }\newline
For any $\beta \in \mathbb{R}^{+}$, $\omega \in \Omega $, $\vartheta
,\lambda \in \mathbb{R}_{0}^{+}$ and $\mathbf{A}\in \mathbf{C}_{0}^{\infty }$%
, $\mathbf{Q}^{(\omega ,\mathbf{A})}\equiv \mathbf{Q}^{(\beta ,\omega
,\vartheta ,\lambda ,\mathbf{A})}$ is defined as a map from $\mathbb{R}$ to $%
\overline{\mathbb{R}}$ by%
\begin{equation*}
\mathbf{Q}^{(\omega ,\mathbf{A})}\left( t\right) :=\beta ^{-1}\mathrm{S}%
(\rho _{t}^{(\beta ,\omega ,\vartheta ,\lambda ,\mathbf{A})}|\varrho
^{(\beta ,\omega ,\vartheta ,\lambda )})\in \left[ 0,\infty \right] \ .
\end{equation*}
\end{definition}

In \cite[Theorem 3.2]{OhmI}, that is, for $\Psi ^{\mathrm{IP}},\vartheta =0$
in the definition of $\Psi ^{(\omega ,\vartheta )}$, we prove the 1st law of
thermodynamics for the system under consideration, implying that the heat
production generated by electromagnetic fields is exactly the increase of
the \emph{internal} energy resulting from the modification of the (infinite
volume) state of the system. Such a result is generalized for the setting
considered here. Indeed, the arguments of \cite[Sections 5.3--5.4]{OhmI} --
which proves \cite[Theorem 3.2]{OhmI} -- still work, provided interparticle
interactions have some sufficiently fast polynomial decay. More precisely,
this means the following condition on the positive and non--increasing
function $\mathbf{F}:\mathbb{R}_{0}^{+}\rightarrow \mathbb{R}^{+}$:

\begin{itemize}
\item \emph{Polynomial decay.} There is a constant $\varsigma >2d$ and, for
all $m\in \mathbb{N}_{0}$, an absolutely summable sequence $\{\mathbf{u}%
_{n,m}\}_{n\in \mathbb{N}}\in \ell ^{1}(\mathbb{N})$ such that, for all $%
n\in \mathbb{N}$ with $n>m$,%
\begin{equation}
|\Lambda _{n}\backslash \Lambda _{n-1}|\sum_{z\in \Lambda _{m}}\max_{y\in
\Lambda _{n}\backslash \Lambda _{n-1}}\mathbf{F}\left( \left\vert
z-y\right\vert \right) \leq \frac{\mathbf{u}_{n,m}}{\left( 1+n\right)
^{\varsigma }}\text{ }.  \label{(3.3) NS}
\end{equation}
\end{itemize}

\noindent Examples of functions $\mathbf{F}:\mathbb{R}_{0}^{+}\rightarrow
\mathbb{R}^{+}$ satisfying (\ref{(3.1) NS})--(\ref{(3.2) NS}) and (\ref%
{(3.3) NS}) are obviously given by (\ref{example polynomial}) for
sufficiently large parameters $\epsilon \in \mathbb{R}^{+}$. Observe that
Condition (\ref{(3.3) NS}) is assumed in \cite[Theorem 4.8]{brupedraLR}.

To present now the result, in particular the derivation of the 1st law of
thermodynamics, similar to \cite[Section 3.2]{OhmI} we need to define the
\emph{total}, \emph{internal}, and electromagnetic\emph{\ potential}
energies: For any $\beta \in \mathbb{R}^{+}$, $\omega \in \Omega $, $%
\vartheta ,\lambda \in \mathbb{R}_{0}^{+}$, $\mathbf{A}\in \mathbf{C}%
_{0}^{\infty }$ and $t\in \mathbb{R}$, the \emph{total} energy increment due
to the interaction of the charged fermions with the electromagnetic field
equals%
\begin{eqnarray}
&&\lim_{L\rightarrow \infty }\left\{ \rho _{t}^{(\beta ,\omega ,\vartheta
,\lambda ,\mathbf{A})}(H_{L}^{(\omega ,\vartheta ,\lambda )}+W_{t}^{(\omega
,\vartheta ,\mathbf{A})})-\varrho ^{(\beta ,\omega ,\vartheta ,\lambda
)}(H_{L}^{(\omega ,\vartheta ,\lambda )})\right\}  \notag \\
&=&\mathbf{S}^{(\omega ,\mathbf{A})}\left( t\right) +\mathbf{P}^{(\omega ,%
\mathbf{A})}\left( t\right) \ ,  \label{lim_en_incr full}
\end{eqnarray}%
where $H_{L}^{(\omega ,\vartheta ,\lambda )}$ and $W_{t}^{(\omega ,\vartheta
,\mathbf{A})}$ are respectively the internal energy observable (\ref{def H
loc}) and the electromagnetic\emph{\ }potential energy observable (\ref{eq
def W}). Here, $\mathbf{S}^{(\omega ,\mathbf{A})}\equiv \mathbf{S}^{(\beta
,\omega ,\vartheta ,\lambda ,\mathbf{A})}$ is the \emph{internal} energy
increment defined by%
\begin{equation}
\mathbf{S}^{(\omega ,\mathbf{A})}\left( t\right) :=\lim_{L\rightarrow \infty
}\left\{ \rho _{t}^{(\beta ,\omega ,\vartheta ,\lambda ,\mathbf{A}%
)}(H_{L}^{(\omega ,\vartheta ,\lambda )})-\varrho ^{(\beta ,\omega
,\vartheta ,\lambda )}(H_{L}^{(\omega ,\vartheta ,\lambda )})\right\} \ ,
\label{entropic energy increment}
\end{equation}%
while the electromagnetic \emph{potential} energy (increment) $\mathbf{P}%
^{(\omega ,\mathbf{A})}\equiv \mathbf{P}^{(\beta ,\omega ,\vartheta ,\lambda
,\mathbf{A})}$ is%
\begin{equation}
\mathbf{P}^{(\omega ,\mathbf{A})}\left( t\right) :=\rho _{t}^{(\beta ,\omega
,\vartheta ,\lambda ,\mathbf{A})}(W_{t}^{(\omega ,\vartheta ,\mathbf{A}%
)})=\rho _{t}^{(\beta ,\omega ,\vartheta ,\lambda ,\mathbf{A}%
)}(W_{t}^{(\omega ,\vartheta ,\mathbf{A})})-\varrho ^{(\beta ,\omega
,\vartheta ,\lambda )}(W_{t_{0}}^{(\omega ,\vartheta ,\mathbf{A})})
\label{electro free energy}
\end{equation}%
for any $\beta \in \mathbb{R}^{+}$, $\omega \in \Omega $, $\vartheta
,\lambda \in \mathbb{R}_{0}^{+}$, $\mathbf{A}\in \mathbf{C}_{0}^{\infty }$
and $t\in \mathbb{R}$.

Then, by using Lieb--Robinson bounds for multi--com%
\-%
mutators \cite[Theorems 3.8--3.9]{brupedraLR} of order three in a decisive
way, we extend \cite[Theorems 3.2, 5.8]{OhmI} to fermion systems with
short--range interactions:

\begin{satz}[Heat, electromagnetic work and 1st law of thermodynamics]
\label{main 1 copy(1)}\mbox{
}\newline
Assume (\ref{(3.1) NS})--(\ref{(3.2) NS}) and (\ref{(3.3) NS}). Let $\beta
\in \mathbb{R}^{+}$, $\omega \in \Omega $, $\vartheta ,\lambda \in \mathbb{R}%
_{0}^{+}$ and $\mathbf{A}\in \mathbf{C}_{0}^{\infty }$.\newline
\emph{(i)} 1st law of thermodynamics: For any $B\in \mathcal{U}_{0}$ and $%
t\in \mathbb{R}$,%
\begin{equation*}
\delta ^{(\omega ,\vartheta ,\lambda )}\circ \tau _{t}^{(\omega ,\vartheta
,\lambda )}\left( B\right) =\lim_{L\rightarrow \infty }i[H_{L}^{(\omega
,\vartheta ,\lambda )},\tau _{t}^{(\omega ,\vartheta ,\lambda )}\left(
B\right) ]\in \mathcal{U}\
\end{equation*}%
and, for any $t\geq t_{0}$,%
\begin{equation*}
\mathbf{Q}^{(\omega ,\mathbf{A})}\left( t\right) =\mathbf{S}^{(\omega ,%
\mathbf{A})}\left( t\right) \in \mathbb{R}_{0}^{+}\ .
\end{equation*}%
\emph{(ii)} Total energy increment and electromagnetic work:\ For any $t\geq
t_{0}$,
\begin{equation*}
\mathbf{S}^{(\omega ,\mathbf{A})}\left( t\right) +\mathbf{P}^{(\omega ,%
\mathbf{A})}\left( t\right) =\int_{t_{0}}^{t}\rho _{s}^{(\beta ,\omega
,\vartheta ,\lambda ,\mathbf{A})}\left( \partial _{s}W_{s}^{(\omega
,\vartheta ,\mathbf{A})}\right) \mathrm{d}s\ .
\end{equation*}%
In particular, the maps $\mathbf{Q}^{(\omega ,\mathbf{A})}$ and $\mathbf{S}%
^{(\omega ,\mathbf{A})}$ respectively defined by Definition \ref{Heat
production definition} and (\ref{entropic energy increment}) take always
positive and finite values for all times.
\end{satz}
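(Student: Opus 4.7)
The plan is to adapt \cite[Sections 5.3--5.4]{OhmI}, replacing the explicit CAR computations that were available there for quasi--free dynamics by norm estimates obtained from Theorem \ref{Theorem Lieb-Robinson copy(3)} (iii) and from the multi--commutator Lieb--Robinson bounds \cite[Theorems 3.8--3.9]{brupedraLR} of order up to three. The polynomial decay condition (\ref{(3.3) NS}) is used precisely to interchange thermodynamic limits with differentiation in $t$ and with the approximation of the relative entropy.

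For the first identity of (i), I would use Theorem \ref{Theorem Lieb-Robinson copy(3)} (ii) to write
\[
\delta^{(\omega,\vartheta,\lambda)}\!\left(\tau_t^{(\omega,\vartheta,\lambda)}(B)\right)-i\bigl[H_L^{(\omega,\vartheta,\lambda)},\tau_t^{(\omega,\vartheta,\lambda)}(B)\bigr]
\]
as a sum of commutators $i[\Psi_\Lambda^{\mathrm{IP}},\tau_t^{(\omega,\vartheta,\lambda)}(B)]$ over $\Lambda\not\subset\Lambda_L$, plus analogous hopping and one--site potential tails indexed by lattice points outside $\Lambda_L$. Each such commutator is controlled in norm by Theorem \ref{Theorem Lieb-Robinson copy(3)} (iii), whose geometric factor $\sum_{x\in\mathrm{supp}(B)}\sum_{y\in\Lambda}\mathbf{F}(|x-y|)$, combined with $\Psi^{\mathrm{IP}}\in\mathcal{W}$ (cf.\ (\ref{iteration0})) and the summability (\ref{(3.1) NS}), makes the tail sum vanish in norm as $L\to\infty$, yielding the claimed limit.

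For the equality $\mathbf{Q}^{(\omega,\mathbf{A})}(t)=\mathbf{S}^{(\omega,\mathbf{A})}(t)$, I would start at finite volume $L$ large enough that $\mathrm{supp}(\mathbf{A})\subset\Lambda_L$. On that scale, the perturbed dynamics inside $\mathcal{U}_{\Lambda_L}$ is unitarily implemented, and a direct manipulation with the finite--volume Gibbs state $\varrho_L^{\mathrm{Gibbs}}$, whose adjusted density matrix is proportional to $e^{-\beta H_L^{(\omega,\vartheta,\lambda)}}$, gives the exact finite--volume first law
\[
\beta^{-1}S_{\mathcal{U}_{\Lambda_L}}\!\left(\varrho_L^{\mathrm{Gibbs}}\circ\tau_{t,t_0}^{(L,\mathbf{A})}\,\big|\,\varrho_L^{\mathrm{Gibbs}}\right)=\varrho_L^{\mathrm{Gibbs}}\!\left(\tau_{t,t_0}^{(L,\mathbf{A})}(H_L^{(\omega,\vartheta,\lambda)})-H_L^{(\omega,\vartheta,\lambda)}\right).
\]
The main obstacle is then to pass to $L\to\infty$ while simultaneously replacing $\varrho_L^{\mathrm{Gibbs}}$ by the fixed infinite--volume KMS state $\varrho^{(\beta,\omega,\vartheta,\lambda)}$ and $\tau_{t,t_0}^{(L,\mathbf{A})}$ by $\tau_{t,t_0}^{(\omega,\vartheta,\lambda,\mathbf{A})}$. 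Following \cite[Sections 5.3--5.4]{OhmI}, this is accomplished by using multi--commutator Lieb--Robinson bounds of order three together with (\ref{(3.3) NS}) to estimate the difference $\tau_{t,t_0}^{(L,\mathbf{A})}(H_L^{(\omega,\vartheta,\lambda)})-\tau_{t,t_0}^{(\omega,\vartheta,\lambda,\mathbf{A})}(H_L^{(\omega,\vartheta,\lambda)})$ and the corresponding entropy corrections uniformly in $L$.

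For (ii), I would differentiate $s\mapsto\rho_s^{(\beta,\omega,\vartheta,\lambda,\mathbf{A})}(H_L^{(\omega,\vartheta,\lambda)}+W_s^{(\omega,\vartheta,\mathbf{A})})$ for $L$ with $\mathrm{supp}(\mathbf{A})\subset\Lambda_L$: the chain rule yields $\rho_s(\delta_s^{(\omega,\vartheta,\lambda,\mathbf{A})}(H_L^{(\omega,\vartheta,\lambda)}+W_s^{(\omega,\vartheta,\mathbf{A})}))+\rho_s(\partial_s W_s^{(\omega,\vartheta,\mathbf{A})})$, in which the generator term depends only on interactions straddling $\partial\Lambda_L$ and vanishes uniformly in $s\in[t_0,t]$ as $L\to\infty$ by Lieb--Robinson bounds. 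Integrating over $[t_0,t]$, passing to $L\to\infty$, and using (\ref{lim_en_incr full})--(\ref{electro free energy}) yields the identity. Non--negativity of $\mathbf{Q}^{(\omega,\mathbf{A})}$ then follows from that of the relative entropy together with (i), while finiteness of both $\mathbf{Q}^{(\omega,\mathbf{A})}$ and $\mathbf{S}^{(\omega,\mathbf{A})}$ follows from (ii), since $\mathbf{P}^{(\omega,\mathbf{A})}$ and $\int_{t_0}^t\rho_s(\partial_s W_s^{(\omega,\vartheta,\mathbf{A})})\mathrm{d}s$ are manifestly finite by the smoothness and compact spatial support of $\mathbf{A}\in\mathbf{C}_0^{\infty}$.
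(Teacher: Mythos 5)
Your toolbox (Lieb--Robinson bounds, the order--three multi--commutator estimates, Condition (\ref{(3.3) NS})) is the right one, but two of your main steps have genuine gaps. For the identity $\delta ^{(\omega ,\vartheta ,\lambda )}\circ \tau _{t}^{(\omega ,\vartheta ,\lambda )}(B)=\lim_{L}i[H_{L}^{(\omega ,\vartheta ,\lambda )},\tau _{t}^{(\omega ,\vartheta ,\lambda )}(B)]$ you invoke Theorem \ref{Theorem Lieb-Robinson copy(3)} (ii) to expand the left--hand side as a commutator series evaluated at $\tau _{t}^{(\omega ,\vartheta ,\lambda )}(B)$ and then estimate the tail. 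But part (ii) gives that series only on the core $\mathcal{U}_{0}$, and $\tau _{t}^{(\omega ,\vartheta ,\lambda )}(B)\notin \mathcal{U}_{0}$; the series does converge absolutely at $\tau _{t}^{(\omega ,\vartheta ,\lambda )}(B)$ by part (iii), but that its sum equals $\delta ^{(\omega ,\vartheta ,\lambda )}(\tau _{t}^{(\omega ,\vartheta ,\lambda )}(B))$ is exactly the non--trivial point (the cheap route $\delta \circ \tau _{t}=\tau _{t}\circ \delta $ only yields $\lim_{L}i[\tau _{t}(H_{L}^{(\omega ,\vartheta ,\lambda )}),\tau _{t}(B)]$, with the Hamiltonian time--evolved). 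The paper closes this by proving that $\{\delta ^{(\omega ,\vartheta ,\lambda )}\circ \tau _{t}^{(\omega ,\vartheta ,\lambda ,L)}(B)\}_{L}$ is a Cauchy net -- this is where \cite[Theorems 3.8--3.9]{brupedraLR} and Condition (\ref{(3.3) NS}) actually enter -- and then using closedness of $\delta ^{(\omega ,\vartheta ,\lambda )}$ together with the strong convergence $\tau _{t}^{(\omega ,\vartheta ,\lambda ,L)}\rightarrow \tau _{t}^{(\omega ,\vartheta ,\lambda )}$. You announce these tools in your preamble but never deploy them at the step that needs them.

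For $\mathbf{Q}^{(\omega ,\mathbf{A})}(t)=\mathbf{S}^{(\omega ,\mathbf{A})}(t)$ the paper does not pass through finite--volume Gibbs states: it establishes $\mathbf{S}^{(\omega ,\mathbf{A})}(t)=-i\varrho ^{(\beta ,\omega ,\vartheta ,\lambda )}(\mathfrak{U}_{t,t_{0}}^{\ast }\delta ^{(\omega ,\vartheta ,\lambda )}(\mathfrak{U}_{t,t_{0}}))$ (which needs $\delta ^{(\omega ,\vartheta ,\lambda )}(\mathfrak{U}_{t,t_{0}})=\lim_{L}i[H_{L}^{(\omega ,\vartheta ,\lambda )},\mathfrak{U}_{t,t_{0}}]$, obtained from the previous step via the series (\ref{eq sup1}) and the derivation property) and then applies the entropy--production formula of \cite[Theorem 1.1]{JaksicPillet} directly in infinite volume. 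Your finite--volume route faces two obstructions you do not address: the fixed state $\varrho ^{(\beta ,\omega ,\vartheta ,\lambda )}$ is an \emph{arbitrary} element of the (generally non--singleton) set of $(\tau ^{(\omega ,\vartheta ,\lambda )},\beta )$--KMS states and need not be a limit of the finite--volume Gibbs states; and Araki's relative entropy is only lower semicontinuous, so norm estimates on observables or states do not control the ``entropy corrections'' you invoke when swapping $\varrho _{L}^{\mathrm{Gibbs}}$ for $\varrho ^{(\beta ,\omega ,\vartheta ,\lambda )}$. Finally, in (ii) the observable $\delta _{s}^{(\omega ,\vartheta ,\lambda ,\mathbf{A})}(H_{L}^{(\omega ,\vartheta ,\lambda )}+W_{s}^{(\omega ,\vartheta ,\mathbf{A})})$ is a sum of order $L^{d-1}$ boundary terms of order one and does \emph{not} vanish in norm as $L\rightarrow \infty $; only its expectation in $\rho _{s}^{(\beta ,\omega ,\vartheta ,\lambda ,\mathbf{A})}$ vanishes, and that uses the $\tau ^{(\omega ,\vartheta ,\lambda )}$--invariance of the KMS state combined with the locality of $\mathfrak{U}_{t,t_{0}}$ (as in \cite[Theorem 5.8]{OhmI}), not Lieb--Robinson bounds alone.
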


\begin{proof}
To simplify, we fix $\beta \in \mathbb{R}^{+}$, $\omega \in \Omega $, $%
\vartheta ,\lambda \in \mathbb{R}_{0}^{+}$, $\mathbf{A}\in \mathbf{C}%
_{0}^{\infty }$ and use the notation $\delta \equiv \delta ^{(\omega
,\vartheta ,\lambda )}$, $\tau _{t}\equiv \tau _{t}^{(\omega ,\vartheta
,\lambda )}$, $\tau _{t}^{(L)}\equiv \tau _{t}^{(\omega ,\vartheta ,\lambda
,L)}$, and
\begin{equation}
\delta ^{(L)}\left( B\right) :=i[H_{L}^{(\omega ,\vartheta ,\lambda )},B]\
,\qquad B\in \mathcal{U}\ .  \label{delta L}
\end{equation}%
See (\ref{definition fininte vol dynam}) and Theorem \ref{Theorem
Lieb-Robinson copy(3)}.\smallskip

\noindent (i) By using the stationarity of the KMS state $\varrho ^{(\beta
,\omega ,\vartheta ,\lambda )}$ w.r.t. the unperturbed dynamics as well as
Theorem \ref{Theorem Lieb-Robinson} (iii), we infer from (\ref{time
dependent state}) that, for any $t\geq t_{0}$,
\begin{equation}
\rho _{t}^{(\beta ,\omega ,\vartheta ,\lambda ,\mathbf{A})}\left( B\right)
=\varrho ^{(\beta ,\omega ,\vartheta ,\lambda )}\left( \mathfrak{U}%
_{t,t_{0}}^{\ast }B\mathfrak{U}_{t,t_{0}}\right) \ ,\qquad B\in \mathcal{U}\
.  \label{interaction picture}
\end{equation}%
Recall that $\{\mathfrak{U}_{t,t_{0}}\}_{t\geq t_{0}}\subset \mathrm{Dom}%
(\delta )$, see (\ref{eq sup0})--(\ref{eq sup1}). Assume that
\begin{equation}
\delta \left( \mathfrak{U}_{t,t_{0}}\right) =\lim_{L\rightarrow \infty
}\delta ^{(L)}\left( \mathfrak{U}_{t,t_{0}}\right) \in \mathcal{U}\ .
\label{eq pas trivial}
\end{equation}%
Then, using this together with (\ref{interaction picture}) and the
continuity of states, one gets%
\begin{equation}
\mathbf{S}^{(\omega ,\mathbf{A})}\left( t\right) =-i\varrho ^{(\beta ,\omega
,\vartheta ,\lambda )}\left( \mathfrak{U}_{t,t_{0}}^{\ast }\delta \left(
\mathfrak{U}_{t,t_{0}}\right) \right) \in \mathbb{R}\ ,
\label{Theo int pictEq}
\end{equation}%
similar to \cite[Theorem 5.5]{OhmI}. Since, by definition, $\varrho ^{(\beta
,\omega ,\vartheta ,\lambda )}$ is a $(\tau ^{(\omega ,\vartheta ,\lambda
)},\beta )$--KMS state, \cite[Theorem 1.1]{JaksicPillet} implies from (\ref%
{Theo int pictEq}) that, for any $t\geq t_{0}$,
\begin{equation*}
\mathbf{S}^{(\omega ,\mathbf{A})}\left( t\right) =\mathbf{Q}^{(\omega ,%
\mathbf{A})}\left( t\right) \geq 0\ .
\end{equation*}
In other words, we obtain the 1st law of thermodynamics in the same way one
gets \cite[Theorems 5.3, 5.5, Corollaries 5.6--5.7]{OhmI}, \emph{provided (%
\ref{eq pas trivial}) holds true.}

Equation (\ref{eq pas trivial}) is not trivial at all in the general case.
Indeed, as one can see from (\ref{eq sup1}), $\mathfrak{U}_{t,t_{0}}\in
\mathrm{Dom}(\delta )$ is generally not in $\mathcal{U}_{0}$ and, by Theorem %
\ref{Theorem Lieb-Robinson copy(3)} (ii), we only know so far that
\begin{equation}
\delta \left( B\right) =\lim_{L\rightarrow \infty }\delta ^{(L)}\left(
B\right) \in \mathcal{U}\ ,\qquad B\in \mathcal{U}_{0}\ ,  \label{eq sup2}
\end{equation}%
$\mathcal{U}_{0}$ being a core for $\delta $. By using (\ref{(3.3) NS}),
Equation (\ref{eq sup2}) can however be extended to all elements of $\tau
_{t}(\mathcal{U}_{0})$ for any $t\in \mathbb{R}$, as explained below.

Indeed, using in the autonomous case very similar arguments to \cite[Eqs.
(120)--(128)]{brupedraLR} we can apply Lieb--Robinson bounds for multi--com%
\-%
mutators of order three, i.e., \cite[Theorems 3.8--3.9]{brupedraLR},
together with the proof of \cite[Lemma 3.2]{brupedraLR} (in particular \cite[%
Eqs. (23)--(27)]{brupedraLR}) to deduce that $\{\delta \circ \tau
_{t}^{(L)}\left( B\right) \}_{L\in \mathbb{R}^{+}}$ is a Cauchy net within
the complete space $\mathcal{U}$ for any $B\in \mathcal{U}_{0}$. These
arguments are rather long to write down and there is no reason to reproduce
them here again in detail. Note only that they can be applied because of
Condition (\ref{(3.3) NS}) with $\varsigma >2d$. These inequalities are used
to get the estimate \cite[Eq. (127)]{brupedraLR} in the non--autonomous case.

By Theorem \ref{Theorem Lieb-Robinson copy(3)} (i), $\{\tau
_{t}^{(L)}\}_{L\in \mathbb{R}^{+}}$ converges strongly to $\tau _{t}$ for
every $t\in \mathbb{R}$, while Theorem \ref{Theorem Lieb-Robinson copy(3)}
(ii) says that $\delta $ is a closed operator. Therefore, for any $B\in
\mathcal{U}_{0}$ and $t\in \mathbb{R}$, $\tau _{t}\left( B\right) \in
\mathrm{Dom}(\delta )$ and the family $\{\delta \circ \tau _{t}^{(L)}\left(
B\right) \}_{L\in \mathbb{R}^{+}}$ converges to $\delta \circ \tau
_{t}\left( B\right) $, i.e., by (\ref{eq sup2}),%
\begin{equation}
\delta \circ \tau _{t}\left( B\right) =\underset{L_{2}\rightarrow \infty }{%
\lim }\ \underset{L_{1}\rightarrow \infty }{\lim }\delta ^{(L_{1})}\circ
\tau _{t}^{(L_{2})}\left( B\right) \ .  \label{eq sup6}
\end{equation}

Now, we combine Theorem\ \ref{Theorem Lieb-Robinson copy(3)} and the
Lieb--Robinson bounds \cite[Theorem 3.1]{brupedraLR} for the finite volume
dynamics with (\ref{(3.1) NS})--(\ref{(3.2) NS}), (\ref{eq sup2}) and
Lebesgue's dominated convergence theorem to compute from (\ref{eq sup6}) that%
\begin{equation*}
\delta \circ \tau _{t}\left( B\right) =i\sum\limits_{x,y\in \mathfrak{L}%
}\langle \mathfrak{e}_{x},(\Delta _{\omega ,\vartheta }+\lambda V_{\omega })%
\mathfrak{e}_{y}\rangle \left[ a_{x}^{\ast }a_{y},\tau _{t}\left( B\right) %
\right] +i\sum\limits_{\Lambda \in \mathcal{P}_{f}(\mathfrak{L})}\left[ \Psi
_{\Lambda }^{\mathrm{IP}},\tau _{t}\left( B\right) \right]
\end{equation*}%
for any $B\in \mathcal{U}_{0}$ and $t\in \mathbb{R}$. For more details,
compare for instance with \cite[Eqs. (38)--(40)]{brupedraLR}. By using
Lieb--Robinson bounds for the infinite--volume dynamics (Theorem \ref%
{Theorem Lieb-Robinson copy(3)} (iii)) together with (\ref{def H loc}) and (%
\ref{delta L}), we arrive at%
\begin{equation}
\delta \circ \tau _{t}\left( B\right) =\lim_{L\rightarrow \infty }\delta
^{(L)}\circ \tau _{t}\left( B\right) \in \mathcal{U}\ ,\qquad B\in \mathcal{U%
}_{0}\ ,\ t\in \mathbb{R}\ ,  \label{eq sup8}
\end{equation}%
which is an extension of (\ref{eq sup2}) to all elements of $\tau _{t}(%
\mathcal{U}_{0})$ for any $t\in \mathbb{R}$.

Now, to prove (\ref{eq pas trivial}) from (\ref{eq sup1}) we use (\ref{eq
sup8}) together with the fact that $\delta $ is a symmetric derivation.
Indeed, (\ref{eq sup1}) together with the equation
\begin{equation*}
\delta (B_{1}B_{2})=\delta (B_{1})B_{2}+B_{1}\delta (B_{2})\ ,\qquad
B_{1},B_{2}\in \mathrm{Dom}(\delta )\ ,
\end{equation*}%
yields
\begin{eqnarray}
\delta (\mathfrak{U}_{t,t_{0}}) &=&-\sum\limits_{k\in {\mathbb{N}}}\left(
-i\right) ^{k+1}\int_{s}^{t}\mathrm{d}s_{1}\cdots \int_{s}^{s_{k-1}}\mathrm{d%
}s_{k}\sum\limits_{j=1}^{k}\tau _{s_{1}-t}\left( W_{s_{1}}^{(\omega
,\vartheta ,\mathbf{A})}\right)  \label{eq above} \\
&&\qquad \cdots \delta \left( \tau _{s_{j}-t}\left( W_{s_{j}}^{(\omega
,\vartheta ,\mathbf{A})}\right) \right) \cdots \tau _{s_{k}-t}\left(
W_{s_{k}}^{(\omega ,\vartheta ,\mathbf{A})}\right)  \notag
\end{eqnarray}%
for any $t\geq t_{0}$. Observe that a few simple estimates on
\begin{equation*}
\delta \left( \tau _{s_{j}-t}\left( W_{s_{j}}^{(\omega ,\vartheta ,\mathbf{A}%
)}\right) \right) =\tau _{s_{j}-t}\left( \delta (W_{s_{j}}^{(\omega
,\vartheta ,\mathbf{A})})\right)
\end{equation*}%
are also needed to obtain (\ref{eq above}). We omit the details. Combined
with (\ref{(3.1) NS})--(\ref{(3.2) NS}), (\ref{eq sup8}), Theorem \ref%
{Theorem Lieb-Robinson copy(3)} (iii) and Lebesgue's dominated convergence
theorem, Equation (\ref{eq above}) implies (\ref{eq pas trivial}). Note that
$\mathbf{A}\in \mathbf{C}_{0}^{\infty }$ and $W_{t}^{(\omega ,\vartheta ,%
\mathbf{A})}\in \mathcal{U}_{0}$ for any $t\geq t_{0}$, by (\ref{eq def W}).
The proof of Assertion (i) is thus completed. \smallskip

\noindent (ii) We omit the details since it is an extension of \cite[Lemma
5.4.27.]{BratteliRobinson} to unbounded symmetric derivations $\delta
^{(\omega ,\vartheta ,\lambda )}$ already done in the proof of \cite[Theorem
5.8]{OhmI}.
\end{proof}

\begin{bemerkung}[1st law of thermodynamics]
\label{remark 1st law of Thermodynamics}\mbox{
}\newline
If Conditions (\ref{(3.1) NS})--(\ref{(3.2) NS}) and (\ref{(3.3) NS}) are
satisfied then the 1st law of thermodynamics, corresponding to Theorem \ref%
{main 1 copy(1)} (i) in our specific case, holds true for any interaction $%
\Psi \in \mathcal{W}$ and static potential $\mathbf{V}$ at thermal
equilibrium. By static potential, we mean here a collection $\mathbf{V}:=\{%
\mathbf{V}_{\left\{ x\right\} }\}_{x\in \mathfrak{L}}$ of even and
self--adjoint elements such that $\mathbf{V}_{\left\{ x\right\} }=\mathbf{V}%
_{\left\{ x\right\} }^{\ast }\in \mathcal{U}^{+}\cap \mathcal{U}_{\left\{
x\right\} }$ for all $x\in \mathfrak{L}$. See \cite{brupedraLR} for the
general setting. This observation is a nontrivial consequence of
Lieb--Robinson bounds for multi--com%
\-%
mutators of order three, see \cite[Theorems 3.8--3.9]{brupedraLR}.
\end{bemerkung}

\cite[Theorem 3.4]{OhmI} describes the behavior of the heat production at
weak electromagnetic fields of free lattice fermions in thermal equilibrium.
In the following we extend this result to fermion systems with short--range
interactions: For any $l\in \mathbb{R}^{+}$ and $\mathbf{A}\in \mathbf{C}%
_{0}^{\infty }$, we consider the space--rescaled vector potential
\begin{equation}
\mathbf{A}_{l}(t,x):=\mathbf{A}(t,l^{-1}x)\ ,\quad t\in \mathbb{R},\ x\in
\mathbb{R}^{d}\ .  \label{rescaled vector potential}
\end{equation}%
Since Ohm's law is a linear\emph{\ }response to electric fields, we also
rescale the strength of the electromagnetic potential $\mathbf{A}_{l}$ by a
real parameter $\eta \in \mathbb{R}$ and study the behavior of the heat
production in the limit $\eta \rightarrow 0$.

Conditions (\ref{condition1})--(\ref{condition2}) and \cite[Eq. (155)]%
{brupedraLR}, that is, for all parameters $\eta ,\eta _{0}\in \mathbb{R}$,
\begin{equation}
\sup_{x,y\in \mathfrak{L}}\sup_{t\in \mathbb{R}}\left\vert \mathbf{w}%
_{x,y}(\eta ,t)-\mathbf{w}_{x,y}(\eta _{0},t)\right\vert \leq D\left\vert
\eta -\eta _{0}\right\vert \ ,  \label{condition3}
\end{equation}%
are satisfied by the perturbation $W_{t}^{(\omega ,\vartheta ,\mathbf{A})}$.
By Theorem \ref{main 1 copy(1)} (i), we can apply either \cite[Theorem 4.8]%
{brupedraLR} or \cite[Theorem 4.9]{brupedraLR}, depending on the space decay
of interparticle interaction $\Psi ^{\mathrm{IP}}$, to obtain the behavior
of the heat production $\mathbf{Q}^{(\omega ,\eta \mathbf{A}_{l})}$ w.r.t. $%
\eta ,l\in \mathbb{R}^{+}$. The interaction $\Phi $ appearing in \cite[%
Theorem 4.8]{brupedraLR} and \cite[Theorem 4.9]{brupedraLR} is in this case
the sum of the interaction $\Psi ^{\mathrm{IP}}$ with an element from $%
\mathcal{W}$ taking into account the hoppings of particles and external
static potential which are encoded via
\begin{equation*}
\omega =(\omega _{1},\omega _{2})\in \Omega :=[-1,1]^{\mathfrak{L}}\times
\mathbb{D}^{\mathfrak{b}}\text{ },
\end{equation*}%
where we recall that $\mathbb{D}:=\{z\in \mathbb{C}$ $:$ $\left\vert
z\right\vert \leq 1\}$. Note that the map
\begin{equation*}
\eta \mapsto W_{s}^{(\omega ,\vartheta ,\eta \mathbf{A}_{l})}\in \mathcal{U}
\end{equation*}%
is real analytic and the conditions of \cite[Theorem 4.9]{brupedraLR} are
satisfied if the function $\mathbf{F}$ defining the norm of the space $%
\mathcal{W}$ of interactions decays sufficiently fast at large arguments.
Theorem \ref{main 1 copy(1)} (i) combined with \cite[Theorem 4.9]{brupedraLR}
implies that $\eta \mapsto \mathbf{Q}^{(\omega ,\eta \mathbf{A}_{l})}$ is a
Gevrey function on $\mathbb{R}$ in this specific case.

With the arguments using \cite[Theorems 4.8--4.9]{brupedraLR} discussed
above, \cite[Theorem 3.4]{OhmI} can straightforwardly be extended to fermion
systems with short--range interactions. Note, moreover, that \cite[Theorems
4.8--4.9]{brupedraLR} also make possible the study of non--quadratic (resp.
non--linear) corrections to Joule's law (resp. Ohm's law). The minimal
requirement to study the linear\emph{\ }response to electric fields is some
sufficiently fast polynomial decay of interparticle interactions, that is,
Condition (\ref{(3.3) NS}).

\subsection{Charge Transport Coefficients\label{Sect Trans coeef ddef}}

Fix $\omega \in \Omega $, $\vartheta ,\lambda \in \mathbb{R}_{0}^{+}$. The
paramagnetic current observables is defined by%
\begin{equation}
I_{\mathbf{x}}^{(\omega ,\vartheta )}:=-2\mathrm{Im}\left( \langle \mathfrak{%
e}_{x^{(1)}},\Delta _{\omega ,\vartheta }\mathfrak{e}_{x^{(2)}}\rangle
a_{x^{(1)}}^{\ast }a_{x^{(2)}}\right) \ ,\qquad \mathbf{x}%
:=(x^{(1)},x^{(2)})\in \mathfrak{L}^{2}\ .  \label{current observable}
\end{equation}%
Indeed, $I_{\mathbf{x}}^{(\omega ,\vartheta )}\in \mathcal{U}_{0}$ is seen
as a current observable because, by Theorem \ref{Theorem Lieb-Robinson
copy(3)} (ii),\ it satisfies the discrete continuity equation%
\begin{equation}
\partial _{t}\left( \tau _{t}^{(\omega ,\vartheta ,\lambda )}(a_{x}^{\ast
}a_{x})\right) =\tau _{t}^{(\omega ,\vartheta ,\lambda )}\left(
-\sum\limits_{y\in \mathfrak{L},|x-y|=1}I_{(x,y)}^{(\omega ,\vartheta
)}+i\sum\limits_{\Lambda \in \mathcal{P}_{f}(\mathfrak{L})}\left[ \Psi
_{\Lambda }^{\mathrm{IP}},a_{x}^{\ast }a_{x}\right] \right)
\label{current observable derivative}
\end{equation}%
for any $x\in \mathfrak{L}$ and $t\in \mathbb{R}$.

For short--range interactions satisfying
\begin{equation}
\sum\limits_{\Lambda \in \mathcal{P}_{f}(\mathfrak{L})}\left[ \Psi _{\Lambda
}^{\mathrm{IP}},a_{x}^{\ast }a_{x}\right] =0\ ,\qquad x\in \mathfrak{L}\ ,
\label{static potential}
\end{equation}%
i.e., $\Psi ^{\mathrm{IP}}$ is invariant under local gauge transformations, $%
I_{\mathbf{x}}^{(\omega ,\vartheta )}$ is thus the observable related, in
absence of external electromagnetic potentials, to the flow of particles
from the lattice site $x^{(1)}$ to the lattice site $x^{(2)}$ or the current
from $x^{(2)}$ to $x^{(1)}$. [Positively charged particles can of course be
treated in the same way.]

We also define%
\begin{equation}
P_{\mathbf{x}}^{(\omega ,\vartheta )}:=2\mathrm{Re}\left( \langle \mathfrak{e%
}_{x^{(1)}},\Delta _{\omega ,\vartheta }\mathfrak{e}_{x^{(2)}}\rangle
a_{x^{(1)}}^{\ast }a_{x^{(2)}}\right) \ ,\qquad \mathbf{x}%
:=(x^{(1)},x^{(2)})\in \mathfrak{L}^{2}\ .  \label{R x}
\end{equation}%
For real--valued $\omega _{2}(x^{(1)},x^{(2)})$, this self--adjoint element
of $\mathcal{U}_{0}$ is proportional to the second--quantization of the
adjacency matrix of the oriented graph containing exactly the oriented bonds
$(x^{(2)},x^{(1)})$ and $(x^{(1)},x^{(2)})$.

Now, for any $\beta \in \mathbb{R}^{+}$, $\omega \in \Omega $ and $\vartheta
,\lambda \in \mathbb{R}_{0}^{+}$, we introduce two important functions
associated with the above defined observables $I_{\mathbf{x}}^{(\omega
,\vartheta )}$ and $P_{\mathbf{x}}^{(\omega ,\vartheta )}$:

\begin{itemize}
\item[(p)] The paramagnetic transport coefficient $\sigma _{\mathrm{p}%
}^{(\omega )}\equiv \sigma _{\mathrm{p}}^{(\beta ,\omega ,\vartheta ,\lambda
)}$ is defined, for any $\mathbf{x},\mathbf{y}\in \mathfrak{L}^{2}$ and $%
t\in \mathbb{R}$, by%
\begin{equation}
\sigma _{\mathrm{p}}^{(\omega )}\left( \mathbf{x},\mathbf{y},t\right)
:=\int\nolimits_{0}^{t}\varrho ^{(\beta ,\omega ,\vartheta ,\lambda )}\left(
i[I_{\mathbf{y}}^{(\omega ,\vartheta )},\tau _{s}^{(\omega ,\vartheta
,\lambda )}(I_{\mathbf{x}}^{(\omega ,\vartheta )})]\right) \mathrm{d}s\ .
\label{backwards -1bis}
\end{equation}

\item[(d)] The diamagnetic transport coefficient $\sigma _{\mathrm{d}%
}^{(\omega )}\equiv \sigma _{\mathrm{d}}^{(\beta ,\omega ,\vartheta ,\lambda
)}$ is defined by%
\begin{equation}
\sigma _{\mathrm{d}}^{(\omega )}\left( \mathbf{x}\right) :=\varrho ^{(\beta
,\omega ,\vartheta ,\lambda )}\left( P_{\mathbf{x}}^{(\omega ,\vartheta
)}\right) \ ,\qquad \mathbf{x}\in \mathfrak{L}^{2}\ .
\label{backwards -1bispara}
\end{equation}
\end{itemize}

\noindent As explained in \cite[Section 3.3]{OhmII}, $\sigma _{\mathrm{p}%
}^{(\omega )}$ can be associated with a \textquotedblleft quantum current
viscosity\textquotedblright , while $\sigma _{\mathrm{d}}^{(\omega )}$ is
related to the ballistic motion of charged particles within electric fields.

For large samples (i.e., large $l\in \mathbb{R}^{+}$), we then define the
space--averaged paramagnetic transport coefficient
\begin{equation*}
t\mapsto \Xi _{\mathrm{p},l}^{(\omega )}\left( t\right) \equiv \Xi _{\mathrm{%
p},l}^{(\beta ,\omega ,\vartheta ,\lambda )}\left( t\right) \in \mathcal{B}(%
\mathbb{R}^{d})
\end{equation*}%
w.r.t. the canonical orthonormal basis $\{e_{k}\}_{k=1}^{d}$ of the
Euclidian space $\mathbb{R}^{d}$ by%
\begin{equation}
\left\{ \Xi _{\mathrm{p},l}^{(\omega )}\left( t\right) \right\} _{k,q}:=%
\frac{1}{\left\vert \Lambda _{l}\right\vert }\underset{x,y\in \Lambda _{l}}{%
\sum }\sigma _{\mathrm{p}}^{(\omega )}\left( x+e_{k},x,y+e_{q},y,t\right)
\label{average microscopic AC--conductivity}
\end{equation}%
for any $l,\beta \in \mathbb{R}^{+}$, $\omega \in \Omega $, $\vartheta
,\lambda \in \mathbb{R}_{0}^{+}$, $k,q\in \{1,\ldots ,d\}$ and $t\in \mathbb{%
R}$. The space--averaged diamagnetic transport coefficient
\begin{equation*}
\Xi _{\mathrm{d},l}^{(\omega )}\equiv \Xi _{\mathrm{d},l}^{(\beta ,\omega
,\vartheta ,\lambda )}\in \mathcal{B}(\mathbb{R}^{d})
\end{equation*}%
corresponds to the diagonal matrix defined by%
\begin{equation}
\left\{ \Xi _{\mathrm{d},l}^{(\omega )}\right\} _{k,q}:=\frac{\delta _{k,q}}{%
\left\vert \Lambda _{l}\right\vert }\underset{x\in \Lambda _{l}}{\sum }%
\sigma _{\mathrm{d}}^{(\omega )}\left( x+e_{k},x\right) \in \left[ -2\left(
\vartheta +1\right) ,2\left( \vartheta +1\right) \right] \ .
\label{average microscopic AC--conductivity dia}
\end{equation}%
Both coefficients are typically the paramagnetic and diamagnetic (in--phase)
conductivities one experimentally measures in all space directions.

The main properties of the paramagnetic transport coefficient $\Xi _{\mathrm{%
p},l}^{(\omega )}$ are given in the next theorem. To state it, we introduce
some notation: $\mathcal{B}_{+}(\mathbb{R}^{d})\subset \mathcal{B}(\mathbb{R}%
^{d})$ stands for the set of positive linear operators on $\mathbb{R}^{d}$
(i.e., symmetric operators w.r.t. to the canonical scalar product of $%
\mathbb{R}^{d}$ with non--negative eigenvalues). For any $\mathcal{B}(%
\mathbb{R}^{d})$--valued measure $\mu $ on $\mathbb{R}$, $\Vert \mu \Vert _{%
\mathrm{op}}$ denotes the positive measure on $\mathbb{R}$ defined, for any
Borel set $\mathcal{X}$, by
\begin{equation*}
\Vert \mu \Vert _{\mathrm{op}}\left( \mathcal{X}\right) :=\sup \left\{
\underset{i\in I}{\sum }\Vert \mu \left( \mathcal{X}_{i}\right) \Vert _{%
\mathrm{op}}:\{\mathcal{X}_{i}\}_{i\in I}\text{ is a finite Borel partition
of }\mathcal{X}\right\} \ .
\end{equation*}%
We say that $\mu $ is symmetric if $\mu (\mathcal{X})=\mu (-\mathcal{X})$
for any Borel set $\mathcal{X}\subset \mathbb{R}$. Additionally, for any $%
\Theta \in \mathcal{B}(\mathbb{R}^{d})$, define its symmetric and
antisymmetric parts (w.r.t. to the canonical scalar product of $\mathbb{R}$)
respectively by%
\begin{equation}
\lbrack \Theta ]_{+}:=\frac{1}{2}\left( \Theta +\Theta ^{\mathrm{t}}\right)
\text{\qquad and\qquad }[\Theta ]_{-}:=\frac{1}{2}\left( \Theta -\Theta ^{%
\mathrm{t}}\right) \text{ }.  \label{symm-antisymm Theta}
\end{equation}%
Here, $\Theta ^{\mathrm{t}}\in \mathcal{B}(\mathbb{R}^{d})$ stands for the
transpose w.r.t. the canonical scalar product of $\mathbb{R}$ of the
operator $\Theta \in \mathcal{B}(\mathbb{R}^{d})$. With these definitions we
have the following assertion:

\begin{satz}[Microscopic paramagnetic conductivity measures]
\label{lemma sigma pos type}\mbox{
}\newline
Assume (\ref{(3.1) NS})--(\ref{(3.2) NS}). For any $l,\beta \in \mathbb{R}%
^{+}$, $\omega \in \Omega $ and $\vartheta ,\lambda \in \mathbb{R}_{0}^{+}$,
there exists a (generally non--zero) symmetric $\mathcal{B}_{+}(\mathbb{R}%
^{d})$--valued measure $\mu _{\mathrm{p},l}^{(\omega )}\equiv \mu _{\mathrm{p%
},l}^{(\beta ,\omega ,\vartheta ,\lambda )}$ on $\mathbb{R}$ such that%
\begin{equation}
\int_{\mathbb{R}}\left( 1+\left\vert \nu \right\vert \right) \Vert \mu _{%
\mathrm{p},l}^{(\omega )}\Vert _{\mathrm{op}}(\mathrm{d}\nu )<\infty
\label{unifrom bound}
\end{equation}%
and%
\begin{equation*}
\lbrack \Xi _{\mathrm{p},l}^{(\omega )}(t)]_{+}=\int_{\mathbb{R}}\left( \cos
\left( t\nu \right) -1\right) \mu _{\mathrm{p},l}^{(\omega )}(\mathrm{d}\nu
)\ ,\qquad t\in \mathbb{R}\ .
\end{equation*}
\end{satz}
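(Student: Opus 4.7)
The plan is to reduce the claim to a direct application of the spectral theorem in the GNS representation of $\varrho\equiv\varrho^{(\beta,\omega,\vartheta,\lambda)}$, combined with the KMS-induced detailed balance. I first collapse the double lattice sum by introducing, for each $k\in\{1,\ldots,d\}$, the self-adjoint current sums
\begin{equation*}
J_l^{(k)} := \sum_{x\in\Lambda_l} I_{(x+e_k,x)}^{(\omega,\vartheta)} \in \mathcal{U}_0,
\end{equation*}
so that definitions (\ref{backwards -1bis}) and (\ref{average microscopic AC--conductivity}) give
\begin{equation*}
|\Lambda_l|\{\Xi_{\mathrm{p},l}^{(\omega)}(t)\}_{k,q} = \int_0^t \varrho\!\left(i[J_l^{(q)},\tau_s^{(\omega,\vartheta,\lambda)}(J_l^{(k)})]\right)\mathrm{d}s.
\end{equation*}
Since $J_l^{(k)}\in\mathcal{U}_0$, it lies in the core of $\delta^{(\omega,\vartheta,\lambda)}$ by Theorem \ref{Theorem Lieb-Robinson copy(3)}(ii).

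In the GNS triple $(\pi,\mathcal{H},\Omega)$ of $\varrho$, the dynamics $\tau^{(\omega,\vartheta,\lambda)}$ is implemented by a strongly continuous unitary group $\{e^{isL}\}_{s\in\mathbb{R}}$ with $L\Omega=0$, and I denote by $E_L$ the spectral projection-valued measure of $L$. For $k,q\in\{1,\ldots,d\}$, define the complex Borel measures
\begin{equation*}
\kappa_{k,q}(\mathcal{X}) := \langle \pi(J_l^{(q)})\Omega,\,E_L(\mathcal{X})\,\pi(J_l^{(k)})\Omega\rangle,
\end{equation*}
which satisfy $\varrho(J_l^{(q)}\tau_s(J_l^{(k)}))=\int e^{is\nu}\kappa_{k,q}(\mathrm{d}\nu)$ and the complex conjugation relation $\overline{\kappa_{k,q}}=\kappa_{q,k}$. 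The KMS condition (analytic continuation of $s\mapsto\varrho(J_l^{(q)}\tau_s(J_l^{(k)}))$ through the strip $0\le\mathrm{Im}\,s\le\beta$, with boundary values related by the $\tau$-invariance of $\varrho$) yields the detailed balance $\kappa_{k,q}(-\mathrm{d}\nu)=e^{-\beta\nu}\kappa_{q,k}(\mathrm{d}\nu)$, from which a short computation gives
\begin{equation*}
\varrho\!\left(i[J_l^{(q)},\tau_s(J_l^{(k)})]\right) = i\int_\mathbb{R} e^{is\nu}(1-e^{-\beta\nu})\,\kappa_{k,q}(\mathrm{d}\nu).
\end{equation*}
Integrating in $s$ from $0$ to $t$, symmetrizing in $k\leftrightarrow q$, and using the change of variable $\nu\mapsto-\nu$ together with detailed balance to rewrite the $\kappa_{q,k}$-integral in terms of $\kappa_{k,q}$, I arrive at
\begin{equation*}
[\Xi_{\mathrm{p},l}^{(\omega)}(t)]_+^{k,q} = \int_\mathbb{R}(\cos(t\nu)-1)\,[\mu_{\mathrm{p},l}^{(\omega)}(\mathrm{d}\nu)]_{k,q}
\end{equation*}
for the candidate matrix-valued measure
\begin{equation*}
[\mu_{\mathrm{p},l}^{(\omega)}(\mathrm{d}\nu)]_{k,q} := \frac{1-e^{-\beta\nu}}{2|\Lambda_l|\,\nu}(\kappa_{k,q}+\kappa_{q,k})(\mathrm{d}\nu),
\end{equation*}
the apparent $1/\nu$ singularity being removed by the continuous extension $(1-e^{-\beta\nu})/\nu\big|_{\nu=0}:=\beta$.

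The four required properties follow systematically. Reality of the matrix entries follows from $\kappa_{q,k}=\overline{\kappa_{k,q}}$; matrix symmetry is built in by construction; symmetry under $\nu\mapsto-\nu$ follows from detailed balance combined with the identity $(1-e^{\beta\nu})e^{-\beta\nu}/(-\nu)=(1-e^{-\beta\nu})/\nu$; and positivity (the crucial point) via, for any $v\in\mathbb{R}^d$, setting $J_{l,v}:=\sum_k v_k J_l^{(k)}\in\mathcal{U}_0$,
\begin{equation*}
\sum_{k,q}v_kv_q(\kappa_{k,q}+\kappa_{q,k})(\mathrm{d}\nu)=2\,\mathrm{d}\|E_L(\nu)\pi(J_{l,v})\Omega\|^2\ge 0,
\end{equation*}
together with $(1-e^{-\beta\nu})/\nu\ge 0$. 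Finally, the integrability condition $\int_{\mathbb{R}}(1+|\nu|)\|\mu_{\mathrm{p},l}^{(\omega)}\|_{\mathrm{op}}(\mathrm{d}\nu)<\infty$ reduces, via the just-proven symmetry in $\nu$ and a (pseudo) Cauchy--Schwarz bound for the complex measures $\kappa_{k,q}$ in terms of the positive diagonal measures $\kappa_{k,k}$, to the finiteness of $\int_0^\infty(1+\nu)\kappa_{k,k}(\mathrm{d}\nu)$, which follows from $\int \nu^2\kappa_{k,k}(\mathrm{d}\nu)=\|\pi(\delta^{(\omega,\vartheta,\lambda)}(J_l^{(k)}))\Omega\|^2\le\|\delta^{(\omega,\vartheta,\lambda)}(J_l^{(k)})\|_{\mathcal{U}}^2<\infty$ by Cauchy--Schwarz.

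The main obstacle is the rigorous justification of the detailed balance $\kappa_{k,q}(-\mathrm{d}\nu)=e^{-\beta\nu}\kappa_{q,k}(\mathrm{d}\nu)$. Although the formal derivation via the KMS analytic continuation is classical, the local elements $J_l^{(k)}$ are not a priori entire analytic for $\tau^{(\omega,\vartheta,\lambda)}$, so one cannot directly plug the identity into the Fourier representation. A clean route is to smear in time by a Gaussian, setting $J_l^{(k)}[\varepsilon]:=(4\pi\varepsilon)^{-1/2}\int_\mathbb{R} e^{-s^2/(4\varepsilon)}\tau_s^{(\omega,\vartheta,\lambda)}(J_l^{(k)})\,\mathrm{d}s$, which produces entire analytic elements for every $\varepsilon>0$, apply the identity there, and then pass to the limit $\varepsilon\to 0^+$ using the moment bounds on $\kappa_{k,q}$ together with dominated convergence.
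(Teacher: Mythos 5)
Your construction is correct and is, at its core, the same as the paper's: the weight $(1-\mathrm{e}^{-\beta \nu })/\nu $ applied to the GNS spectral measure $\kappa _{k,q}$ of the Liouvillean is precisely the spectral representation of the Duhamel two--point function that the paper invokes via \cite[Section 5.1.2]{OhmII}, and your use of (\ref{Ut U moins t})--equivalent detailed balance to symmetrize in $\nu $ and restrict to $[\Xi _{\mathrm{p},l}^{(\omega )}]_{+}$ matches the adaptation described in the paper's proof. The one technical point you flag -- justifying $\kappa _{k,q}(-\mathrm{d}\nu )=\mathrm{e}^{-\beta \nu }\kappa _{q,k}(\mathrm{d}\nu )$ without entire analyticity of $J_{l}^{(k)}$ -- is handled correctly by your Gaussian smearing argument, so no gap remains.
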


\begin{proof}
The assertion is proven by using the Duhamel two--point function exactly as
in \cite[Section 5.1.2]{OhmII}, taking into account only the symmetric part
of $\Xi _{\mathrm{p},l}^{(\omega )}$ and using the property%
\begin{equation}
\Xi _{\mathrm{p},l}^{(\omega )}(-t)=[\Xi _{\mathrm{p},l}^{(\omega )}(t)]^{%
\mathrm{t}}\ ,\qquad t\in \mathbb{R}\ .  \label{Ut U moins t}
\end{equation}%
Indeed, the model considered in \cite{OhmII} is time--reversal invariant. In
this case, one has $[\Xi _{\mathrm{p},l}^{(\omega )}(t)]_{+}=\Xi _{\mathrm{p}%
,l}^{(\omega )}(t)$. The more general model studied here does not have this
symmetry for non--real hopping amplitudes and/or for interparticle
interactions which are not invariant w.r.t. to time-reversal. Another
difference as compared to systems of non--interacting fermions of \cite%
{OhmII} concerns \cite[Lemma 5.10]{OhmII} which cannot be a priori extended
to the interacting case. It follows that (\ref{unifrom bound}) may not be
uniformly bounded w.r.t. $l,\beta \in \mathbb{R}^{+}$, $\omega \in \Omega $,
$\lambda \in \mathbb{R}_{0}^{+}$.
\end{proof}

One can use the Duhamel scalar product to represent (up to a real constant)
the entries of $\Xi _{\mathrm{p},l}^{(\omega )}(t)$ w.r.t. the canonical
basis of $\mathbb{R}^{d}$ as matrix elements of some unitary one--parameter
group $\{U(t)\}_{t\in \mathbb{R}}$ on a Hilbert space. Then, one gets a
similar expression as in Theorem \ref{lemma sigma pos type} for the full
coefficient $\Xi _{\mathrm{p},l}^{(\omega )}$ in terms of a measure.
However, if $[\Xi _{\mathrm{p},l}^{(\omega )}(t)]_{-}$ does not vanish for
some time $t$, then the corresponding measure is not anymore symmetric
w.r.t. $\nu $ and it does not take values in the set of symmetric matrices,
in general. Moreover, the symmetric component $[\Xi _{\mathrm{p},l}^{(\omega
)}(t)]_{+}$ of $\Xi _{\mathrm{p},l}^{(\omega )}(t)$ corresponds to the real
part of $U(t)$ and the antisymmetric component $[\Xi _{\mathrm{p}%
,l}^{(\omega )}(t)]_{-}$ to the imaginary part of the same unitary. This is
a consequence of (\ref{Ut U moins t}).

For any $\vec{w}\in \mathbb{R}^{d}$ and $t\in \mathbb{R}$, $[\Xi _{\mathrm{p}%
,l}^{(\omega )}(t)]_{-}\vec{w}\in \mathbb{R}^{d}$ and $\vec{w}$ are
orthogonal (w.r.t. the canonical scalar product of $\mathbb{R}^{d}$). So, $%
[\Xi _{\mathrm{p},l}^{(\omega )}]_{-}\neq 0$ can be physically seen as the
presence of effective magnetic fields in the system, assuming that $\Xi _{%
\mathrm{p},l}^{(\omega )}$ correctly describes the paramagnetic
conductivity. As mentioned in the proof of the above theorem, $[\Xi _{%
\mathrm{p},l}^{(\omega )}]_{-}=0$ whenever the system is invariant w.r.t.
time--reversal, but non--vanishing magnetic fields break this symmetry. The
conductivity measure describes the heat production in the presence of
electric fields which are space--homogeneous. Since currents which are
orthogonal to the applied fields are physically not expected to produce
heat, it is not astonishing that $\mu _{\mathrm{p},l}^{(\omega )}$ only
depends on the symmetric part of the conductivity $\Xi _{\mathrm{p}%
,l}^{(\omega )}$.

\begin{koro}[{Properties of $[\Xi _{\mathrm{p},l}^{(\protect\omega )}]_{+}$}]

\label{lemma sigma pos type copy(1)}\mbox{
}\newline
Assume (\ref{(3.1) NS})--(\ref{(3.2) NS}). For $l,\beta \in \mathbb{R}^{+}$,
$\omega \in \Omega $ and $\vartheta ,\lambda \in \mathbb{R}_{0}^{+}$, the
\emph{symmetric} part of $\Xi _{\mathrm{p},l}^{(\omega )}$ has the following
properties:\newline
\emph{(i)} Time--reversal symmetry of $[\Xi _{\mathrm{p},l}^{(\omega )}]_{+}$%
: $[\Xi _{\mathrm{p},l}^{(\omega )}\left( 0\right) ]_{+}=0$ and
\begin{equation*}
\lbrack \Xi _{\mathrm{p},l}^{(\omega )}(-t)]_{+}=[\Xi _{\mathrm{p}%
,l}^{(\omega )}(t)]_{+}\ ,\qquad t\in \mathbb{R}\ .
\end{equation*}%
\emph{(ii)} Negativity of $[\Xi _{\mathrm{p},l}^{(\omega )}]_{+}$:
\begin{equation*}
-[\Xi _{\mathrm{p},l}^{(\omega )}(t)]_{+}\in \mathcal{B}_{+}(\mathbb{R}%
^{d})\ ,\qquad t\in \mathbb{R}\ .
\end{equation*}%
\emph{(iii)} Ces\`{a}ro mean of $[\Xi _{\mathrm{p},l}^{(\omega )}]_{+}$:
\begin{equation*}
\underset{t\rightarrow \infty }{\lim }\ \frac{1}{t}\int_{0}^{t}[\Xi _{%
\mathrm{p},l}^{(\omega )}(s)]_{+}\mathrm{d}s=-\mu _{\mathrm{p},l}^{(\omega
)}\left( \mathbb{R}\backslash \left\{ 0\right\} \right) \ .
\end{equation*}
\end{koro}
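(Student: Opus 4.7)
The plan is to derive all three assertions directly from the integral representation
\[
[\Xi _{\mathrm{p},l}^{(\omega )}(t)]_{+}=\int_{\mathbb{R}}\left( \cos (t\nu )-1\right) \mu _{\mathrm{p},l}^{(\omega )}(\mathrm{d}\nu )\ ,\qquad t\in \mathbb{R}\ ,
\]
provided by Theorem \ref{lemma sigma pos type}, together with the integrability bound (\ref{unifrom bound}) on the operator total variation of $\mu _{\mathrm{p},l}^{(\omega )}$. No further inputs from the dynamics or the state are needed: each property is a simple consequence of an elementary property of the function $\nu \mapsto \cos(t\nu)-1$ combined with the fact that $\mu _{\mathrm{p},l}^{(\omega )}$ is a symmetric $\mathcal{B}_{+}(\mathbb{R}^{d})$--valued measure.

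For assertion (i), I would substitute $t=0$ into the representation, in which case the integrand vanishes identically and hence $[\Xi _{\mathrm{p},l}^{(\omega )}(0)]_{+}=0$. For the time-reversal symmetry, I would use that $\cos $ is an even function, so that $\cos(-t\nu)-1=\cos(t\nu)-1$; inserting this into the integral gives $[\Xi _{\mathrm{p},l}^{(\omega )}(-t)]_{+}=[\Xi _{\mathrm{p},l}^{(\omega )}(t)]_{+}$. For assertion (ii), I would note that the scalar integrand satisfies $\cos(t\nu)-1\leq 0$ for all $\nu ,t\in \mathbb{R}$, while the measure $\mu _{\mathrm{p},l}^{(\omega )}$ takes values in $\mathcal{B}_{+}(\mathbb{R}^{d})$. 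Hence, for every $\vec{w}\in \mathbb{R}^{d}$,
\[
\langle \vec{w},[\Xi _{\mathrm{p},l}^{(\omega )}(t)]_{+}\vec{w}\rangle =\int_{\mathbb{R}}\left( \cos (t\nu )-1\right) \langle \vec{w},\mu _{\mathrm{p},l}^{(\omega )}(\mathrm{d}\nu )\vec{w}\rangle \leq 0\ ,
\]
which yields $-[\Xi _{\mathrm{p},l}^{(\omega )}(t)]_{+}\in \mathcal{B}_{+}(\mathbb{R}^{d})$.

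For assertion (iii), I would compute, for each fixed $\nu \neq 0$, the Cesàro average of $\cos(s\nu)-1$:
\[
\frac{1}{t}\int_{0}^{t}\left( \cos (s\nu )-1\right) \mathrm{d}s=\frac{\sin (t\nu )}{t\nu }-1\ ,
\]
which tends to $-1$ as $t\rightarrow \infty$, while at $\nu =0$ the expression is identically $0$. Exchanging time integration with the integration against $\mu _{\mathrm{p},l}^{(\omega )}$ by Fubini's theorem, I would then invoke dominated convergence in the operator-valued integral: the integrands are uniformly bounded by $2$ in absolute value and $\Vert \mu _{\mathrm{p},l}^{(\omega )}\Vert _{\mathrm{op}}$ is a finite positive measure on $\mathbb{R}$ by (\ref{unifrom bound}). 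Passing to the limit $t\rightarrow \infty$ inside the integral then produces
\[
\lim_{t\rightarrow \infty }\frac{1}{t}\int_{0}^{t}[\Xi _{\mathrm{p},l}^{(\omega )}(s)]_{+}\mathrm{d}s=\int_{\mathbb{R}}(-\mathbf{1}_{\nu \neq 0})\,\mu _{\mathrm{p},l}^{(\omega )}(\mathrm{d}\nu )=-\mu _{\mathrm{p},l}^{(\omega )}(\mathbb{R}\setminus \{0\})\ .
\]

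The only non-routine step is the justification of the interchange in (iii), since the integral defining $[\Xi _{\mathrm{p},l}^{(\omega )}(s)]_{+}$ is operator-valued and $\mu _{\mathrm{p},l}^{(\omega )}$ is not necessarily a scalar measure; however, this is handled in a standard way via the polarization identity reducing everything to the scalar positive measures $\vec{w}\mapsto \langle \vec{w},\mu _{\mathrm{p},l}^{(\omega )}(\cdot )\vec{w}\rangle$, to which usual Fubini and dominated convergence apply thanks to (\ref{unifrom bound}). All three assertions are thus direct corollaries of Theorem \ref{lemma sigma pos type} and require no new technical machinery.
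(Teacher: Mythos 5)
Your proposal is correct and follows exactly the route the paper intends: its proof of this corollary simply states that (i)--(iii) are direct consequences of Theorem \ref{lemma sigma pos type} and Lebesgue's dominated convergence theorem, and you have filled in precisely those details (evenness and non-positivity of $\cos(t\nu)-1$, the Ces\`{a}ro computation $\frac{1}{t}\int_{0}^{t}(\cos(s\nu)-1)\,\mathrm{d}s=\frac{\sin(t\nu)}{t\nu}-1\rightarrow -1$ for $\nu\neq 0$, and dominated convergence justified by the finiteness of $\Vert\mu_{\mathrm{p},l}^{(\omega)}\Vert_{\mathrm{op}}$ via (\ref{unifrom bound})).
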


\begin{proof}
(i)--(iii) are direct consequences of Theorem \ref{lemma sigma pos type} and
Lebesgue's dominated convergence theorem.
\end{proof}

Recall that (\ref{unifrom bound}) is a priori not uniformly bounded w.r.t.
the parameters $l,\beta \in \mathbb{R}^{+}$, $\omega \in \Omega $, $\lambda
\in \mathbb{R}_{0}^{+}$ as in \cite[Theorem 3.1]{OhmII}. In particular, the
family
\begin{equation*}
\{\Xi _{\mathrm{p},l}^{(\beta ,\omega ,\vartheta ,\lambda )}\}_{l,\beta \in
\mathbb{R}^{+},\omega \in \Omega ,\lambda \in \mathbb{R}_{0}^{+}}
\end{equation*}%
of maps from $\mathbb{R}$ to $\mathcal{B}(\mathbb{R}^{d})$ may not be
equicontinuous. Compare with \cite[Corollary 3.2 (iv)]{OhmII}. We deduce
this property for times on compacta from Lieb--Robinson bounds:

\begin{satz}[Uniform boundedness and equicontinuity properties]
\label{lemma sigma pos type copy(4)}\mbox{
}\newline
Assume (\ref{(3.1) NS})--(\ref{(3.2) NS}). Let $\vartheta _{0},\mathrm{T}\in
\mathbb{R}^{+}$. Then, for any $l,\beta \in \mathbb{R}^{+}$, $\omega \in
\Omega $, $\vartheta \in \lbrack 0,\vartheta _{0}]$, $\lambda \in \mathbb{R}%
_{0}^{+}$ and $t_{1},t_{2}\in \lbrack -\mathrm{T},\mathrm{T}]$,%
\begin{equation*}
\left\Vert \Xi _{\mathrm{p},l}^{(\omega )}\left( t_{1}\right) -\Xi _{\mathrm{%
p},l}^{(\omega )}\left( t_{2}\right) \right\Vert _{\mathrm{\max }}\leq
32\left( 1+\vartheta _{0}\right) ^{2}\left( \mathbf{D}^{-1}\left\Vert
\mathbf{F}\right\Vert _{1,\mathfrak{L}}\mathrm{e}^{4\mathbf{D}\left\vert
\mathrm{T}\right\vert D_{\vartheta _{0}}}+1\right) \left\vert
t_{2}-t_{1}\right\vert \ ,
\end{equation*}%
where $\Vert \cdot \Vert _{\mathrm{\max }}$ is the max norm of matrices.
\end{satz}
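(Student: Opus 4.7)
The plan is to bound the integrand in the definition of $\sigma_{\mathrm{p}}^{(\omega)}$ pointwise in $s$ by norm estimates on commutators of current observables, distinguishing disjoint from overlapping supports, and then to sum the result against $1/|\Lambda_l|$ in such a way that the Lieb--Robinson decay controls the lattice sum uniformly in $l$.

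\medskip

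First, note that for nearest neighbor $\mathbf{x}=(x^{(1)},x^{(2)})\in\mathfrak{L}^2$ with $|x^{(1)}-x^{(2)}|=1$ the hopping coefficient satisfies $|\langle \mathfrak{e}_{x^{(1)}},\Delta_{\omega,\vartheta}\mathfrak{e}_{x^{(2)}}\rangle|\le 1+\vartheta$, so $I_{\mathbf{x}}^{(\omega,\vartheta)}\in\mathcal{U}^{+}\cap\mathcal{U}_{\{x^{(1)},x^{(2)}\}}$ is an even, self--adjoint local element with $\|I_{\mathbf{x}}^{(\omega,\vartheta)}\|_{\mathcal{U}}\le 2(1+\vartheta_0)$. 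By (\ref{average microscopic AC--conductivity}) and (\ref{backwards -1bis}), for any $k,q\in\{1,\dots,d\}$ and $t_1,t_2\in[-\mathrm{T},\mathrm{T}]$,
\begin{equation*}
\bigl|\{\Xi _{\mathrm{p},l}^{(\omega )}(t_{1})-\Xi _{\mathrm{p},l}^{(\omega )}(t_{2})\}_{k,q}\bigr|
\le \frac{1}{|\Lambda_l|}\sum_{x,y\in\Lambda_l}\Bigl|\int_{t_2}^{t_1}\varrho^{(\beta,\omega,\vartheta,\lambda)}\!\bigl(i[I_{(y+e_q,y)}^{(\omega,\vartheta)},\tau_s^{(\omega,\vartheta,\lambda)}(I_{(x+e_k,x)}^{(\omega,\vartheta)})]\bigr)\mathrm{d}s\Bigr|\,.
\end{equation*}
Since states are contractive, the absolute value of the integrand is bounded by $\|[I_{(y+e_q,y)}^{(\omega,\vartheta)},\tau_s^{(\omega,\vartheta,\lambda)}(I_{(x+e_k,x)}^{(\omega,\vartheta)})]\|_{\mathcal{U}}$, and for $s$ in the range of integration one has $|s|\le \mathrm{T}$.

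\medskip

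Next, for each fixed $x\in\Lambda_l$ and each $s$ I split the sum over $y\in\Lambda_l$ according to whether $\{y+e_q,y\}\cap\{x+e_k,x\}=\emptyset$ or not. In the disjoint case, Theorem \ref{Theorem Lieb-Robinson copy(3)}(iii) (applicable because $I_{(x+e_k,x)}^{(\omega,\vartheta)}$ is even) yields
\begin{equation*}
\bigl\|[I_{(y+e_q,y)}^{(\omega,\vartheta)},\tau_s^{(\omega,\vartheta,\lambda)}(I_{(x+e_k,x)}^{(\omega,\vartheta)})]\bigr\|_{\mathcal{U}}
\le 8\mathbf{D}^{-1}(1+\vartheta_0)^2\bigl(\mathrm{e}^{2\mathbf{D}|s|D_{\vartheta_0}}-1\bigr)\!\!\sum_{a\in\{x+e_k,x\}}\sum_{b\in\{y+e_q,y\}}\!\!\mathbf{F}(|a-b|).
\end{equation*}
Summing over $y\in\mathfrak{L}$ and using (\ref{(3.1) NS}) (translation invariance of the supremum), the four lattice sums contribute a total of at most $4\|\mathbf{F}\|_{1,\mathfrak{L}}$, so the disjoint contribution for fixed $x$ is bounded by $32\mathbf{D}^{-1}(1+\vartheta_0)^2\|\mathbf{F}\|_{1,\mathfrak{L}}(\mathrm{e}^{2\mathbf{D}|s|D_{\vartheta_0}}-1)$. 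In the overlapping case there are at most four values of $y$ to consider (namely $y\in\{x,\,x+e_k,\,x-e_q,\,x+e_k-e_q\}$) and for each the commutator is estimated trivially by $2\|I\|^2\le 8(1+\vartheta_0)^2$, producing a contribution of at most $32(1+\vartheta_0)^2$.

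\medskip

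Summing both contributions over $x\in\Lambda_l$ cancels the prefactor $|\Lambda_l|^{-1}$, leaving the $x$-independent bound
\begin{equation*}
32(1+\vartheta_0)^2\bigl(\mathbf{D}^{-1}\|\mathbf{F}\|_{1,\mathfrak{L}}(\mathrm{e}^{2\mathbf{D}|s|D_{\vartheta_0}}-1)+1\bigr)
\end{equation*}
on the integrand, uniformly in $\omega$, $\beta$, $\lambda$ and $l$. Using $|s|\le\mathrm{T}$ and the crude estimate $\mathrm{e}^{2\mathbf{D}|s|D_{\vartheta_0}}-1\le\mathrm{e}^{4\mathbf{D}|\mathrm{T}|D_{\vartheta_0}}$, integration over $s\in[t_2,t_1]$ produces the factor $|t_2-t_1|$ and gives the claimed bound on every entry, hence on the max norm. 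The main technical point is simply keeping the combinatorics of the four-element bond supports correct so that the Lieb--Robinson decay absorbs the double lattice sum; no further estimate beyond Theorem \ref{Theorem Lieb-Robinson copy(3)}(iii) is required.
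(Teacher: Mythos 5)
Your proposal is correct and is precisely the argument the paper has in mind: its entire proof is the one-line remark that the bound is ``a direct application of Theorem \ref{Theorem Lieb-Robinson copy(3)} (iii)'', and your splitting into disjoint versus overlapping bond supports, with the Lieb--Robinson bound absorbing the lattice sum via (\ref{(3.1) NS}) and the trivial commutator bound handling the at most four overlapping terms, reproduces the stated constant $32(1+\vartheta_0)^2\bigl(\mathbf{D}^{-1}\Vert\mathbf{F}\Vert_{1,\mathfrak{L}}\mathrm{e}^{4\mathbf{D}|\mathrm{T}|D_{\vartheta_0}}+1\bigr)$ exactly. The only cosmetic observation is that your crude step $\mathrm{e}^{2\mathbf{D}|s|D_{\vartheta_0}}-1\le\mathrm{e}^{4\mathbf{D}|\mathrm{T}|D_{\vartheta_0}}$ explains the otherwise mysterious factor $4$ in the exponent of the stated bound, so nothing is missing.
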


\begin{proof}
It is a direct application of Theorem \ref{Theorem Lieb-Robinson copy(3)}
(iii).
\end{proof}

As in \cite[Section 5.1.2]{OhmII} the $\mathcal{B}_{+}(\mathbb{R}^{d})$%
--valued measures $\mu _{\mathrm{p},l}^{(\omega )}$ can be represented in
terms of the spectral measure of an explicit self--adjoint operator w.r.t.
explicitly given vectors. The constant $\mu _{\mathrm{p},l}^{(\omega
)}\left( \mathbb{R}\backslash \left\{ 0\right\} \right) $ is the so--called
static admittance of linear response theory. Moreover, if $\Xi _{\mathrm{d}%
,l}^{(\omega )}$ is non--degenerated then one can construct $\mu _{\mathrm{p}%
,l}^{(\omega )}$ from the \emph{space--averaged }quantum current viscosity%
\begin{equation}
\mathbf{V}_{l}^{(\omega )}\left( t\right) \equiv \mathbf{V}_{l}^{(\beta
,\omega ,\vartheta ,\lambda )}\left( t\right) :=\left( \Xi _{\mathrm{d}%
,l}^{(\omega )}\right) ^{-1}\partial _{t}[\Xi _{\mathrm{p},l}^{(\omega
)}\left( t\right) ]_{+}\in \mathcal{B}(\mathbb{R}^{d})
\label{quantum viscosity bis bis}
\end{equation}%
for any $l,\beta \in \mathbb{R}^{+}$, $\omega \in \Omega $, $\vartheta
,\lambda \in \mathbb{R}_{0}^{+}$ and $t\in \mathbb{R}$. Indeed, $\mu _{%
\mathrm{p},l}^{(\omega )}$ is the boundary value of the (imaginary part of
the) Laplace--Fourier transform of $\Xi _{\mathrm{d},l}^{(\omega )}\mathbf{V}%
_{l}^{(\omega )}$. For more details, see \cite[Sections 3.3, 5.1.2]{OhmII}.
As compared to \cite[Sections 3.3, 5.1.2]{OhmII}, note that $\Xi _{\mathrm{p}%
,l}^{(\omega )}$ is replaced in (\ref{quantum viscosity bis bis}) with its
symmetric part.

\subsection{Microscopic Ohm's Law\label{Sect Local Ohm law}}

The diamagnetic current observable $\mathrm{I}_{\mathbf{x}}^{(\omega
,\vartheta ,\mathbf{A})}$ is defined, for any $\omega \in \Omega $, $%
\vartheta \in \mathbb{R}_{0}^{+}$, $\mathbf{A}\in \mathbf{C}_{0}^{\infty }$,
$t\geq t_{0}$ and $\mathbf{x}:=(x^{(1)},x^{(2)})\in \mathfrak{L}^{2}$, by
\begin{eqnarray}
\mathrm{I}_{\mathbf{x}}^{(\omega ,\vartheta ,\mathbf{A})}\equiv \mathrm{I}_{%
\mathbf{x}}^{(\omega ,\vartheta ,\mathbf{A}(t,\cdot ))} &:=&-2\mathrm{Im}%
\Big(\Big(\mathrm{e}^{-i\int\nolimits_{0}^{1}[\mathbf{A}(t,\alpha
x^{(2)}+(1-\alpha )x^{(1)})](x^{(2)}-x^{(1)})\mathrm{d}\alpha }-1\Big)
\notag \\
&&\qquad \qquad \qquad \times \langle \mathfrak{e}_{x^{(1)}},\Delta _{\omega
,\vartheta }\mathfrak{e}_{x^{(2)}}\rangle a_{x^{(1)}}^{\ast }a_{x^{(2)}}\Big)%
\ .  \label{current observable new}
\end{eqnarray}%
It corresponds to a correction to the current $I_{\mathbf{x}}^{(\omega
,\vartheta )}$ defined above engendered by the presence of an external
electromagnetic potential. See \cite[Section 3.1]{OhmII} for more details.

Like in \cite[Section 3]{OhmII}, w.l.o.g. we only consider
space--homogeneous (though time--dependent) electric fields in the box $%
\Lambda _{l}$ defined by (\ref{eq:def lambda n}) for $l\in \mathbb{R}^{+}$.
More precisely, let $\vec{w}:=(w_{1},\ldots ,w_{d})\in \mathbb{R}^{d}$ be
any (normalized w.r.t. the usual Euclidian norm) vector, $\mathcal{A}\in
C_{0}^{\infty }\left( \mathbb{R};\mathbb{R}\right) $ and set $\mathcal{E}%
_{t}:=-\partial _{t}\mathcal{A}_{t}$ for all $t\in \mathbb{R}$. Then, $%
\mathbf{\bar{A}}\in \mathbf{C}_{0}^{\infty }$ is defined to be the
electromagnetic potential such that the electric field equals $\mathcal{E}%
_{t}\vec{w}$ at time $t\in \mathbb{R}$ for all $x\in \left[ -1,1\right] ^{d}$
and $(0,0,\ldots ,0)$ for $t\in \mathbb{R}$ and $x\notin \left[ -1,1\right]
^{d}$. This choice yields rescaled electromagnetic potentials $\eta \mathbf{%
\bar{A}}_{l}$ as defined by (\ref{rescaled vector potential}) for $l\in
\mathbb{R}^{+}$ and $\eta \in \mathbb{R}$.

For any $l,\beta \in \mathbb{R}^{+}$, $\omega \in \Omega $, $\vartheta
,\lambda \in \mathbb{R}_{0}^{+}$, $\eta \in \mathbb{R}$, $\vec{w}\in \mathbb{%
R}^{d}$, $\mathcal{A}\in C_{0}^{\infty }\left( \mathbb{R};\mathbb{R}\right) $
and $t\geq t_{0}$, the total current density is the sum of three currents
defined from (\ref{current observable}) and (\ref{current observable new}):

\begin{itemize}
\item[(th)] The (thermal) current density
\begin{equation*}
\mathbb{J}_{\mathrm{th}}^{(\omega ,l)}\equiv \mathbb{J}_{\mathrm{th}%
}^{(\beta ,\omega ,\vartheta ,\lambda ,l)}\in \mathbb{R}^{d}
\end{equation*}%
at thermal equilibrium inside the box $\Lambda _{l}$ is defined, for any $%
k\in \{1,\ldots ,d\}$, by%
\begin{equation}
\left\{ \mathbb{J}_{\mathrm{th}}^{(\omega ,l)}\right\} _{k}:=\left\vert
\Lambda _{l}\right\vert ^{-1}\underset{x\in \Lambda _{l}}{\sum }\varrho
^{(\beta ,\omega ,\vartheta ,\lambda )}\left( I_{\left( x+e_{k},x\right)
}^{(\omega ,\vartheta )}\right) \ .  \label{free current}
\end{equation}

\item[(p)] The paramagnetic current density is the map
\begin{equation*}
t\mapsto \mathbb{J}_{\mathrm{p}}^{(\omega ,\eta \mathbf{\bar{A}}_{l})}\left(
t\right) \equiv \mathbb{J}_{\mathrm{p}}^{(\beta ,\omega ,\vartheta ,\lambda
,\eta \mathbf{\bar{A}}_{l})}\left( t\right) \in \mathbb{R}^{d}
\end{equation*}%
defined by the space average of the current increment vector inside the box $%
\Lambda _{l}$ at times $t\geq t_{0}$, that is for any $k\in \{1,\ldots ,d\}$%
,
\begin{equation}
\left\{ \mathbb{J}_{\mathrm{p}}^{(\omega ,\eta \mathbf{\bar{A}}_{l})}\left(
t\right) \right\} _{k}:=\left\vert \Lambda _{l}\right\vert ^{-1}\underset{%
x\in \Lambda _{l}}{\sum }\rho _{t}^{(\beta ,\omega ,\vartheta ,\lambda ,\eta
\mathbf{\bar{A}}_{l})}\left( I_{\left( x+e_{k},x\right) }^{(\omega
,\vartheta )}\right) -\left\{ \mathbb{J}_{\mathrm{th}}^{(\omega ,l)}\right\}
_{k}\ .  \label{finite volume current density}
\end{equation}

\item[(d)] The diamagnetic (or ballistic) current density
\begin{equation*}
t\mapsto \mathbb{J}_{\mathrm{d}}^{(\omega ,\eta \mathbf{\bar{A}}_{l})}\left(
t\right) \equiv \mathbb{J}_{\mathrm{d}}^{(\beta ,\omega ,\vartheta ,\lambda
,\eta \mathbf{\bar{A}}_{l})}\left( t\right) \in \mathbb{R}^{d}
\end{equation*}%
is defined analogously, for any $t\geq t_{0}$ and $k\in \{1,\ldots ,d\}$, by%
\begin{equation}
\left\{ \mathbb{J}_{\mathrm{d}}^{(\omega ,\eta \mathbf{\bar{A}}_{l})}\left(
t\right) \right\} _{k}:=\left\vert \Lambda _{l}\right\vert ^{-1}\underset{%
x\in \Lambda _{l}}{\sum }\rho _{t}^{(\beta ,\omega ,\vartheta ,\lambda ,\eta
\mathbf{\bar{A}}_{l})}\left( \mathrm{I}_{(x+e_{k},x)}^{(\omega ,\vartheta
,\eta \mathbf{\bar{A}}_{l})}\right) \ .
\label{finite volume current density2}
\end{equation}
\end{itemize}

\noindent For more details on the physical interpretations of these
currents, we refer to \cite[Section 3.4]{OhmII}. We will prove in \cite%
{OhmVI} that if $\omega \in \Omega $ is the realization of some ergodic
random potential and hopping amplitude then, almost surely, the thermal
current density vanishes in the limit $l\rightarrow \infty $.

Similar to \cite[Section 3.5]{OhmII}, we use the transport coefficients $\Xi
_{\mathrm{p},l}^{(\omega )}$ (\ref{average microscopic AC--conductivity})
and $\Xi _{\mathrm{d},l}^{(\omega )}$ (\ref{average microscopic
AC--conductivity dia}) to define two linear response currents%
\begin{equation*}
J_{\mathrm{p},l}^{(\omega ,\mathcal{A})}\equiv J_{\mathrm{p},l}^{(\beta
,\omega ,\vartheta ,\lambda ,\vec{w},\mathcal{A})}\qquad \text{and}\qquad J_{%
\mathrm{d},l}^{(\omega ,\mathcal{A})}\equiv J_{\mathrm{d},l}^{(\beta ,\omega
,\vartheta ,\lambda ,\vec{w},\mathcal{A})}
\end{equation*}%
with values in $\mathbb{R}^{d}$ respectively by%
\begin{eqnarray*}
J_{\mathrm{p},l}^{(\omega ,\mathcal{A})}(t) &:=&\int_{t_{0}}^{t}\left( \Xi _{\mathrm{p},l}^{(\omega )}\left( t-s\right) \vec{w}\right) \mathcal{E}_{s}\mathrm{d}s\ ,\qquad t\geq t_{0}\ , \\
J_{\mathrm{d},l}^{(\omega ,\mathcal{A})}(t) &:=&\left( \Xi _{\mathrm{d},l}^{(\omega )}\vec{w}\right) \int_{t_{0}}^{t}\mathcal{E}_{s}\mathrm{d}s\
,\qquad t\geq t_{0}\ ,
\end{eqnarray*}%
for any $l,\beta \in \mathbb{R}^{+}$, $\omega \in \Omega $, $\vartheta
,\lambda \in \mathbb{R}_{0}^{+}$, $\vec{w}\in \mathbb{R}^{d}$ and $\mathcal{A%
}\in C_{0}^{\infty }\left( \mathbb{R};\mathbb{R}\right) $. They are the
linear responses of the paramagnetic and diamagnetic current densities,
respectively:

\begin{satz}[Microscopic Ohm's law]
\label{thm Local Ohm's law}\mbox{
}\newline
Assume (\ref{(3.1) NS})--(\ref{(3.2) NS}) and (\ref{(3.3) NS}). For any $%
\vartheta _{0}\in \mathbb{R}_{0}^{+}$, $\mathcal{A}\in C_{0}^{\infty }\left(
\mathbb{R};\mathbb{R}\right) $ and $\eta \in \mathbb{R}$,
\begin{equation*}
\mathbb{J}_{\mathrm{p}}^{(\omega ,\eta \mathbf{\bar{A}}_{l})}\left( t\right)
=\eta J_{\mathrm{p},l}^{(\omega ,\mathcal{A})}(t)+\mathcal{O}\left( \eta
^{2}\right) \qquad \text{and}\qquad \mathbb{J}_{\mathrm{d}}^{(\omega ,\eta
\mathbf{\bar{A}}_{l})}\left( t\right) =\eta J_{\mathrm{d},l}^{(\omega ,%
\mathcal{A})}(t)+\mathcal{O}\left( \eta ^{2}\right) \ ,
\end{equation*}%
uniformly for $l,\beta \in \mathbb{R}^{+}$, $\omega \in \Omega $, $\vartheta
\in \lbrack 0,\vartheta _{0}]$, $\lambda \in \mathbb{R}_{0}^{+}$, $\vec{w}%
\in \mathbb{R}^{d}$ (normalized) and $t\geq t_{0}$.
\end{satz}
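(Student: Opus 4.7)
The plan is to expand the perturbed state via the interaction picture of Theorem~\ref{Theorem Lieb-Robinson}(iii), isolate the $\mathcal{O}(\eta)$ contribution, identify it with the claimed response currents, and control the $\mathcal{O}(\eta^{2})$ remainder uniformly in $l$ via the Lieb--Robinson bounds for multi-commutators from \cite{brupedraLR}. This follows the route of \cite[Section~5]{OhmII}, the multi-commutator bounds of \cite{brupedraLR} replacing the explicit CAR computations available there only for quasi-free fermions.

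\medskip

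\textbf{Step 1 (linearization).} By (\ref{time dependent state}), Theorem~\ref{Theorem Lieb-Robinson}(iii), and the $\tau^{(\omega,\vartheta,\lambda)}$-invariance of $\varrho^{(\beta,\omega,\vartheta,\lambda)}$, for any $B\in\mathcal{U}_{0}$,
\begin{equation*}
\rho_{t}^{(\beta,\omega,\vartheta,\lambda,\eta\mathbf{\bar{A}}_{l})}(B)-\varrho^{(\beta,\omega,\vartheta,\lambda)}(B)=\varrho^{(\beta,\omega,\vartheta,\lambda)}\bigl(\mathfrak{U}_{t,t_{0}}^{\ast}\,B\,\mathfrak{U}_{t,t_{0}}-B\bigr).
\end{equation*}
Since $\mathbf{w}_{x,y}(0,s)=0$ by (\ref{condition1}) and $\eta\mapsto W_{s}^{(\omega,\vartheta,\eta\mathbf{\bar{A}}_{l})}$ is real-analytic with uniformly bounded derivatives by (\ref{condition2})--(\ref{condition3}), one has $W_{s}^{(\omega,\vartheta,\eta\mathbf{\bar{A}}_{l})}=\eta\hat{W}_{s}+\eta^{2}R_{s}^{(\eta)}$ with $\hat{W}_{s}:=\partial_{\eta}W_{s}^{(\omega,\vartheta,\eta\mathbf{\bar{A}}_{l})}|_{\eta=0}$ and $\|R_{s}^{(\eta)}\|_{\mathcal{U}}$ uniformly bounded in $s,l,\eta$ (the latter on bounded sets). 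Truncating the absolutely convergent Dyson series (\ref{eq sup1}) yields
\begin{equation*}
\rho_{t}(B)-\varrho(B)=i\eta\int_{t_{0}}^{t}\varrho\bigl([\tau_{s-t}^{(\omega,\vartheta,\lambda)}(\hat{W}_{s}),B]\bigr)\,\mathrm{d}s+\mathbf{R}(B,\eta,l,t),
\end{equation*}
where $\mathbf{R}$ gathers all Dyson terms containing at least two factors of $W$.

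\medskip

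\textbf{Step 2 (identification).} In the Weyl gauge, $\mathbf{\bar{A}}_{l}(s,\cdot)=\mathcal{A}_{s}\vec{w}$ on the support of the field $\mathcal{E}_{s}\vec{w}$; hence for bulk nearest-neighbor bonds $(y,y+e_{q})$, the line integral in (\ref{eq discrete lapla A}) gives the phase $\eta\mathcal{A}_{s}w_{q}$. A direct manipulation using $I_{(y,y+e_{q})}^{(\omega,\vartheta)}=-I_{(y+e_{q},y)}^{(\omega,\vartheta)}$ then identifies
\begin{equation*}
\hat{W}_{s}=-\mathcal{A}_{s}\sum_{q=1}^{d}w_{q}\sum_{y\,:\,\{y,y+e_{q}\}\subset[-l,l]^{d}}I_{(y+e_{q},y)}^{(\omega,\vartheta)}.
\end{equation*}
Inserting this into the linear term with $B=I_{(x+e_{k},x)}^{(\omega,\vartheta)}$, averaging over $x\in\Lambda_{l}$, and invoking $\tau$-invariance of $\varrho$ together with the definition (\ref{backwards -1bis}) of $\sigma_{\mathrm{p}}^{(\omega)}$ rewrites the integrand as $\mathcal{A}_{s}\partial_{s}(\Xi_{\mathrm{p},l}^{(\omega)}(t-s)\vec{w})_{k}$. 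Integration by parts in $s$, using $\Xi_{\mathrm{p},l}^{(\omega)}(0)=0$ and $\mathcal{A}_{t_{0}}=0$, replaces $\mathcal{A}_{s}$ with $\mathcal{E}_{s}$ and reproduces $\eta J_{\mathrm{p},l}^{(\omega,\mathcal{A})}(t)$. The diamagnetic case is simpler: expanding the exponential in (\ref{current observable new}) gives $\mathrm{I}_{(x+e_{k},x)}^{(\omega,\vartheta,\eta\mathbf{\bar{A}}_{l})}=-\eta\mathcal{A}_{t}w_{k}P_{(x+e_{k},x)}^{(\omega,\vartheta)}+\mathcal{O}(\eta^{2})$; combined with $\rho_{t}=\varrho+\mathcal{O}(\eta)$ (from Step~1), averaging, and using $\mathcal{A}_{t}=-\int_{t_{0}}^{t}\mathcal{E}_{s}\,\mathrm{d}s$ together with (\ref{average microscopic AC--conductivity dia}), one recovers $\eta J_{\mathrm{d},l}^{(\omega,\mathcal{A})}(t)+\mathcal{O}(\eta^{2})$.

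\medskip

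\textbf{Step 3 (uniform remainder).} The main obstacle is to bound $|\Lambda_{l}|^{-1}\sum_{x\in\Lambda_{l}}\mathbf{R}(I_{(x+e_{k},x)}^{(\omega,\vartheta)},\eta,l,t)$ (and its diamagnetic analog), together with the boundary contributions discarded in Step~2, by $C\eta^{2}$ for a constant $C$ independent of $l,\beta,\omega,\vartheta\in[0,\vartheta_{0}],\lambda,\vec{w}$. The $n$-th Dyson term is an $n$-fold time-ordered integral of $\varrho$ evaluated on a nested multi-commutator $[[\cdots[\tau_{s_{1}-t}(\hat{W}_{s_{1}}),\tau_{s_{2}-t}(\hat{W}_{s_{2}})],\ldots,\tau_{s_{n}-t}(\hat{W}_{s_{n}})],I_{(x+e_{k},x)}^{(\omega,\vartheta)}]$, each $\hat{W}_{s_{j}}$ being itself a sum of local nearest-neighbor bond operators. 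After space-averaging, these sums are precisely the objects controlled by \cite[Corollary~3.10]{brupedraLR} and the derived \cite[Theorems~4.8--4.9]{brupedraLR}, which under (\ref{(3.1) NS})--(\ref{(3.2) NS}) and (\ref{(3.3) NS}) produce bounds proportional to $|\Lambda_{l}|$ that exactly absorb the $|\Lambda_{l}|^{-1}$ prefactor. The absolute convergence of (\ref{eq sup1}) provides summability over $n\geq2$, completing the uniform $\mathcal{O}(\eta^{2})$ bound. The remaining boundary contributions, supported near $\partial[-l,l]^{d}$ and controlled by Theorem~\ref{Theorem Lieb-Robinson copy(3)}(iii), are handled in the same way, exactly as in \cite[Section~5]{OhmII}.
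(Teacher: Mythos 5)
Your proposal is correct and follows essentially the same route as the paper, whose proof simply delegates everything to \cite[Theorem 4.8]{brupedraLR} (after checking its hypotheses (\ref{condition1})--(\ref{condition2}) and (\ref{condition3})) and to \cite[Lemmata 5.14--5.15]{OhmII} for the identification of the linear coefficients with $J_{\mathrm{p},l}^{(\omega ,\mathcal{A})}$ and $J_{\mathrm{d},l}^{(\omega ,\mathcal{A})}$: your Steps 1--2 reproduce that identification (including the integration by parts using $\Xi _{\mathrm{p},l}^{(\omega )}(0)=0$ and $\mathcal{A}_{t_{0}}=0$), and your Step 3 is precisely the content of the cited theorem. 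The only slip is notational: the $n$--th order term of $\mathfrak{U}_{t,t_{0}}^{\ast }B\mathfrak{U}_{t,t_{0}}-B$ is the time--ordered nested commutator $\bigl[\tau _{s_{n}-t}(\hat{W}_{s_{n}}),[\cdots ,[\tau _{s_{1}-t}(\hat{W}_{s_{1}}),B]\cdots ]\bigr]$ with $B$ innermost, not $\bigl[[\cdots [\tau _{s_{1}-t}(\hat{W}_{s_{1}}),\tau _{s_{2}-t}(\hat{W}_{s_{2}})],\ldots ],B\bigr]$, though both are multi--commutators of the same order and are controlled by the same tree--decay bounds.
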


\begin{proof}
The assertion directly follows from \cite[Theorem 4.8]{brupedraLR}, because
of Conditions (\ref{condition1})--(\ref{condition2}) and (\ref{condition3}).
We omit the details and refer to \cite[Lemmata 5.14--5.15]{OhmII} where
similar arguments are used. See also discussions of Section \ref{Section
Heat Production}.
\end{proof}

\noindent The fact that the asymptotics obtained are uniform w.r.t. $l,\beta
\in \mathbb{R}^{+}$, $\omega \in \Omega $, $\vartheta \in \lbrack
0,\vartheta _{0}]$, $\lambda \in \mathbb{R}_{0}^{+}$ and $t\geq t_{0}$ is a
crucial technical step to get macroscopic Ohm's law when $l\rightarrow
\infty $.

Now, all the discussions of \cite[Section 3.5]{OhmII} can be reproduced for
the interacting fermion system. We refrain from doing this again. We only
mention that $\Xi _{\mathrm{p},l}^{(\omega )}$ and $\Xi _{\mathrm{d}%
,l}^{(\omega )}$ are the \emph{in--phase} paramagnetic and diamagnetic
(microscopic) conductivity of the fermion system. Indeed, one can deduce
from Theorem \ref{thm Local Ohm's law} \emph{Ohm's law} for linear response
currents as well as \emph{Green--Kubo relations}.

Note that this physical interpretation makes only sense when interparticle
interactions $\Psi ^{\mathrm{IP}}$ are invariant under local gauge
transformations, i.e., when they satisfy (\ref{static potential}). This
condition is obviously satisfied by all density--density interaction like (%
\ref{density density interaction}). Interparticle interactions related to
\textquotedblleft hoppings\textquotedblright\ of particles are not of this
type: For example, in microscopic models for superconductors one usually has
BCS--type interaction formally like
\begin{equation}
\sum_{x,y}\gamma \left( x-y\right) a_{x,\uparrow }^{\ast }a_{x,\downarrow
}^{\ast }a_{y,\downarrow }a_{y,\uparrow }\   \label{BCS term}
\end{equation}%
involving fermions of spins $\uparrow $ or $\downarrow $. However, it still
possible to handle such models in the setting we describe here, by simply
redefining the current observables in such a way that a continuity relation
like (\ref{current observable derivative}) is fulfilled. See for instance
\cite[Eq. (2.11)]{brupedrameissner} for a concrete example of this
procedure. The physical interpretation of such a new current is very natural
in that case of superconductors:\ The currents $I_{\mathbf{x}}^{(\omega
,\vartheta )}$ as defined above in (\ref{current observable}) are related to
the flow of single electrons. The correction to it due to terms like (\ref%
{BCS term}) in the interaction can be understood as being the contribution
of the flow of (Cooper--) paired electrons to the total current. We postpone
such a more general study. Note finally that, as in the case of the kinetic
term, terms like (\ref{BCS term}) should also be coupled to electromagnetic
fields, similar to Equation (\ref{eq discrete lapla A}), since they
participate to the total current.

\subsection{Microscopic Joule's Law\label{Sect Joule}}

The total energy (\ref{lim_en_incr full}) given by the electromagnetic field
to the fermion system is divided in Section \ref{Section Heat Production}
into two components $\mathbf{S}^{(\omega ,\mathbf{A})}$ (\ref{entropic
energy increment}) and $\mathbf{P}^{(\omega ,\mathbf{A})}$ (\ref{electro
free energy}) that have interesting properties in terms of heat production
and potential energy. See for instance Theorem \ref{main 1 copy(1)} and
discussions in \cite[Section 3.2]{OhmI}.

Like in \cite[Section 4.3]{OhmII} the total delivered energy can also be
naturally divided in two other components with other interesting features,
in terms of currents this time. Indeed, for any $\beta \in \mathbb{R}^{+}$, $%
\omega \in \Omega $, $\vartheta ,\lambda \in \mathbb{R}_{0}^{+}$ and $%
\mathbf{A}\in \mathbf{C}_{0}^{\infty }$, we define two further energy
increments:

\begin{itemize}
\item[(p)] The paramagnetic energy increment $\mathfrak{J}_{\mathrm{p}%
}^{(\omega ,\mathbf{A})}\equiv \mathfrak{I}_{\mathrm{p}}^{(\beta ,\omega
,\vartheta ,\lambda ,\mathbf{A})}$ is the map from $\mathbb{R}$ to $\mathbb{R%
}$ defined by%
\begin{multline*}
\mathfrak{I}_{\mathrm{p}}^{(\omega ,\mathbf{A})}\left( t\right)
:=\lim_{L\rightarrow \infty }\left\{ \rho _{t}^{(\beta ,\omega ,\vartheta
,\lambda ,\mathbf{A})}(H_{L}^{(\omega ,\vartheta ,\lambda )}+W_{t}^{(\omega
,\vartheta ,\mathbf{A})})\right. \\
\left. -\varrho ^{(\beta ,\omega ,\vartheta ,\lambda )}(H_{L}^{(\omega
,\vartheta ,\lambda )}+W_{t}^{(\omega ,\vartheta ,\mathbf{A})})\right\} \ .
\end{multline*}

\item[(d)] The diamagnetic energy (increment) $\mathfrak{I}_{\mathrm{d}%
}^{(\omega ,\mathbf{A})}\equiv \mathfrak{I}_{\mathrm{d}}^{(\beta ,\omega
,\vartheta ,\lambda ,\mathbf{A})}$ is the map from $\mathbb{R}$ to $\mathbb{R%
}$ defined by
\begin{equation*}
\mathfrak{I}_{\mathrm{d}}^{(\omega ,\mathbf{A})}\left( t\right) :=\varrho
^{(\beta ,\omega ,\vartheta ,\lambda )}(W_{t}^{(\omega ,\vartheta ,\mathbf{A}%
)})=\varrho ^{(\beta ,\omega ,\vartheta ,\lambda )}(W_{t}^{(\omega
,\vartheta ,\mathbf{A})})-\varrho ^{(\beta ,\omega ,\vartheta ,\lambda
)}(W_{t_{0}}^{(\omega ,\vartheta ,\mathbf{A})})\ .
\end{equation*}
\end{itemize}

\noindent Note that $\mathfrak{I}_{\mathrm{p}}^{(\omega ,\mathbf{A})}$
exists at all times because the total delivered energy equals%
\begin{multline*}
\lim_{L\rightarrow \infty }\left\{ \rho _{t}^{(\beta ,\omega ,\vartheta
,\lambda ,\mathbf{A})}(H_{L}^{(\omega ,\vartheta ,\lambda )}+W_{t}^{(\omega
,\vartheta ,\mathbf{A})})-\varrho ^{(\beta ,\omega ,\vartheta ,\lambda
)}(H_{L}^{(\omega ,\vartheta ,\lambda )})\right\} \\
=\mathfrak{I}_{\mathrm{p}}^{(\omega ,\mathbf{A})}\left( t\right) +\mathfrak{I%
}_{\mathrm{d}}^{(\omega ,\mathbf{A})}\left( t\right)
\end{multline*}%
for any $\beta \in \mathbb{R}^{+}$, $\omega \in \Omega $, $\vartheta
,\lambda \in \mathbb{R}_{0}^{+}$, $\mathbf{A}\in \mathbf{C}_{0}^{\infty }$
and times $t\geq t_{0}$. See (\ref{lim_en_incr full}) and Theorem \ref{main
1 copy(1)} (ii).

The term $\mathfrak{J}_{\mathrm{p}}^{(\omega ,\mathbf{A})}$ is the part of
electromagnetic work implying a change of the internal state of the system.
By contrast, the second one $\mathfrak{I}_{\mathrm{d}}^{(\omega ,\mathbf{A}%
)} $ is the electromagnetic\emph{\ }potential energy of the fermion system
in the thermal equilibrium state. These two energy increments are directly
related to paramagnetic and diamagnetic currents, respectively. For more
details, see discussions in \cite[Sections 4.3--4.4]{OhmII}.

To present this, it is convenient to define the subset
\begin{equation*}
\mathfrak{K}:=\left\{ \mathbf{x}=(x^{(1)},x^{(2)})\in \mathfrak{L}^{2}\ :\
|x^{(1)}-x^{(2)}|=1\right\}
\end{equation*}%
of \emph{oriented} bonds (cf. (\ref{n-o bonds})). Then, by Theorem \ref{thm
Local Ohm's law}, for each $l,\beta \in \mathbb{R}^{+}$, $\omega \in \Omega $%
, $\vartheta ,\lambda \in \mathbb{R}_{0}^{+}$ and any electromagnetic
potential $\mathbf{A}\in \mathbf{C}_{0}^{\infty }$, the electric field in
its integrated form $\mathbf{E}_{t}^{\eta \mathbf{A}_{l}}$ (cf. (\ref{V bar
0})--(\ref{V bar 0bis}) and (\ref{rescaled vector potential})) implies
paramagnetic and diamagnetic currents with linear coefficients being
respectively equal to%
\begin{eqnarray}
J_{\mathrm{p},l}^{(\omega ,\mathbf{A})}(t,\mathbf{x})&:=&\frac{1}{2}\int_{t_{0}}^{t}\underset{\mathbf{y}\in \mathfrak{K}}{\sum }\sigma _{\mathrm{p}}^{(\omega )}\left( \mathbf{x},\mathbf{y,}t-s\right) \mathbf{E}_{s}^{\mathbf{A}_{l}}(\mathbf{y})\mathrm{d}s\ ,  \label{current para} \\
J_{\mathrm{d},l}^{(\omega ,\mathbf{A})}(t,\mathbf{x})&:=&\int_{t_{0}}^{t}\sigma _{\mathrm{d}}^{(\omega )}\left( \mathbf{x}\right)
\mathbf{E}_{s}^{\mathbf{A}_{l}}(\mathbf{x})\mathrm{d}s\ ,
\label{current dia}
\end{eqnarray}%
at any bond $\mathbf{x}\in \mathfrak{K}$ and time $t\geq t_{0}$. Recall that
$\sigma _{\mathrm{p}}^{(\omega )}$ and $\sigma _{\mathrm{d}}^{(\omega )}$
are the microscopic charge transport coefficients defined by (\ref{backwards
-1bis})--(\ref{backwards -1bispara}). Like \cite[Theorem 4.1]{OhmII}, the
following holds:

\begin{satz}[Microscopic Joule's law]
\label{Local Ohm's law thm copy(2)}\mbox{
}\newline
Assume (\ref{(3.1) NS})--(\ref{(3.2) NS}) and (\ref{(3.3) NS}). Let $l,\beta
\in \mathbb{R}^{+}$, $\omega \in \Omega $, $\vartheta _{0},\lambda \in
\mathbb{R}_{0}^{+}$, $\vartheta \in \lbrack 0,\vartheta _{0}]$, $\eta \in
\mathbb{R}$, $\mathbf{A}\in \mathbf{C}_{0}^{\infty }$ and $t\geq t_{0}$.%
\newline
\emph{(p)} Paramagnetic energy increment:%
\begin{equation*}
\mathfrak{I}_{\mathrm{p}}^{(\omega ,\eta \mathbf{A}_{l})}\left( t\right) =%
\frac{\eta ^{2}}{2}\int\nolimits_{t_{0}}^{t}\underset{\mathbf{x}\in
\mathfrak{K}}{\sum }J_{\mathrm{p},l}^{(\omega ,\mathbf{A})}(s,\mathbf{x})%
\mathbf{E}_{s}^{\mathbf{A}_{l}}(\mathbf{x})\mathrm{d}s+\mathcal{O}(\eta
^{3}l^{d})\ .
\end{equation*}%
\emph{(d)} Diamagnetic energy:
\begin{eqnarray*}
\mathfrak{I}_{\mathrm{d}}^{(\omega ,\eta \mathbf{A}_{l})}\left( t\right) &=&-%
\frac{\eta }{2}\underset{\mathbf{x}\in \mathfrak{K}}{\sum }\varrho ^{(\beta
,\omega ,\vartheta ,\lambda )}(I_{\mathbf{x}}^{(\omega ,\vartheta )})\left(
\int\nolimits_{t_{0}}^{t}\mathbf{E}_{s}^{\mathbf{A}_{l}}(\mathbf{x})\mathrm{d%
}s\right) \\
&&+\frac{\eta ^{2}}{4}\underset{\mathbf{x}\in \mathfrak{K}}{\sum }J_{\mathrm{%
d},l}^{(\omega ,\mathbf{A})}(t,\mathbf{x})\int\nolimits_{t_{0}}^{t}\mathbf{E}%
_{s}^{\mathbf{A}_{l}}(\mathbf{x})\mathrm{d}s+\mathcal{O}(\eta ^{3}l^{d})\ .
\end{eqnarray*}%
\emph{(\textbf{Q})} Heat production -- Internal energy increment:%
\begin{eqnarray*}
\mathbf{S}^{(\omega ,\eta \mathbf{A}_{l})}\left( t\right) &=&-\frac{\eta ^{2}%
}{2}\underset{\mathbf{x}\in \mathfrak{K}}{\sum }J_{\mathrm{p},l}^{(\omega ,%
\mathbf{A})}(t,\mathbf{x})\left( \int\nolimits_{t_{0}}^{t}\mathbf{E}_{s}^{%
\mathbf{A}_{l}}(\mathbf{x})\mathrm{d}s\right) \\
&&+\mathfrak{I}_{\mathrm{p}}^{(\omega ,\eta \mathbf{A}_{l})}\left( t\right) +%
\mathcal{O}(\eta ^{3}l^{d})\ .
\end{eqnarray*}%
\emph{(\textbf{P})} Electromagnetic potential energy:
\begin{eqnarray*}
\mathbf{P}^{(\omega ,\eta \mathbf{A}_{l})}\left( t\right) &=&\frac{\eta ^{2}%
}{2}\underset{\mathbf{x}\in \mathfrak{K}}{\sum }J_{\mathrm{p},l}^{(\omega ,%
\mathbf{A})}(t,\mathbf{x})\left( \int\nolimits_{t_{0}}^{t}\mathbf{E}_{s}^{%
\mathbf{A}_{l}}(\mathbf{x})\mathrm{d}s\right) \\
&&+\mathfrak{I}_{\mathrm{d}}^{(\omega ,\eta \mathbf{A}_{l})}\left( t\right) +%
\mathcal{O}(\eta ^{3}l^{d})\ .
\end{eqnarray*}%
The correction terms of order $\mathcal{O}(\eta ^{3}l^{d})$ in assertions
(p), (d), (\textbf{Q}) and (\textbf{P}) are uniformly bounded in $\beta \in
\mathbb{R}^{+}$, $\omega \in \Omega $, $\vartheta \in \lbrack 0,\vartheta
_{0}]$, $\lambda \in \mathbb{R}_{0}^{+}$ and $t\geq t_{0}$.
\end{satz}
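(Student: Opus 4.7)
The plan is to reduce the four assertions to (p) and (d), and to obtain (P) and (Q) from these via the first law of thermodynamics $\mathbf{S}+\mathbf{P}=\mathfrak{I}_{\mathrm{p}}+\mathfrak{I}_{\mathrm{d}}$. This identity follows directly from the definitions together with $W_{t_{0}}^{(\omega ,\vartheta ,\eta \mathbf{A}_{l})}=0$ (because $\mathbf{A}(t_{0},\cdot )=0$), and it implies moreover $\mathbf{P}(t)-\mathfrak{I}_{\mathrm{d}}(t)=(\rho _{t}^{(\beta ,\omega ,\vartheta ,\lambda ,\eta \mathbf{A}_{l})}-\varrho ^{(\beta ,\omega ,\vartheta ,\lambda )})(W_{t}^{(\omega ,\vartheta ,\eta \mathbf{A}_{l})})$. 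Thus, once (p), (d), and a second-order expansion of this last difference are established, (P) follows directly and (Q) follows from $\mathbf{S}=\mathfrak{I}_{\mathrm{p}}+\mathfrak{I}_{\mathrm{d}}-\mathbf{P}$. Theorem \ref{main 1 copy(1)}\,(i)--(ii) guarantees that all limits $L\to\infty$ involved make sense and that $\mathbf{S}=\mathbf{Q}$.

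Assertion (d) is purely algebraic. Expanding $\mathrm{e}^{i\eta \Phi }-1=i\eta \Phi -\tfrac{1}{2}\eta ^{2}\Phi ^{2}+\mathcal{O}(\eta ^{3})$ (with $\Phi $ the bond line integral appearing in (\ref{eq discrete lapla A})) inside $W_{t}^{(\omega ,\vartheta ,\eta \mathbf{A}_{l})}$ and applying $\varrho ^{(\beta ,\omega ,\vartheta ,\lambda )}$ yields, after using $\Phi =-\int_{t_{0}}^{t}\mathbf{E}_{s}^{\mathbf{A}_{l}}(\mathbf{x})\,\mathrm{d}s$ together with the definitions (\ref{current observable}) of $I_{\mathbf{x}}^{(\omega ,\vartheta )}$, (\ref{R x}) of $P_{\mathbf{x}}^{(\omega ,\vartheta )}$, (\ref{backwards -1bispara}) of $\sigma _{\mathrm{d}}^{(\omega )}$, and (\ref{current dia}) of $J_{\mathrm{d},l}^{(\omega ,\mathbf{A})}$, the announced first- and second-order coefficients. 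The remainder is $\mathcal{O}(\eta ^{3}l^{d})$ since $W_{t}^{(\omega ,\vartheta ,\eta \mathbf{A}_{l})}$ is a sum over the $\mathcal{O}(l^{d})$ nearest-neighbor bonds meeting $\Lambda _{l}$, each contributing an $\mathcal{O}(\eta ^{3})$ coefficient.

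For assertion (p), I would start from the interaction-picture identity $\rho _{t}(B)=\varrho (\mathfrak{U}_{t,t_{0}}^{\ast }B\mathfrak{U}_{t,t_{0}})$ obtained from Theorem \ref{Theorem Lieb-Robinson}\,(iii) together with $\tau^{(\omega ,\vartheta ,\lambda )}$-stationarity of $\varrho ^{(\beta ,\omega ,\vartheta ,\lambda )}$, and from the Dyson expansion (\ref{eq sup1}). Inserting this into the definition of $\mathfrak{I}_{\mathrm{p}}$ and using Theorem \ref{main 1 copy(1)}\,(i) to replace finite-volume commutators with $H_{L}^{(\omega ,\vartheta ,\lambda )}$ by the derivation $\delta ^{(\omega ,\vartheta ,\lambda )}$ acting on $\tau$-evolved observables, then integrating by parts in time (turning $\mathbf{A}_{l}$ into $\mathbf{E}^{\mathbf{A}_{l}}$), and applying definitions (\ref{backwards -1bis}) of $\sigma _{\mathrm{p}}^{(\omega )}$ and (\ref{current para}) of $J_{\mathrm{p},l}^{(\omega ,\mathbf{A})}$, one identifies the second-order contribution with $\tfrac{\eta ^{2}}{2}\int_{t_{0}}^{t}\sum_{\mathbf{x}}J_{\mathrm{p},l}^{(\omega ,\mathbf{A})}(s,\mathbf{x})\mathbf{E}_{s}^{\mathbf{A}_{l}}(\mathbf{x})\,\mathrm{d}s$. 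The analogous expansion of $(\rho _{t}-\varrho )(W_{t}^{(\omega ,\vartheta ,\eta \mathbf{A}_{l})})$ needed for (P) is extracted from the same Dyson series. The computations are formally identical to those in \cite[Lemmata 5.14--5.15]{OhmII} for non-interacting fermions.

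The main obstacle is the uniformity in $l$ of the remainder $\mathcal{O}(\eta ^{3}l^{d})$. A naive bound of the third-order Dyson term using only $\Vert \mathfrak{U}_{t,t_{0}}\Vert _{\mathcal{U}}\leq 1$ and $\Vert W_{s}^{(\omega ,\vartheta ,\eta \mathbf{A}_{l})}\Vert _{\mathcal{U}}=\mathcal{O}(\eta l^{d})$ would only yield $\mathcal{O}(\eta ^{3}l^{3d})$, which is useless. The resolution is to write the third-order correction as a time-ordered integral of a triple commutator of $W_{s}^{(\omega ,\vartheta ,\eta \mathbf{A}_{l})}$'s applied to a $\tau$-evolved current observable, and to invoke \cite[Theorem 4.8]{brupedraLR} --- whose proof rests on the tree-decay multi-commutator bounds of \cite[Corollary 3.10]{brupedraLR} under the polynomial decay condition (\ref{(3.3) NS}). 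This delivers a uniform bound of order $\eta ^{3}l^{d}$ on compact time intervals, with constants depending only on $\mathbf{F}$, $\vartheta _{0}$, $\Vert \Psi ^{\mathrm{IP}}\Vert _{\mathcal{W}}$, $\mathbf{A}$ and $\mathrm{T}$, and it permits Lebesgue dominated convergence in the $L\to \infty $ limits defining $\mathbf{S}$ and $\mathfrak{I}_{\mathrm{p}}$.
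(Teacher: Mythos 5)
Your proposal is correct and follows essentially the same route as the paper, which simply invokes Theorem \ref{main 1 copy(1)}, the uniform asymptotic expansions of \cite[Theorem 4.8]{brupedraLR} under Conditions (\ref{condition1})--(\ref{condition2}), (\ref{condition3}) and the polynomial decay (\ref{(3.3) NS}), and refers to the proof of \cite[Theorem 4.1]{OhmII} for the algebra you spell out (reduction of (\textbf{P}), (\textbf{Q}) to (p), (d) via the energy balance, Taylor expansion of $W_{t}^{(\omega ,\vartheta ,\eta \mathbf{A}_{l})}$, Dyson expansion in the interaction picture); in particular you correctly identify that the whole point of the multi--commutator/tree--decay machinery is to upgrade the naive $\mathcal{O}(\eta ^{3}l^{3d})$ remainder to $\mathcal{O}(\eta ^{3}l^{d})$. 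Your restriction to compact time intervals is harmless since $\mathbf{A}$ is compactly supported in time and all four quantities are constant once the field is switched off.
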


\begin{proof}
Up to trivial modification taking into account the support of $\mathbf{A}$,
the assertions are basically direct consequences of Theorem \ref{main 1
copy(1)} and \cite[Theorem 4.8]{brupedraLR}, because of Conditions (\ref%
{condition1})--(\ref{condition2}) and (\ref{condition3}). For more details,
we also refer to the proof of \cite[Theorem 4.1]{OhmII}.
\end{proof}

\noindent The uniformity of the above estimates w.r.t. the size $l\in
\mathbb{R}^{+}$ of the region where the external electromagnetic field is
applied is a pivotal technical step to get in a companion paper \cite{OhmVI}
Joule's law when $l\rightarrow \infty $, i.e., a macroscopic\ version of the
above result. \bigskip

\noindent \textit{Acknowledgments:} This research is supported by the agency
FAPESP under Grant 2013/13215-5 as well as by the Basque Government through
the grant IT641-13 and the BERC 2014-2017 program and by the Spanish
Ministry of Economy and Competitiveness MINECO: BCAM Severo Ochoa
accreditation SEV-2013-0323, MTM2014-53850.

\end{document}